\numberwithin{equation}{section}
\theoremstyle{plain}
\newtheorem{theorem}{Theorem}[section]
\newtheorem{proposition}[theorem]{Proposition}
\newtheorem{lemma}[theorem]{Lemma}
\theoremstyle{definition}
\newtheorem{definition}[theorem]{Definition}
\newtheorem{remark}[theorem]{Remark}
\newtheorem{example}[theorem]{Example}
\newcommand\restr[2]{{
  \left.\kern-\nulldelimiterspace #1 \right|_{#2} 
}}
\newcommand*{\transp}[2][-3mu]{\ensuremath{\mskip1mu\prescript{\smash{\mathrm t\mkern#1}}{}{\mathstrut#2}}}%
\newcommand*{\longhookrightarrow}{\ensuremath{\lhook\joinrel\relbar\joinrel\rightarrow}}
\newcommand{\R}{\mathbb{R}}
\newcommand{ \Rk}{\mathbb{R}^k}
\renewcommand{\d}{\mathrm{d}}
\newcommand{\df}{\Omega}
\newcommand{\Cinfty}{\mathscr{C}^\infty}
\newcommand{\T}{\mathrm{T}}
\newcommand{\Tan}{\mathrm{T}}
\newcommand{\cT}{\mathrm{T}^\ast}
\newcommand*{\inn}[1]{\iota_{#1}}
\newcommand*{\innp}[1]{\iota\left(#1\right)}
\newcommand{\Lie}{\mathscr{L}}
\renewcommand{\L}{\mathcal{L}}
\newcommand{\vf}{\mathfrak{X}}
\newcommand{\X}{\mathfrak{X}}
\newcommand{\Xh}{\mathfrak{X}_{\rm Ham}}
\newcommand{\F}{\mathcal{F}}
\renewcommand{\H}{\mathcal{H}}
\newcommand{\Reeb}{R}
\newcommand{\C}{\mathcal{C}}
\newcommand{\D}{\mathcal{D}}
\newcommand*{\bd}{\overline{\mathrm{d}}}
\newcommand{\rmC}{\mathrm{C}}
\newcommand{\rmR}{\mathrm{R}}
\newcommand{\rmS}{\mathrm{S}}
\newcommand{\bfX}{\mathbf{X}}
\newcommand{\parder}[2]{\frac{\partial #1}{\partial #2}}
\newcommand{\dparder}[2]{\dfrac{\partial #1}{\partial #2}}
\newcommand{\tparder}[2]{\partial #1/\partial #2}
\newcommand{\parderr}[3]{\frac{\partial^2 #1}{\partial #2\partial #3}}
\newcommand{\dparderr}[3]{\dfrac{\partial^2 #1}{\partial #2\partial #3}}
\DeclareMathOperator{\rk}{rank}
\DeclareMathAlphabet{\mathpzc}{OT1}{pzc}{m}{it}
\def\d{\mathrm{d}}
\let\ds\displaystyle
\begin{document}

%%%%%%%%%%%%%%%%%%%%

\vspace{3em}

{\huge\sffamily\raggedright A survey on geometric frameworks \\[.5ex] for action-dependent classical field theories \\[1ex] and their relationship}
\vspace{2em}

{\large\raggedright
    \today
}

\vspace{3em}

{\Large\raggedright\sffamily
    Jordi Gaset-Rifà
}\vspace{1mm}\newline
{\raggedright
    Department of Mathematics, CUNEF Universidad\\
    Calle Almansa 101,
28040 Madrid, Spain\\
    e-mail: \href{mailto:jordi.gaset@cunef.edu}{jordi.gaset@cunef.edu} --- {\sc orcid}: \href{https://orcid.org/0000-0001-8796-3149}{0000-0001-8796-3149}
}

\medskip

{\Large\raggedright\sffamily
    Xavier Rivas
}\vspace{1mm}\newline
{\raggedright
    Department of Computer Engineering and Mathematics, Universitat Rovira i Virgili\\
    Avinguda Països Catalans 26, 43007 Tarragona, Spain\\
    e-mail: \href{mailto:xavier.rivas@urv.cat}{xavier.rivas@urv.cat} --- {\sc orcid}: \href{https://orcid.org/0000-0002-4175-5157}{0000-0002-4175-5157}
}

\medskip

{\Large\raggedright\sffamily
    Narciso Román-Roy
}\vspace{1mm}\newline
{\raggedright
    Department of Mathematics, Universitat Politècnica de Catalunya\\
    C. de Jordi Girona 31, 08034 Barcelona, Spain\\
    e-mail: \href{mailto:narciso.roman@upc.edu}{narciso.roman@upc.edu } --- {\sc orcid}: \href{https://orcid.org/0000-0003-3663-9861}{0000-0003-3663-9861}
}

\vspace{3em}

{\large\bf\raggedright
    Abstract
}\vspace{1mm}\newline
{\raggedright
This work presents a comprehensive overview of three recently developed geometric frameworks for the study of classical action-dependent field theories. Specifically, the three underlying geometric structures 
— namely, $k$-contact, $k$-cocontact, and multicontact — are first introduced, and then used to develop the Lagrangian and Hamiltonian formalisms of the aforementioned theories. Finally, the relationship among these three types of structures is analyzed in the case of trivial bundles;
as well as the comparison with other alternative definitions of multicontact structure presented in the literature.
}

\vspace{3em}
%\bigskip

{\large\bf\raggedright
    Keywords:}
Classical field theories. Action-dependent theories. Lagrangian and Hamiltonian formalisms. Multicontact, $k$-contact, and $k$-cocontact structures

\medskip

{\large\bf\raggedright
MSC2020 codes}:
{\sl Primary:} 70S05, 70S10, 53D10, 35R01, 
{\sl Secondary:} 35Q99, 53C15, 53Z05, 58A10, 70G45.

\medskip
% ### Primary:
% - **53C15**: **General Geometric Structures on Manifolds**  
%   Estudio de estructuras geométricas en variedades, como G-estructuras y estructuras integrables.

% - **53C12**: **Foliations (Differential Geometry)**  
%   Análisis de foliaciones, que son descomposiciones en subvariedades dentro de la geometría diferencial.

% - **58A30**: **Vector Distributions (Subbundles of the Tangent Bundle)**  
%   Estudio de distribuciones vectoriales, subfibrados del fibrado tangente en el contexto de análisis global.

% ### Secondary:
% - **53C10**: **G-Structures**  
%   Análisis de G-estructuras relacionadas con simetrías geométricas y grupos de transformaciones.

% - **53D10**: **Contact Manifolds, General**  
%   Estudio de variedades de contacto, que son análogas a las variedades simplécticas, pero en dimensión impar.

% - **58J60$$: **Relations of PDEs with special manifold structures**

%%%%%%%%%%%%%%%%%%%%
\noindent {\bf Authors' contributions:} All authors contributed to the study conception and design. The manuscript was written and revised by all authors. All authors read and approved the final version.
\medskip

\noindent {\bf Competing Interests:} The authors have no competing interests to declare. 

\newpage

{\setcounter{tocdepth}{2}
\def\baselinestretch{1}
\small
% hack per a eliminar l'espai vertical 1em
\def\addvspace#1{\vskip 1pt}
\parskip 0pt plus 0.1mm
\tableofcontents
}

\pagestyle{fancy}

% Custom header and footer content
\fancyhead[L]{Geometry of action-dependent classical field theories}    % Left-aligned header
\fancyhead[C]{}                  % Center header (empty)
\fancyhead[R]{J. Gaset, X. Rivas, and N. Román-Roy}       % Right-aligned header

\fancyfoot[L]{}     % Left-aligned footer
\fancyfoot[C]{\thepage}                  % Center footer (empty)
\fancyfoot[R]{}            % Right-aligned footer with date

\setlength{\headheight}{17pt}

% Add horizontal line under header
\renewcommand{\headrulewidth}{0.1pt}  % Thickness of the header line (0.4pt)
\renewcommand{\footrulewidth}{0pt}    % No footer line

% Redefine the horizontal line with space above it
\renewcommand{\headrule}{%
    \vspace{3pt}                % Space between header text and the line
    \hrule width\headwidth height 0.4pt % The horizontal line (0.4pt thick)
    \vspace{0pt}                % Additional space below the line (if needed)
}

\setlength{\headsep}{30pt}  % Space between the header and main content

%%%%%%%%%%%%%%%%%%%%%%%%%%%%%%%%%%%%%%%%%%%%%%%%%%%%%%%%
\section{Introduction}
%%%%%%%%%%%%%%%%%%%%%%%%%%%%%%%%%%%%%%%%%%%%%%%%%%%%%%%%

In the second half of the 20th century, {\sl symplectic geometry} proved to be a highly effective geometric framework for formalizing analytical mechanics. This success is well documented in numerous classical treatises on the subject (see, for instance, \cite{AM_78,Arn_89,LR_89,Holm_11,LM_87,MR_25}, and the references therein).

More recently, particularly during the early 21st century, there has been growing interest in the use of {\sl contact geometry} \cite{BH_16,Gei_08,Kho_13}
to describe a specific class of mechanical systems: those exhibiting dissipation or, equivalently, non-conservative behavior (see \cite{Bra_17,CG_19} for a motivated introduction). 
Beyond this, contact geometry has found broader applications in modeling various physical theories, such as thermodynamics, quantum mechanics, electric circuits, and control theory \cite{Bra_18,CCM_18,Kho_13}.
This renewed interest has led to a significant body of literature. 
For example, contact Hamiltonian systems have been studied in \cite{BCT_17,BLMP_20,LL_19,GG_22}, 
while their Lagrangian counterparts are addressed in \cite{CCM_18,LL_19a,GGMRR_20a}. 
Other relevant developments include non-autonomous systems \cite{LGGMR_23,GLR_23,RT_23}, quantization \cite{CCM_18}, and variational formulations dating back to the original work of G. Herglotz \cite{Her_30,Her_85}. 
(This list of references is by no means exhaustive.)

Just as mechanical systems can exhibit non-conservative dynamics, so too can classical field theories. 
These are generally known in physics as {\sl action-dependent theories}, which are extensions of standard models in which the corresponding Lagrangian and Hamiltonian functions incorporate additional variables related to the action. 
This results in extra terms in the dynamical or field equations, which can be interpreted as encoding dissipative effects
(although the applications of these theories extend well beyond dissipation).

In the conservative case, several geometric frameworks, generalizing symplectic geometry, have been developed to describe classical field theories. 
Among them are the so-called {\sl $k$-(co)symplectic} and {\sl polysymplectic} formalisms, introduced in \cite{Awa_92,Awa_94,AG_00} and later expanded and applied to describe Lagrangian and Hamiltonian field theories (see \cite{LSV_15,GMS_97,Kan_98} and references therein). 
However, the most general framework is {\sl multisymplectic geometry} \cite{Kij_73,KT_79}, for which there exists extensive literature. 
We refer to \cite{GIMM_98,Rom_09,RW_19} as general sources for its application to field theories and,
in particular, \cite{AA_80,Gar_74,Sau_89} for the Lagrangian formalism, 
\cite{CCI_91,FPR_05,GS_73,HK_02} for the Hamiltonian setting, and \cite{LMM_96a} for the singular case.

In the context of action-dependent field theories, recent efforts have been made to construct geometric frameworks analogous to the $k$-symplectic, $k$-cosymplectic, and multisymplectic formalisms. 
In particular, in \cite{LRS_24,GGMRR_20,GGMRR_21,GRR_22,Riv_23} the so-called {\sl $k$-contact} and {\sl $k$-cocontact} structures are introduced as natural extensions of contact geometry, built upon the $k$-symplectic and $k$-cosymplectic foundations.
Additionally, the fusion of contact and multisymplectic frameworks has recently led to the definition of the {\sl multicontact} structure, proposed in \cite{LGMRR_23,LGMRR_25,LIR_25,RRZ_25}. 
Other less general approaches to similar geometric frameworks appear in \cite{Bo_96,Fi_24,Mo_08,TV_08}; in particular, we highlight the different version of {\sl multicontact manifolds} presented in \cite{Vit_15}.
As in the case of contact mechanics, the field equations for action-dependent field theories can be derived from a variational principle \cite{GGB_03,LPAF_17}, and for a precise and
general variational formulation in this multicontact framework, see \cite{LGMRR_23,GLMR_24,GM_23a}.

The aim of this paper is twofold. First, we review the geometric formulations previously introduced for first-order action-dependent field theories, namely the $k$-contact, $k$-cocontact, and multicontact frameworks, as developed in \cite{LGMRR_23,LGMRR_25,LIR_25,GGMRR_20,GGMRR_21,Riv_23}. 
In this review, we restrict our attention to the case of regular theories, that is, those defined by regular Lagrangian functions. 
For a detailed analysis of singular cases, we refer the reader to the aforementioned references.
Second, we establish a correspondence between these geometric structures in the particular setting where the phase bundles associated with the field theories are assumed to be trivial.

The paper is organized as follows: Sections \ref{kcontactf}, \ref{kcocontactf}, and \ref{multicontactf} provide a review of the $k$-contact, $k$-cocontact, and multicontact formulations, respectively. 
In each case, we first introduce the underlying geometric structure and then develop the corresponding Lagrangian and Hamiltonian formalisms for action-dependent field theories.
Sections \ref{relation} and \ref{6} contain the main original contributions of this work. In particular, Section \ref{relation} is devoted to studying the relationship between the geometric frameworks mentioned above under the assumption of trivial phase bundles and, in Section \ref{6}, our notion of a multicontact structure is compared to the one given in
\cite{Vit_15}.

All manifolds are real, second-countable, and of class $\Cinfty$, and the mappings are assumed to be smooth.
Sum over crossed repeated indices is understood.

The following notation will be used throughout, adhering to standard conventions:
\vspace{-10pt}

\begin{itemize}\itemsep1pt
    \item $\Cinfty({\cal M})$: Smooth functions in a manifold ${\cal M}$.

    \item $\df^k({\cal M})$: Module of differential forms of degree $k$ in a manifold ${\cal M}$.

    \item $\vf({\cal M})$: Module of vector fields in a manifold ${\cal M}$.

    \item $\vf^k({\cal M})$: Module of $k$-multivector fields in a manifold ${\cal M}$.
    
    \item $\innp{X}\Omega$ or $\inn{X}\Omega$: Inner contraction of a vector field $X\in\vf({\cal M})$ and a $k$-form $\Omega\in\df^k({\cal M})$.

    \item $\Lie(X)$ or $\Lie_X$: Lie derivative by a vector field $X\in\vf({\cal M})$.

    \item $\d$: Exterior differential of differential forms.

\end{itemize}

%%%%%%%%%%%%%%%%%%%%%%%%%%%%%%%%%%%%%%%%%%%%%%%%%%
\section{\texorpdfstring{$k$}{}-Contact field theories}
\label{kcontactf}
%%%%%%%%%%%%%%%%%%%%%%%%%%%%%%%%%%%%%%%%%%%%%%%%%%

In this section, we present the most simple geometric description of action-dependent field theories, using a new framework that is an evolution of the $k$-symplectic formulation of classical field theories and contact mechanics. These kinds of formulation are specific for field theories that have the peculiarity that the Lagrangian or Hamiltonian functions describing them are independent of the space-time coordinates (or those analogous to these).

%%%%%%%%%%%%%%%%%%%%%%%%%%%%%%%%%%%%%%%%%%%%%%%%%%
\subsection{\texorpdfstring{$k$}{}-Contact structures and \texorpdfstring{$k$}{}-contact Hamiltonian systems}
%%%%%%%%%%%%%%%%%%%%%%%%%%%%%%%%%%%%%%%%%%%%%%%%%%

The $k$-contact (and $k$-precontact) manifolds $({\cal M},\eta^\alpha)$
and Hamiltonian systems defined on them were introduced in \cite{GGMRR_20}, where you can find more details on their definitions and properties.

\subsubsection{\texorpdfstring{$k$}{}-Contact structures}

Given a $N$-dimensional manifold $\mathcal{M}$, recall that,
for every non-vanishing, differential 1-form $\eta \in \df^1(\mathcal{M})$, its annihilator is a distribution of corank $1$, denoted $\langle \eta \rangle^\circ \subset \Tan \mathcal{M}$, which can be described as the kernel of the vector bundle morphism $\widehat \eta \colon \Tan \mathcal{M} \to \mathcal{M} \times \R$ defined by~$\eta$. Furthermore, $\eta$ generates a regular codistribution of rank $1$, denoted by
$\langle \eta \rangle \subset \Tan^*\mathcal{M}$.

The following definition generalizes the concept of {\sl contact structure} (which is recovered as a particular case, when $k=1$):

\begin{definition}\label{def:kcontact}
Given $k$ differential 1-forms
$\eta^1, \ldots, \eta^k \in \df^1(\mathcal{M})$,
consider the following associated distributions and codistributions:
\begin{align}
\mathcal{C}^{\mathrm{C}} &=
\langle \eta^1, \ldots, \eta^k \rangle \subset
\Tan^*\mathcal{M} \,,\\
\mathcal{D}^{\mathrm{C}} &=
\left( \mathcal{C}^{\mathrm{C}} \right)^\circ =
\ker \widehat{\eta^1} \cap \dotsb \cap \ker \widehat{\eta^k} \subset
\Tan \mathcal{M} \,, \\
\mathcal{D}^{\mathrm{R}} &=
\ker \widehat{\d \eta^1} \cap \dotsb \cap \ker \widehat{\d \eta^k} \subset
\Tan \mathcal{M} \,, \\
\mathcal{C}^{\mathrm{R}} &= 
\left( \mathcal{D}^{\mathrm{R}} \right)^\circ\subset
\Tan^*\mathcal{M} \,.
\end{align}
\label{kconman}
The family $\{\eta^\alpha\}$ is said to be a
{\sl\textbf{$k$-contact structure}} on $\mathcal{M}$ if:
\begin{enumerate}[(i)]
\item 
$\mathcal{D}^{\mathrm{C}} \subset \Tan \mathcal{M}$
is a regular distribution of corank~$k$;
or, what is equivalent,
$\eta^1 \wedge \dotsb \wedge \eta^k \neq 0$, at every point.
\item 
$\mathcal{D}^{\mathrm{R}} \subset \Tan \mathcal{M}$
is a regular distribution of rank~$k$.
\item
$\mathcal{D}^{\mathrm{C}} \cap \mathcal{D}^{\mathrm{R}} = \{0\}$  or, what is equivalent,
$\displaystyle
\bigcap_{\alpha=1}^{k} \left(
\ker \widehat{\eta^\alpha} \cap \ker \widehat{\d \eta^\alpha}\right) =\{0\}$.
\end{enumerate}
A {\sl\textbf{$k$-contact manifold}} is a manifold $\mathcal{M}$ endowed with a $k$-contact structure
and is denoted $(\mathcal{M},\eta^\alpha)$, $1\leq\alpha\leq k$.
We call
$\mathcal{C}^{\mathrm{C}}$
the {\sl\textbf{contact codistribution}},
$\mathcal{D}^{\mathrm{C}}$ 
the {\sl\textbf{contact distribution}},
$\mathcal{D}^{\mathrm{R}}$ 
the {\sl\textbf{Reeb distribution}},
and
$\mathcal{C}^{\mathrm{R}}$
the {\sl\textbf{Reeb codistribution}}
of the $k$-contact structure.
\end{definition}

\begin{remark}
If conditions (i) and (ii) hold, then (iii) is equivalent to 
$$
{\it (iii\,')}\ 
\Tan \mathcal{M} = \mathcal{D}^{\mathrm{C}} \oplus \mathcal{D}^{\mathrm{R}}  .
$$
\end{remark}

\begin{theorem}
\label{reebvf}
Let  $(\mathcal{M},\eta^\alpha)$ be a $k$-contact manifold.
\begin{enumerate}[{\rm(1)}]
\item
The Reeb distribution $\mathcal{D}^{\mathrm{R}}$ is involutive and therefore integrable.
\item
There exist $k$ vector fields 
$\Reeb_\alpha \in \vf(\mathcal{M})$,
called  {\sl\textbf{Reeb vector fields}},
which are uniquely defined by the relations
\begin{equation}
\label{reebcontact}
\inn{\Reeb_\beta} \eta^\alpha = \delta^\alpha_{\,\beta}
\,,\qquad 
\inn{\Reeb_\beta} \d\eta^\alpha = 0
\,.
\end{equation}
\item
The Reeb vector fields commute,
$\displaystyle [\Reeb_\alpha,\Reeb_\beta] = 0$,
and they generate the  Reeb distribution 
$\mathcal{D}^{\mathrm{R}}$.
\end{enumerate}
\end{theorem}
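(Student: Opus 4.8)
The plan is to prove part (2) first, since once the Reeb vector fields are in hand, parts (1) and (3) follow with little effort. The key object is the vector bundle morphism $\Phi\colon\mathcal{D}^{\mathrm{R}}\to\mathcal{M}\times\R^k$ over $\mathcal{M}$ given fiberwise by $v\mapsto(\iota_v\eta^1,\dots,\iota_v\eta^k)$; it is smooth, being the restriction to the smooth subbundle $\mathcal{D}^{\mathrm{R}}$ of the morphism $\Tan\mathcal{M}\to\mathcal{M}\times\R^k$ determined by the $\eta^\alpha$. Both bundles have rank $k$ — $\mathcal{D}^{\mathrm{R}}$ by condition (ii), and $\mathcal{M}\times\R^k$ trivially — so it suffices to check that $\Phi$ is fiberwise injective. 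If $v\in\mathcal{D}^{\mathrm{R}}_p$ satisfies $\iota_v\eta^\alpha=0$ for all $\alpha$, then $v\in(\mathcal{C}^{\mathrm{C}})^\circ_p=\mathcal{D}^{\mathrm{C}}_p$ by definition of $\mathcal{D}^{\mathrm{C}}$; by condition (iii), $\mathcal{D}^{\mathrm{C}}_p\cap\mathcal{D}^{\mathrm{R}}_p=\{0\}$, so $v=0$. Hence $\Phi$ is an isomorphism of vector bundles, and I define $\Reeb_\beta$ to be the image under $\Phi^{-1}$ of the constant section $p\mapsto(\delta^1_\beta,\dots,\delta^k_\beta)$; smoothness of $\Phi^{-1}$ yields $\Reeb_\beta\in\vf(\mathcal{M})$. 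By construction $\Reeb_\beta$ is valued in $\mathcal{D}^{\mathrm{R}}=\bigcap_\alpha\ker\widehat{\d\eta^\alpha}$, so $\iota_{\Reeb_\beta}\d\eta^\alpha=0$, while $\iota_{\Reeb_\beta}\eta^\alpha=\delta^\alpha_\beta$ holds by the choice of section; this is exactly \eqref{reebcontact}. Uniqueness is immediate: the difference of two solutions of \eqref{reebcontact} is a section of $\mathcal{D}^{\mathrm{C}}\cap\mathcal{D}^{\mathrm{R}}=\{0\}$. Finally, $\Phi$ carries $\{\Reeb_\beta\}$ to the standard frame of $\mathcal{M}\times\R^k$, so the $\Reeb_\beta$ are pointwise linearly independent and hence span the rank-$k$ distribution $\mathcal{D}^{\mathrm{R}}$, giving the generating assertion in part (3).

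Next I would show the $\Reeb_\alpha$ commute, the remaining claim of part (3). First, Cartan's formula gives $\Lie_{\Reeb_\alpha}\eta^\gamma=\d(\iota_{\Reeb_\alpha}\eta^\gamma)+\iota_{\Reeb_\alpha}\d\eta^\gamma=\d\delta^\gamma_\alpha+0=0$, and therefore $\Lie_{\Reeb_\alpha}\d\eta^\gamma=\d\Lie_{\Reeb_\alpha}\eta^\gamma=0$. Then, using $\iota_{[X,Y]}\theta=\Lie_X\iota_Y\theta-\iota_Y\Lie_X\theta$ with $\theta=\eta^\gamma$ and with $\theta=\d\eta^\gamma$, together with the facts that $\iota_{\Reeb_\beta}\eta^\gamma=\delta^\gamma_\beta$ is constant and $\iota_{\Reeb_\beta}\d\eta^\gamma=0$, one gets $\iota_{[\Reeb_\alpha,\Reeb_\beta]}\eta^\gamma=0$ and $\iota_{[\Reeb_\alpha,\Reeb_\beta]}\d\eta^\gamma=0$ for all $\gamma$. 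Hence $[\Reeb_\alpha,\Reeb_\beta]\in\mathcal{D}^{\mathrm{C}}\cap\mathcal{D}^{\mathrm{R}}=\{0\}$ by (iii), so $[\Reeb_\alpha,\Reeb_\beta]=0$.

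Part (1) then follows formally: any local sections of $\mathcal{D}^{\mathrm{R}}$ are $\Cinfty$-combinations $X=f^\alpha\Reeb_\alpha$, $Y=g^\beta\Reeb_\beta$, and since $[\Reeb_\alpha,\Reeb_\beta]=0$ the bracket $[X,Y]=f^\alpha(\Reeb_\alpha g^\beta)\Reeb_\beta-g^\beta(\Reeb_\beta f^\alpha)\Reeb_\alpha$ is again a section of $\mathcal{D}^{\mathrm{R}}$, so the distribution is involutive, and integrable by Frobenius since it is regular. Alternatively, involutivity can be proved directly and independently of part (2): apply the invariant formula for $\d(\d\eta^\alpha)=0$ to three vector fields $X,Y\in\mathcal{D}^{\mathrm{R}}$ and arbitrary $Z$; every term except $\d\eta^\alpha([X,Y],Z)$ vanishes because $\iota_X\d\eta^\alpha=\iota_Y\d\eta^\alpha=0$, forcing $\iota_{[X,Y]}\d\eta^\alpha=0$ for all $\alpha$, i.e.\ $[X,Y]\in\mathcal{D}^{\mathrm{R}}$.

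The only point requiring care, and the one I would write out most carefully, is that conditions (ii) and (iii) are precisely what makes $\Phi$ a genuine smooth vector bundle isomorphism — a regular rank-$k$ Reeb distribution on which the contraction map $v\mapsto(\iota_v\eta^\alpha)_\alpha$ is fiberwise invertible — rather than merely a pointwise bijection; everything after that is a routine application of the Cartan calculus.
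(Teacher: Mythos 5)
The paper itself states this theorem without proof (it is quoted from \cite{GGMRR_20}), so there is no in-paper argument to compare against; judged on its own, your proof is correct and is essentially the standard one. Defining the bundle morphism $\Phi\colon\mathcal{D}^{\mathrm{R}}\to\mathcal{M}\times\R^k$, $v\mapsto(\inn{v}\eta^\alpha)$, and using condition (iii) to get fiberwise injectivity between rank-$k$ bundles does yield smooth, unique $\Reeb_\alpha$ spanning $\mathcal{D}^{\mathrm{R}}$; the Cartan-calculus computation $\Lie_{\Reeb_\alpha}\eta^\gamma=0$, $\Lie_{\Reeb_\alpha}\d\eta^\gamma=0$ correctly places $[\Reeb_\alpha,\Reeb_\beta]$ in $\mathcal{D}^{\mathrm{C}}\cap\mathcal{D}^{\mathrm{R}}=\{0\}$; and both of your routes to involutivity are valid, the second one (applying the identity for $\d(\d\eta^\alpha)=0$ to $X,Y\in\Gamma(\mathcal{D}^{\mathrm{R}})$ and arbitrary $Z$) being the argument used in the original reference and having the merit of not depending on part (2).
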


\begin{proposition}
Let $(\mathcal{M},\eta^\alpha)$ be a $k$-contact manifold.
Around every point of $\mathcal{M}$, there is a local chart  of coordinates $(U;z^I;s^\alpha)$, $U\subset \mathcal{M}$, such that
$$
    \Reeb_\alpha\vert_U = \frac{\partial}{\partial s^\alpha}  \,, \qquad
    \eta^\alpha\vert_U = \d s^\alpha - f_I^\alpha(z^J) \,\d z^I  \,,
$$
which are called  {\sl\textbf{adapted coordinates}} (to the $k$-contact structure).
\end{proposition}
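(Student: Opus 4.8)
The plan is to produce the adapted chart in two stages: first straighten the Reeb vector fields simultaneously, and then read off the claimed normal form of the $\eta^\alpha$ from the defining relations \eqref{reebcontact}. The point is that, once the Reeb fields have been turned into coordinate vector fields, the two identities in \eqref{reebcontact} pin down the coefficients of $\eta^\alpha$ almost immediately.

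First I would invoke Theorem \ref{reebvf}: it guarantees that the Reeb vector fields $\Reeb_1,\dots,\Reeb_k$ exist, that they pairwise commute, $[\Reeb_\alpha,\Reeb_\beta]=0$, and that they are pointwise linearly independent, since they span the rank-$k$ Reeb distribution $\mathcal{D}^{\mathrm{R}}$. Having $k$ commuting, pointwise independent vector fields, I would then apply the simultaneous straightening (flow-box) theorem for commuting vector fields — the standard consequence of the Frobenius theorem — to obtain, around any given point, a chart $(U;z^I,s^\alpha)$ with $1\le I\le N-k$ and $1\le\alpha\le k$, in which $\Reeb_\alpha\vert_U=\partial/\partial s^\alpha$. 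This already settles the first half of the statement, and it is the only non-routine ingredient of the argument; after shrinking $U$ if needed, everything below takes place on a single connected coordinate domain.

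Next, in this chart I would expand $\eta^\alpha = A^\alpha_{\,\beta}\,\d s^\beta + B^\alpha_{\,I}\,\d z^I$ with $A^\alpha_{\,\beta},B^\alpha_{\,I}\in\Cinfty(U)$. Contracting with $\Reeb_\beta = \partial/\partial s^\beta$ and using the first relation in \eqref{reebcontact}, $\inn{\Reeb_\beta}\eta^\alpha=\delta^\alpha_{\,\beta}$, forces $A^\alpha_{\,\beta}=\delta^\alpha_{\,\beta}$, so that $\eta^\alpha = \d s^\alpha + B^\alpha_{\,I}\,\d z^I$. To see that the remaining coefficients are functions of the $z^J$ alone, I would combine Cartan's formula with both relations in \eqref{reebcontact}:
\[
\Lie_{\Reeb_\beta}\eta^\alpha = \inn{\Reeb_\beta}\d\eta^\alpha + \d\big(\inn{\Reeb_\beta}\eta^\alpha\big) = 0 + \d\delta^\alpha_{\,\beta} = 0 .
\]
Since $\Reeb_\beta=\partial/\partial s^\beta$ in the chart, $\Lie_{\Reeb_\beta}$ acts on a coordinate‑written form simply by differentiating its coefficients with respect to $s^\beta$; applied to $\eta^\alpha = \d s^\alpha + B^\alpha_{\,I}\,\d z^I$ it yields $(\partial B^\alpha_{\,I}/\partial s^\beta)\,\d z^I = 0$, hence $\partial B^\alpha_{\,I}/\partial s^\beta=0$ for every $\beta$, i.e. $B^\alpha_{\,I}=B^\alpha_{\,I}(z^J)$. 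Setting $f^\alpha_I := -B^\alpha_{\,I}$ gives $\eta^\alpha\vert_U = \d s^\alpha - f^\alpha_I(z^J)\,\d z^I$, which is the desired conclusion. The main (and mild) obstacle is thus just a clean justification of the simultaneous straightening step; the rest is a direct computation with the two defining identities.
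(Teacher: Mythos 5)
Your argument is correct and is essentially the standard proof (the one given in the cited reference \cite{GGMRR_20}, which this survey omits): straighten the commuting, independent Reeb fields by the simultaneous flow-box theorem, use $\inn{\Reeb_\beta}\eta^\alpha=\delta^\alpha_{\,\beta}$ to fix the $\d s$-coefficients, and use Cartan's formula together with $\inn{\Reeb_\beta}\d\eta^\alpha=0$ to kill the $s$-dependence of the remaining coefficients. No gaps.
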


The existence of {\sl canonical coordinates} is only assured
for a particular kind of $k$-contact manifolds:

\begin{theorem}[Darboux theorem for $k$-contact manifolds]
\label{Darboux k-contact}
Let  $(\mathcal{M},\eta^\alpha)$ be a $k$-contact manifold of dimension $n+kn+k$
such that there exists an integrable subdistribution ${\cal V}$ of ${\cal D}^{\rm C}$
with ${\rm rank}\,{\cal V}=nk$.
Then, around every point of $\mathcal{M}$, there exists a local chart of coordinates 
$(U;y^a,p^\alpha_a,s^\alpha)$, $1\leq\alpha\leq k \,,\ 1\leq a \leq n$,
 such that
$$
\eta^\alpha\vert_U=\d s^\alpha-p_a^\alpha\,\d y^a \,,\qquad
{\cal D}^{\rm R}\vert_U=\left\langle\Reeb_\alpha=\frac{\partial}{\partial s^\alpha}\right\rangle 
\,, \qquad
{\cal V}\vert_{U}=\left\langle\frac{\partial}{\partial p_a^\alpha}\right\rangle
\,.
$$
They are called the {\sl\textbf{canonical}} or {\sl\textbf{Darboux coordinates}} of the $k$-contact manifold.
\end{theorem}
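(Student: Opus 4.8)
The plan is to reduce the statement to the classical Darboux theorem for $k$-symplectic manifolds by passing to the local leaf space of the Reeb foliation. By Theorem~\ref{reebvf}, $\mathcal{D}^{\mathrm{R}}$ is an integrable distribution of rank $k$ spanned by commuting vector fields $R_1,\dots,R_k$ with $\inn{R_\beta}\eta^\alpha=\delta^\alpha_\beta$ and $\inn{R_\beta}\d\eta^\alpha=0$; Cartan's formula then gives $\Lie_{R_\beta}\eta^\alpha=0$ and $\Lie_{R_\beta}\d\eta^\alpha=0$. Shrinking to a neighbourhood $U$ of the given point on which the foliation defined by $\mathcal{D}^{\mathrm{R}}$ is simple, the leaf space $Q:=U/\mathcal{D}^{\mathrm{R}}$ is a manifold with submersion $\pi\colon U\to Q$ and $\ker\pi_\ast=\mathcal{D}^{\mathrm{R}}$ (the hypothesis $\dim\mathcal{M}=n+kn+k$ ensures $\dim Q=n(k+1)$, which is the dimension needed below). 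Since $\d\eta^\alpha$ is annihilated by all $\inn{R_\beta}$ and all $\Lie_{R_\beta}$, it is basic and descends to a closed form $\omega^\alpha\in\df^2(Q)$ with $\pi^\ast\omega^\alpha=\d\eta^\alpha$.

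Next I would equip $Q$ with a distribution coming from $\mathcal{V}$ and check that the result is a $k$-symplectic manifold. Because $\mathcal{V}\subset\mathcal{D}^{\mathrm{C}}$ and $\mathcal{D}^{\mathrm{C}}\cap\mathcal{D}^{\mathrm{R}}=\{0\}$, the map $\pi_\ast$ is fibrewise injective on $\mathcal{V}$; granting moreover that $\mathcal{V}$ is invariant under the Reeb vector fields, $\mathcal{V}\oplus\mathcal{D}^{\mathrm{R}}$ is an involutive distribution containing $\ker\pi_\ast$, so it descends to an integrable distribution $W:=\pi_\ast\mathcal{V}\subset\Tan Q$ of rank $nk$. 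The compatibility $\omega^\alpha\vert_{W\times W}=0$ follows since $\d\eta^\alpha$ vanishes on $\mathcal{V}\times\mathcal{V}$: for sections $X,Y$ of $\mathcal{V}$ one has $\d\eta^\alpha(X,Y)=-\eta^\alpha([X,Y])=0$, because $X,Y,[X,Y]\in\Gamma(\mathcal{V})\subset\ker\widehat{\eta^\alpha}$. For non-degeneracy, suppose $v\in\Tan_q Q$ satisfies $\inn{v}\omega^\alpha=0$ for all $\alpha$; any lift $\widetilde v$ of $v$ to $U$ already has $\inn{\widetilde v}\d\eta^\alpha=0$ (as $\d\eta^\alpha=\pi^\ast\omega^\alpha$), and the correction $\widetilde v':=\widetilde v-(\inn{\widetilde v}\eta^\beta)R_\beta$ still projects to $v$ and still satisfies $\inn{\widetilde v'}\d\eta^\alpha=0$, so $\widetilde v'\in\bigcap_\alpha(\ker\widehat{\eta^\alpha}\cap\ker\widehat{\d\eta^\alpha})=\{0\}$ by axiom~(iii), hence $v=0$. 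Thus $(Q;\omega^\alpha,W)$ is a $k$-symplectic manifold.

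Then I would apply the Darboux theorem for $k$-symplectic manifolds to $(Q;\omega^\alpha,W)$, obtaining local coordinates $(y^a,p^\alpha_a)$ on $Q$ with $\omega^\alpha=\d y^a\wedge\d p^\alpha_a$ and $W=\langle\partial/\partial p^\alpha_a\rangle$. Pulling these back to $U$ and keeping the names, the $1$-form $\eta^\alpha-y^a\,\d p^\alpha_a$ is closed (its differential is $\d\eta^\alpha-\d y^a\wedge\d p^\alpha_a=\pi^\ast\omega^\alpha-\pi^\ast\omega^\alpha=0$), hence equals $\d g^\alpha$ for local functions $g^\alpha$; setting $s^\alpha:=g^\alpha+y^a p^\alpha_a$ yields $\eta^\alpha=\d s^\alpha-p^\alpha_a\,\d y^a$. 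It then remains to verify: that $(y^a,p^\alpha_a,s^\alpha)$ is a chart — the $\d y^a,\d p^\alpha_a$ are pulled back from $Q$, hence annihilate $\mathcal{D}^{\mathrm{R}}$, while $\inn{R_\beta}\d s^\alpha=\inn{R_\beta}\eta^\alpha=\delta^\alpha_\beta$, so all $N=n+kn+k$ differentials are independent; that $R_\alpha=\partial/\partial s^\alpha$ (read off from $R_\alpha y^a=R_\alpha p^\beta_b=0$, $R_\alpha s^\beta=\delta^\beta_\alpha$), so $\mathcal{D}^{\mathrm{R}}=\langle\partial/\partial s^\alpha\rangle$; and that $\mathcal{V}=\langle\partial/\partial p^\alpha_a\rangle$ — a section $X$ of $\mathcal{V}$ projects into $W$, so in these coordinates $X=c^a_\alpha\,\partial/\partial p^\alpha_a+e^\beta\,\partial/\partial s^\beta$ (no $\partial/\partial y^a$-term, since $X$ minus its $p$-part is $\pi_\ast$-vertical, hence in $\langle\partial/\partial s^\beta\rangle$), and $\eta^\beta(X)=e^\beta=0$ because $X\in\mathcal{D}^{\mathrm{C}}$; equality of ranks finishes this.

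The step I expect to be the crux — and where the content of the hypothesis on $\mathcal{V}$ really lies — is the descent of $\mathcal{V}$ to $Q$, which requires $[R_\alpha,\Gamma(\mathcal{V})]\subseteq\Gamma(\mathcal{V})$, i.e.\ invariance of $\mathcal{V}$ under the Reeb vector fields; integrability of $\mathcal{V}$ alone does not give this (already for $k=1$ a rank-one, hence integrable, subdistribution of $\ker\eta$ in $(\R^3;\d s-p\,\d y)$ need not be invariant under $\partial/\partial s$), so the statement is to be understood with this invariance built into the choice of $\mathcal{V}$. A minor point is that the argument uses the $k$-symplectic Darboux theorem as a black box; a self-contained proof would reproduce its inductive construction of the fibre coordinates $p^\alpha_a$ along the leaves of $W$ together with transverse coordinates $y^a$.
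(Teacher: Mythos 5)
The survey itself states this theorem without proof (it is imported from \cite{GGMRR_20}), so there is no in-paper argument to measure you against; judged on its own, your reduction is the natural one and, as far as the cited source goes, the expected strategy: quotient locally by the Reeb foliation, show that $\d\eta^\alpha$ and (the projection of) $\mathcal{V}$ induce a $k$-symplectic structure on the leaf space, invoke the $k$-symplectic Darboux theorem there, and recover $s^\alpha$ as a primitive of the closed form $\eta^\alpha-y^a\,\d p^\alpha_a$. Your individual verifications are correct: $\Lie_{\Reeb_\beta}\eta^\alpha=0$ and $\Lie_{\Reeb_\beta}\d\eta^\alpha=0$ make $\d\eta^\alpha$ basic; $\d\eta^\alpha$ vanishes on $\mathcal{V}\times\mathcal{V}$ because $\mathcal{V}$ is involutive and contained in $\mathcal{D}^{\mathrm{C}}$; nondegeneracy on the quotient follows from condition (iii) after subtracting the Reeb components of a lift; and the final bookkeeping ($\Reeb_\alpha=\partial/\partial s^\alpha$, $\mathcal{V}=\langle\partial/\partial p^\alpha_a\rangle$, independence of the $n+nk+k$ differentials) is right. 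Using the $k$-symplectic Darboux theorem as a black box is legitimate, since you check all of its hypotheses on the quotient.

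The issue you flag is genuine, and it is the crux. Projectability of $\mathcal{V}$ requires $[\Reeb_\alpha,\Gamma(\mathcal{V})]\subset\Gamma(\mathcal{V})$ (equivalently, involutivity of $\mathcal{V}\oplus\mathcal{D}^{\mathrm{R}}$: if $[\Reeb_\alpha,X]=Y+c^\beta \Reeb_\beta$ with $Y\in\Gamma(\mathcal{V})$, contracting with $\eta^\gamma$ and using $\inn{\Reeb_\alpha}\d\eta^\gamma=0$, $\eta^\gamma(X)=0$ forces $c^\gamma=0$), and this neither follows from the stated hypotheses nor can be dispensed with, because the conclusion itself forces it: in Darboux coordinates the Reeb vector fields are exactly $\partial/\partial s^\alpha$ and the distribution $\langle\partial/\partial p^\alpha_a\rangle$ is manifestly invariant under them. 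Your $k=1$ example can be made completely explicit: on $(\R^3,\eta=\d s-p\,\d y)$ take $\mathcal{V}=\langle X\rangle$ with $X=\partial/\partial p+s\,(\partial/\partial y+p\,\partial/\partial s)$; this is an integrable rank-one subdistribution of $\mathcal{D}^{\mathrm{C}}$, yet $[\partial/\partial s,X]=\partial/\partial y+p\,\partial/\partial s\notin\Gamma(\mathcal{V})$, so no chart with $\mathcal{V}=\langle\partial/\partial p\rangle$ can exist. Hence the defect lies in the hypotheses of the statement as quoted (they should additionally require $\mathcal{V}$ to be invariant under the Reeb vector fields, or $\mathcal{V}\oplus\mathcal{D}^{\mathrm{R}}$ involutive), not in your argument; with that assumption added, your proof is complete and correct.
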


The following example constitutes the canonical model for these kinds of $k$-contact manifolds.

\begin{remark}
\label{example-canmodel}
Given $k \geq 1$, let $Q$ be a $n$-dimensional differentiable manifold;
consider the vector bundle 
$\oplus^k \Tan^*Q:= \Tan^*Q \oplus \stackrel{k}{\dotsb} \oplus \Tan^*Q$,
which is called the {\sl\textbf {$k$-cotangent bundle}} or {\sl\textbf{bundle of $k^1$-momenta}} of $Q$.
Then, the manifold ${\cal M}=(\oplus^k \Tan^*Q) \times \Rk$
has a canonical $k$-contact structure defined by the 1-forms
$$
\eta^\alpha = \d s^\alpha - \theta^\alpha
\,,
$$
where $s^\alpha$ is the $\alpha$-th cartesian coordinate of $\Rk$,
and $\theta^\alpha$ is the pull-back of the canonical 1-form of $\Tan^*Q$ to $(\oplus^k \Tan^*Q) \times \Rk$
by the corresponding projection $(\oplus^k \Tan^*Q) \times   \Rk\to\Tan^*Q$.
Using coordinates $y^a$ on $Q$ and natural coordinates
$(y^a,p_a^\alpha)$ on each $\Tan^*Q$,
their local expressions are
\begin{equation}
\label{caneta}
\eta^\alpha = \d s^\alpha - p^\alpha_a \,\d y^a
\,,
\end{equation}
and the Reeb vector fields are
$$
\Reeb_\alpha = \frac{\partial}{\partial s^\alpha}
\,.
$$
\end{remark}

%%%%%%%%%%%%%%%%%%%%%%%%%%%%%%%%%%%%%%%%%%
\subsubsection{$k$-Contact Hamiltonian systems}
\label{secHamsys}
%%%%%%%%%%%%%%%%%%%%%

First, let $\oplus^k\Tan\mathcal{M}:=\Tan\mathcal{M}\oplus \stackrel{(k)}{\dotsb}\oplus\Tan\mathcal{M}$ be
the so-called {\sl \textbf{$k$-tangent bundle}} or {\sl \textbf{bundle of $k^1$-velocities}} of $\mathcal{M}$.
It is endowed with natural projections to each direct summand and to the base manifold:
$$
\tau_\alpha\colon\oplus^k\Tan\mathcal{M}\longrightarrow\Tan\mathcal{M}
\,, \qquad \tau^1_\mathcal{M}\colon \oplus^k\Tan\mathcal{M}\longrightarrow\mathcal{M} \,.
$$
Then, a {\sl\textbf{$k$-vector field}} on $\mathcal{M}$ is a section
${\bf X}\colon\mathcal{M}\longrightarrow\oplus^k\Tan\mathcal{M}$ of the projection $\tau^1_\mathcal{M}$.
It is specified by giving $k$ vector fields $X_1,\dots,X_k\in\vf(\mathcal{M})$, 
obtained as $X_\alpha=\tau_\alpha\circ{\bf X}$.
Then, the $k$-vector field is specified as ${\bf X}=(X_1, \ldots, X_k)$.
Every $k$-vector field ${\bf X}=(X_1, \ldots, X_k)$ 
induces a decomposable, contravariant, skew-symmetric tensor field,
$X_1 \wedge \dotsb \wedge X_k$,
which is a section of the bundle $\Lambda^k\Tan\mathcal{M}\to\mathcal{M}$, and hence
this also induces a tangent distribution on~$\mathcal{M}$.
The sections of this bundle are generically called {\sl\textbf{$k$-multivector fields}} in $\mathcal{M}$
and, when they are of the form $X_1 \wedge \dotsb \wedge X_k$
(at least locally),
are called {\sl\textbf{(locally) decomposable $k$-multivector fields}}
(see the Appendix \ref{append}).

Let $\bm\psi\colon D\subset  \Rk\rightarrow\mathcal{M}$ be an immersion. 
If $t =(1,\ldots,k)$ denote the canonical coordinates in $\Rk$,
let $\bm\psi(x)=(\psi^I(x))$, $1\leq I\leq N$.
Then, the {\sl \textbf{first prolongation}} of $\bm\psi$ to $\oplus^k \Tan\mathcal{M}$
is the map $\bm\psi'\colon D\subset  \Rk\to\oplus^k \Tan\mathcal{M}$ defined by
$$
\bm\psi'(x) =
\left(\psi^I(x),\Tan\bm\psi \Big(\frac{\partial}{\partial x^1}\Big\vert_{x}\Big),
\ldots,
\Tan\bm\psi\Big(\frac{\partial}{\partial x^k}\Big\vert_{x}\Big)\right)
\equiv(\bm\psi(x);\bm\psi_\alpha'(x)) \,.
$$
We say that $\bm\psi$ is an {\sl\textbf{integral map}} of a $k$-vector field
${\bf X}=(X_{1},\dots, X_{k})$ if
\begin{equation}
\label{integsec}
\bm\psi'={\bf X}\circ\bm\psi
\,,
\end{equation}
or, equivalently, if
$\ds\Tan\bm\psi\circ\frac{\partial}{\partial x^\alpha}= X_\alpha\circ\bm\psi$,
for every~$\alpha$.
A $k$-vector field ${\bf X}$ is {\sl\textbf{integrable}} if
every point of~$Q$ is in the image of an integral map of~${\bf X}$.
In coordinates, if
$$
X_\alpha=X_\alpha^a\frac{\partial}{\partial y^a}+ X_{\alpha\beta}^a\frac{\partial}{\partial y^a_\beta}+X_\alpha^\beta\frac{\partial}{\partial s^\beta} \,,
$$
then $\bm\psi(x)=\big(y^a(x),y^a_\alpha(x),s^\alpha(x) \big)$ is an integral map of $\mathbf{X}$ 
if, and only if, it is a solution to the system of partial differential equations,
\begin{equation}
\label{inteqcoor}
\frac{\partial y^a}{\partial x^\alpha}=X_\alpha^a(\bm\psi) \,, \
\frac{\partial y^a_\beta}{\partial x^\alpha}=X_{\alpha\beta}^a(\bm\psi(x)) \,, \
\frac{\partial s^\beta}{\partial x^\alpha}=X_\alpha^\beta(\bm\psi(x)) \,,
\end{equation}

Now, we define:

\begin{definition}
A {\sl\textbf{$k$-contact Hamiltonian system}} is a family $(\mathcal{M},\eta^\alpha,\H)$,
where $(\mathcal{M},\eta^\alpha)$ is a $k$-contact manifold,
and $\H\in\Cinfty(\mathcal{M})$ is called a {\sl\textbf{Hamiltonian function}}.
\end{definition}

The field equations of a contact Hamiltonian system can be expressed in geometric form in two alternative ways:

\begin{definition}
\label{fieldeqs}
The \textbf{$k$-contact Hamilton--de Donder--Weyl equations} for a map
$\bm\psi\colon D\subset\Rk\to \mathcal{M}$ are
\begin{equation}
\label{hec}
\begin{cases}
\inn{\bm\psi'_\alpha}\d\eta^\alpha = \big(\d\H - (\Lie_{\Reeb_\alpha}\H)\eta^\alpha\big)\circ\bm\psi \,,\\
\inn{\bm\psi'_\alpha}\eta^\alpha = - \H\circ\bm\psi \,.
\end{cases}
\end{equation}
The {\sl\textbf{$k$-contact Hamilton--de Donder--Weyl equations}} 
for a $k$-vector field ${\bf X}=(X_1,\dots,X_k)$ in $\mathcal{M}$ are
\begin{equation} 
\label{fieldcontact}
\begin{cases}
    \inn{{X_\alpha}}\d\eta^\alpha=\d\H-(\Lie_{\Reeb_\alpha}\H)\eta^\alpha \,,\\
    \inn{X_\alpha}\eta^\alpha=-\H \,.
    \end{cases}
\end{equation}
Their solutions are called {\sl\textbf{Hamiltonian $k$-vector fields}}.
\end{definition}

Bearing in mind the definition of the integral maps of an integrable $k$-vector field (see equations \eqref{integsec} and \eqref{inteqcoor}),
it is immediate to prove the following:

\begin{proposition}
If ${\bf X}$ is an integrable $k$-vector field in $\mathcal{M}$, then
every integral map $\bm\psi\colon D\subset\Rk\to\mathcal{M}$ of ${\bf X}$
satisfies the $k$-contact equation \eqref{hec}
if, and only if,
${\bf X}$ is a solution to \eqref{fieldcontact}.
\end{proposition}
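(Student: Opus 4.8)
The plan is to unwind the definition of an integral map and verify that the two systems of equations are pointwise equivalent along the image of $\bm\psi$. First I would recall that, by \eqref{integsec}, an integral map $\bm\psi$ of ${\bf X}=(X_1,\dots,X_k)$ satisfies $\bm\psi'_\alpha(x) = X_\alpha(\bm\psi(x))$ for every $\alpha$ and every $x\in D$; that is, the $\alpha$-th component of the first prolongation $\bm\psi'$ coincides with $X_\alpha$ evaluated along $\bm\psi$. This is exactly the substitution that turns the left-hand sides $\inn{\bm\psi'_\alpha}\d\eta^\alpha$ and $\inn{\bm\psi'_\alpha}\eta^\alpha$ of \eqref{hec} into $(\inn{X_\alpha}\d\eta^\alpha)\circ\bm\psi$ and $(\inn{X_\alpha}\eta^\alpha)\circ\bm\psi$.

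Next I would argue the two implications. Suppose ${\bf X}$ solves \eqref{fieldcontact}. Then evaluating both equations of \eqref{fieldcontact} along $\bm\psi$ and using $\bm\psi'_\alpha = X_\alpha\circ\bm\psi$ gives precisely \eqref{hec}, since the right-hand sides of \eqref{fieldcontact}, namely $\d\H-(\Lie_{\Reeb_\alpha}\H)\eta^\alpha$ and $-\H$, become $\big(\d\H-(\Lie_{\Reeb_\alpha}\H)\eta^\alpha\big)\circ\bm\psi$ and $-\H\circ\bm\psi$ when restricted to the image of $\bm\psi$. Hence every integral map satisfies \eqref{hec}. Conversely, suppose some integral map $\bm\psi$ of ${\bf X}$ satisfies \eqref{hec}. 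Then the same substitution shows that the equations of \eqref{fieldcontact} hold at every point of the image of $\bm\psi$; since ${\bf X}$ is integrable, every point of $\mathcal{M}$ lies in the image of some integral map, and one can run this argument with the integral map through each point, so \eqref{fieldcontact} holds everywhere. (One should be slightly careful here: the statement as phrased compares "every integral map satisfies \eqref{hec}" with "${\bf X}$ solves \eqref{fieldcontact}", so in the converse direction integrability of ${\bf X}$ is what lets us conclude a pointwise identity of tensor fields on all of $\mathcal{M}$ from its validity along integral maps.)

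I would then remark that, at the level of local coordinates, this is just the observation that plugging the PDE system \eqref{inteqcoor} — which expresses that $\bm\psi$ is an integral map — into the coordinate form of \eqref{hec} yields the coordinate form of \eqref{fieldcontact} and vice versa; but the coordinate-free argument above is cleaner and is what I would present, relegating the coordinate check to at most a parenthetical.

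The only genuinely delicate point — and the one I would flag as the main obstacle — is the logical quantifier structure in the converse: one must use that ${\bf X}$ is integrable (so that the integral maps cover $\mathcal{M}$) together with the fact that a section of a vector bundle that vanishes at every point is the zero section, in order to upgrade "\eqref{fieldcontact} holds along every integral map" to "\eqref{fieldcontact} holds identically". Everything else is a direct substitution using \eqref{integsec}, so the proof is short; the care is entirely in not conflating "integral map satisfies an equation" with "the $k$-vector field satisfies an equation" in the non-integrable regime, which is precisely why the hypothesis of integrability appears in the statement.
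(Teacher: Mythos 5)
Your proposal is correct and takes essentially the same route the paper intends: the paper states the result as an immediate consequence of the definition of integral maps (equations \eqref{integsec} and \eqref{inteqcoor}), i.e.\ precisely the substitution $\bm\psi'_\alpha = X_\alpha\circ\bm\psi$ that you carry out, with integrability supplying the covering of $\mathcal{M}$ by images of integral maps needed for the converse. Your explicit flagging of the quantifier issue in the converse direction is a sound reading of why the integrability hypothesis appears in the statement.
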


If $(\mathcal{M},\eta^\alpha,\H)$ is a contact Hamiltonian system; 
using canonical coordinates for the contact structure $(\mathcal{M},\eta^\alpha)$,
if $\bm\psi(x)=(y^a(x),p^\alpha_a(x),s^\alpha(x))$ is a solution to the equations \eqref{hec},
then $\displaystyle\bm\psi'_\alpha=
\Big(y^a,p^\alpha_a,s^\alpha,\parder{y^a}{\beta},\parder{p^\alpha_a}{\beta},\parder{s^\alpha}{\beta}\Big)$,
and equations \eqref{hec} read,
$$
%\label{coor1}
\begin{cases}
\displaystyle \frac{\partial y^a}{\partial x^\alpha} = \frac{\partial\H}{\partial p^\alpha_a}\circ\psi \,,\\[15pt]
\displaystyle \frac{\partial p^\alpha_a}{\partial x^\alpha} = 
-\left(\frac{\partial\H}{\partial y^a}+ p_a^\alpha\frac{\partial\H}{\partial s^\alpha}\right)\circ\psi \,,\\[15pt]
\displaystyle \frac{\partial s^\alpha}{\partial x^\alpha} = 
\left(p_a^\alpha\frac{\partial\H}{\partial p^\alpha_a}-\H\right)\circ\psi\,.
\end{cases}
$$
Furthermore, if ${\bf X}=(X_\alpha)$, with
$\displaystyle
X_\alpha= X_\alpha^\beta\parder{}{s^\beta}+ X_\alpha^a\frac{\partial}{\partial y^a}+
X_{\alpha a}^\beta\frac{\partial}{\partial p_a^\beta}$,
is a $k$-vector field solution to \eqref{fieldcontact},
then these equations lead to
\begin{equation}
\label{coor0}
\begin{cases}
\displaystyle X_\alpha^a= \frac{\partial\H}{\partial p^\alpha_a} \,,\\[15pt]
\displaystyle X_{\alpha a}^\alpha = 
-\left(\frac{\partial\H}{\partial y^a}+ p_a^\alpha\frac{\partial\H}{\partial s^\alpha}\right) \,,\\[15pt]
\displaystyle X_\alpha^\alpha = 
p_a^\alpha\frac{\partial\H}{\partial p^\alpha_a}-\H\,.
\end{cases}
\end{equation}
And, from these last equations \eqref{coor0}
we obtain that (see also \cite{GGMRR_20}):

\begin{proposition}
If $(\mathcal{M},\eta^\alpha,\H)$ is a contact Hamiltonian system, then there exist
solutions to the equations \eqref{fieldcontact}, although
they are neither unique, nor necessarily integrable.    
\end{proposition}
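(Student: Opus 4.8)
The plan is to solve the field equations \eqref{fieldcontact} in a neighbourhood of each point and then glue. Around any point of $\mathcal M$ I would take adapted coordinates $(z^I,s^\alpha)$ — which exist on any $k$-contact manifold — and look for a $k$-vector field $X_\alpha=A_\alpha^I\,\partial/\partial z^I+B_\alpha^\gamma\,\partial/\partial s^\gamma$. Because $\d\eta^\alpha$ involves only the $\d z^I$, the first equation of \eqref{fieldcontact} involves only the functions $A_\alpha^I$ and, on comparing coefficients of each $\d z^I$, becomes the linear system $\sum_\alpha F^\alpha_{IJ}\,A_\alpha^J=\Psi_I$, where $F^\alpha$ is the skew-symmetric coefficient matrix of $\d\eta^\alpha$ in these coordinates and the $\Psi_I$ are the $\d z^I$-coefficients of the 1-form $\d\H-(\Lie_{\Reeb_\gamma}\H)\eta^\gamma$ (which, one checks, has no $\d s^\gamma$-component). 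The second equation of \eqref{fieldcontact} then imposes a single scalar condition involving the $B_\alpha^\gamma$, trivially solvable once the $A_\alpha^J$ are fixed. The decisive point is that the matrix whose blocks are the $F^\alpha$ has full row rank: if a covector $(u_I)$ annihilated all of its rows, skew-symmetry would give $F^\alpha u=0$ for every $\alpha$, so $u$ would lie in the $z$-directions of $\bigcap_\alpha\ker\widehat{\d\eta^\alpha}$, which is $\{0\}$ exactly by condition (ii) of Definition~\ref{def:kcontact}. Thus the system is solvable and local solutions exist; where canonical coordinates are available one may instead simply invoke \eqref{coor0}, which manifestly has solutions. To globalize, note that the left-hand side of \eqref{fieldcontact} is $\Cinfty(\mathcal M)$-linear in ${\bf X}$ whereas the right-hand side is independent of ${\bf X}$; hence, given local solutions ${\bf X}^{(i)}$ over an open cover $\{U_i\}$ and a subordinate partition of unity $\{\rho_i\}$, the $k$-vector field $\sum_i\rho_i{\bf X}^{(i)}$ is globally defined and solves \eqref{fieldcontact} because $\sum_i\rho_i=1$.

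\textbf{Non-uniqueness and non-integrability.} Both are visible already in \eqref{coor0}. For $k\geq2$ many components of a solution are undetermined — in canonical coordinates, all $X_\alpha^\beta$ and $X_{\alpha a}^\beta$ with $\alpha\neq\beta$ — so the set of solutions is a nontrivial affine subspace and solutions are far from unique. For integrability, I would recall that if $\bm\psi$ is an integral map of ${\bf X}=(X_1,\dots,X_k)$ then, by \eqref{integsec}, each $X_\alpha$ is $\bm\psi$-related to $\partial/\partial x^\alpha$, hence $[X_\alpha,X_\beta]$ is $\bm\psi$-related to $[\partial/\partial x^\alpha,\partial/\partial x^\beta]=0$; therefore an integrable $k$-vector field must satisfy $[X_\alpha,X_\beta]=0$ identically on $\mathcal M$. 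These bracket relations are not entailed by \eqref{fieldcontact}: given any solution ${\bf X}$ and, for $k\geq2$, adding to $X_1$ a term $g\,\partial/\partial s^2$ with $g$ a function of the $y^a$ yields another solution (the change does not affect \eqref{coor0}) whose bracket $[X_1,X_2]$ differs from the original by $[\,g\,\partial/\partial s^2,X_2]$; for a suitable $g$ this is nonzero whenever $\H$ depends on some momentum $p_a^2$ (e.g. when $\H$ has a nondegenerate kinetic term), so the two solutions cannot both satisfy $[X_1,X_2]=0$ and at least one of them is not integrable.

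\textbf{Main obstacle.} The only genuinely substantive step is local existence on an arbitrary $k$-contact manifold: recognizing that condition (ii) of Definition~\ref{def:kcontact} is precisely what forces the above linear system to be solvable. In the setting where canonical (Darboux) coordinates exist — as in the derivation of \eqref{coor0} — this collapses to a direct inspection of \eqref{coor0}, and the remaining points (partition-of-unity globalization and the exact form of the integrability obstruction) are routine.
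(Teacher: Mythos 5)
Your proposal is correct, and it takes a genuinely different (more detailed and more general) route than the paper. The paper's own proof consists of deriving the canonical-coordinate equations \eqref{coor0} — which presuppose Darboux coordinates, i.e.\ the extra hypothesis of Theorem \ref{Darboux k-contact} — and simply reading existence, non-uniqueness and non-integrability off them, with a pointer to \cite{GGMRR_20}; globalization is not discussed. You instead work in adapted coordinates, which exist on every $k$-contact manifold, reduce local existence to a linear system for the $A_\alpha^J$, and justify its solvability by a rank argument based on condition (ii) of Definition \ref{def:kcontact}; note that your one-line attribution tacitly also uses that in adapted coordinates the $k$ independent fields $\partial/\partial s^\alpha$ already lie in $\bigcap_\alpha\ker\widehat{\d\eta^\alpha}$, so that rank exactly $k$ forces the pure-$z$ part of that kernel to vanish — a small elision, not a gap. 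Your partition-of-unity globalization is legitimate because both equations in \eqref{fieldcontact} have left-hand sides that are $\Cinfty(\mathcal M)$-linear in $\bfX$ and right-hand sides independent of $\bfX$, so convex combinations of solutions are solutions; this is a point the paper leaves implicit. Your non-uniqueness observation coincides with the paper's (free off-trace components in \eqref{coor0} for $k\geq 2$), and your non-integrability argument — integrability forces $[X_\alpha,X_\beta]=0$, which the field equations do not entail, exhibited by perturbing $X_1\mapsto X_1+g\,\partial/\partial s^2$ — is an explicit mechanism where the paper merely asserts the fact; it is conditional on $\H$ depending on some $p_a^2$ (or on $[\partial/\partial s^2,X_2]\neq 0$), but that suffices for the ``not necessarily integrable'' claim, which the paper itself does not justify in any greater generality. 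In short: what the paper's approach buys is brevity in the Darboux setting; what yours buys is validity on arbitrary $k$-contact manifolds, an explicit global existence statement, and a concrete witness for non-integrability.
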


\begin{remark}
An equivalent way to write equations \eqref{fieldcontact} is:
$$
\begin{cases}
\Lie_{X_\alpha}\eta^\alpha = - (\Lie_{\Reeb_\alpha}\H)\eta^\alpha \,, \\
\inn{X_\alpha}\eta^\alpha = - \H \,.
\end{cases}
$$
Another alternative and
partially equivalent, expression for the Hamilton--De Donder--Weyl equations,
without using the Reeb vector fields $\Reeb_\alpha$, is as follows (see \cite{GGMRR_20}):
Consider the 2-forms $\Omega^\alpha= -\H\,\d\eta^\alpha+\d\H\wedge\eta^\alpha$.
On the open set ${\cal O}=\{ p\in\mathcal{M}\mid \H(p)\not=0\}$,
if a $k$-vector field ${\bf X}=(X_\alpha)$ satisfies,
\begin{equation}
\label{hamilton-eqs-no-reeb}
    \begin{cases}
        \inn{X_\alpha}\Omega^\alpha = 0\,,\\
        \inn{X_\alpha}\eta^\alpha = -\H\,,
    \end{cases}
\end{equation}
then ${\bf X}$ is a solution of the Hamilton--De Donder--Weyl equations \eqref{fieldcontact}).
Then, the integral maps $\psi$ of such a $k$-vector fields are solutions to
\begin{equation}
 \begin{cases}
\inn{\psi'_\alpha}\Omega^\alpha = 0 
\,,\\
\inn{\psi'_\alpha}\eta^\alpha = -\H\circ\psi 
\,.
\end{cases}
\label{hec2}
\end{equation}
\end{remark}

%%%%%%%%%%%%%%%%%%%%%%%%%%%%%%%%%%%%%%%%%%%%%%%%%%%%%%%
\subsection{\texorpdfstring{$k$}{}-Contact Lagrangian formalism}

Now we describe the Lagrangian formalism of action-dependent field theories,
using $k$-contact structures.

%%%%%%%%%%%%%%%%%%%%%%%%%%%%%%%%%%%%%%%%%%%%%%%%%%%%%%%%%%%%
\subsubsection{Geometry of the phase bundle}

Let $Q$ be a $n$-dimensional differentiable manifold, and
consider its $k$-tangent bundle $\oplus^k \Tan Q = \Tan Q \oplus \stackrel{k}{\dots} \oplus \Tan Q$.
The natural coordinates in $\oplus^k\Tan Q$ are denoted $(y^a ,y_\alpha^a)$,
with $1\leq i\leq n$ and $1\leq\alpha\leq k$.

The $k$-tangent bundle has some canonical structures which are induced on it by the canonical structures of the tangent bundle $\Tan Q$.
In particular, first we have the so-called {\sl\textbf{ canonical $k$-tangent structure}}, which is the set
$(J^1,\ldots,J^k)$ of tensor fields of type $(1,1)$ in $\oplus^k \Tan Q$
whose local expression in natural coordinates are
$\displaystyle J^\alpha=\frac{\partial}{\partial y^a_\alpha} \otimes \d y^a$.
Second, we have the {\sl\textbf{Liouville vector field}}
$\Delta\in\vf(\oplus^k \Tan Q)$, which is the infinitesimal generator of dilations in the fibers of the bundle $\oplus^k \Tan Q\rightarrow\Tan Q$; that is, whose flow
$\psi\colon\R\times \oplus^k \Tan Q\longrightarrow \oplus^k \Tan Q$ is given by
$\psi(t;v_{1q},\ldots,v_{kq}) = (e^t v_{1q},\ldots,e^t v_{kq})$.
In coordinates, 
$\displaystyle\Delta = y^a_\alpha\parder{}{y_\alpha^a}$.

A map $\varphi\colon D\subset\Rk\to \oplus^k\Tan Q$
is said to be {\sl\textbf{holonomic}} if it is the first prolongation of a map $\phi\colon D\subset\Rk\rightarrow Q$.
In coordinates, if $\phi(x)=(\phi^a(x))$, then
$\ds\phi'(x)=\Big(\phi^a(x),\parder{\phi^a}{x^\alpha}(x)\Big)$.
(See \cite{LSV_15} for more details on all the above topics).

Action-dependent Lagrangian field theories are developed in a bundle that is built by enlarging the above $k$-tangent bundle to include the dissipation variables.
Thus, consider the bundle $P\equiv\oplus^k \Tan Q\times\Rk$;
whose natural coordinates are $(y^a,y^a_\alpha,s^\alpha)$.
We have the canonical projections
\begin{align}
& \bar\tau_1\colon P\equiv\oplus^k \Tan Q\times\Rk\to\oplus^k \Tan Q\,, &&
 \bar\tau_2\colon P\equiv\oplus^k \Tan Q\times\Rk\to\Rk
 \,, \\
& s^\alpha\colon P\equiv\oplus^k\Tan Q\times\Rk\to\R\,, &&
\bar\tau_{Q\times\Rk}\colon P\equiv\oplus^k\Tan Q\times\Rk\to Q\times \Rk \,.
\end{align}
The manifolds $P$ and $Q\times\Rk$ are called the {\sl\textbf{ $k^1$-velocity phase space}} and  the {\sl\textbf{configuration space}} of the $k$-contact field theory, respectively.

\begin{definition}
\label{de652}
Let $\psi\colon\Rk\rightarrow Q\times\Rk$ and $\phi\colon\Rk\to Q$ be immersions,
such that $\psi(x)=(\phi^a(x),s^\alpha(x))$.
The {\sl\textbf{first prolongation}} 
 of $\psi$ to $ P=\oplus^k \Tan Q\times\Rk$ is the map
$\bm\psi\colon\Rk\to\oplus^k\Tan Q\times\Rk$
given by
$\bm\psi=(\phi',s^\alpha)$;
where $\phi'\colon\Rk\to\oplus^k\Tan Q$
is the first prolongation of $\phi$ to $\oplus^k\Tan Q$.
The map $\bm\psi$ is said to be {\sl\textbf{holonomic}} in $ P$.
\end{definition}

In coordinates, the expression of a holonomic map in $ P$ is
\begin{equation}
\label{holsecP}
\bm\psi(x)=\Big(\phi^a(x),\parder{\phi^a}{\alpha}(x),s^\alpha(x)\Big) \,.
\end{equation}

\begin{definition}
\label{lem0}
A $k$-vector field $\bm\Gamma$ in $ P$ 
is said to be {\sl\textbf{holonomic}} or a {\sl\textbf{second order partial differential equation}}  
(\textsc{sopde})
if it is integrable and its integral maps are holonomic in $ P$.
\end{definition}

If $\bm\psi$ is locally given by
\eqref{holsecP} and it is an integral map of
a {\sc sopde} $\bm\Gamma$,
whose vector fields components have local expressions as
\begin{equation}
\label{locsopde}
\Gamma_\alpha= 
\Gamma^a_\alpha\frac{\displaystyle\partial} {\displaystyle
\partial y^a}+\Gamma_{\alpha\beta}^a\frac{\displaystyle\partial}{\displaystyle \partial y^a_\beta}+\Gamma^\beta_\alpha\,\frac{\partial}{\partial s^\beta}\,.
\end{equation}
Then, from \eqref{integsec} we have that the components of $\bm\psi(x)$ are the solution to the system of second order partial
differential equations,
\begin{equation}
\label{2ordeq}
\frac{\partial\phi^a} {\partial \alpha}=\Gamma^a_\alpha(\bm\psi(x)) 
\quad , \quad
\frac{\partial^2 \phi^a}{\partial \alpha\partial \beta}=
\Gamma_{\alpha\beta}^a(\bm\psi(x)) \,.
\end{equation}
Therefore, the local expressions of the vector fields components of a {\sc sopde} are
\begin{equation}
\label{localsode1}
\Gamma_\alpha= 
y^a_\alpha\frac{\displaystyle\partial} {\displaystyle
\partial y^a}+\Gamma_{\alpha\beta}^a\frac{\displaystyle\partial}{\displaystyle \partial y^a_\beta}+\Gamma^\beta_\alpha\,\frac{\partial}{\partial s^\beta}\,,
\end{equation}
and observe that, from the second equation of \eqref{2ordeq}, we obtain that
$\Gamma_{\alpha\beta}^a=\Gamma_{\beta\alpha}^a$.

\begin{remark}
Since $\oplus^k\Tan Q\times\Rk\to\oplus^k\Tan Q$ is a trivial bundle, 
the canonical structures in $\oplus^k\Tan Q$; i.e., the canonical $k$-tangent structure and the Liouville vector field,
can be extended to $P\equiv\oplus^k\Tan Q\times\Rk$ in a natural way.
They are denoted with the same notation, $(J^\alpha)$ and $\Delta$,
and have the same coordinate expressions as above.
Then, using these structures, we have the following alternative geometric characterizations for {\sc sopde} $k$-vector fields in $ P$.
\end{remark}

Then, a simple calculation in coordinates leads to the following result:

\begin{proposition}
An integrable $k$-vector field $\bm\Gamma=(\Gamma_\alpha)$ in $P$ is a {\sc sopde}
if, and only if,
\ $J^\alpha(\Gamma_\alpha)=\Delta$.
\end{proposition}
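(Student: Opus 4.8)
The plan is to verify the claimed equivalence by a direct computation in natural coordinates, which is exactly how the surrounding text signals the result should be obtained (``a simple calculation in coordinates leads to''). First I would write a general $k$-vector field $\bm\Gamma=(\Gamma_\alpha)$ on $P\equiv\oplus^k\Tan Q\times\Rk$ in the form \eqref{locsopde}, i.e. $\Gamma_\alpha = \Gamma_\alpha^a\,\partial/\partial y^a + \Gamma_{\alpha\beta}^a\,\partial/\partial y^a_\beta + \Gamma_\alpha^\beta\,\partial/\partial s^\beta$, and then apply the tensor field $J^\alpha = \partial/\partial y^a_\alpha \otimes \d y^a$ to each component $\Gamma_\alpha$ (with the sum over $\alpha$ understood as in the statement $J^\alpha(\Gamma_\alpha)=\Delta$). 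Since $J^\alpha$ annihilates $\partial/\partial y^a_\beta$ and $\partial/\partial s^\beta$ and sends $\partial/\partial y^a \mapsto \partial/\partial y^a_\alpha$, one gets $J^\alpha(\Gamma_\alpha) = \Gamma_\alpha^a\,\partial/\partial y^a_\alpha$. Comparing with $\Delta = y^a_\alpha\,\partial/\partial y^a_\alpha$ shows that $J^\alpha(\Gamma_\alpha)=\Delta$ holds if and only if $\Gamma_\alpha^a = y^a_\alpha$ for all $a,\alpha$, which is precisely the condition exhibited in \eqref{localsode1} characterizing the form of the vector field components of a \textsc{sopde}.

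Next I would close the logical loop with the notion of holonomy. For the ``only if'' direction: assuming $\bm\Gamma$ is an integrable \textsc{sopde}, its integral maps are holonomic by Definition \ref{lem0}, so they have the form \eqref{holsecP}; plugging such a map into the integrability relations \eqref{integsec}–\eqref{inteqcoor} forces the first group of equations in \eqref{2ordeq}, namely $\partial\phi^a/\partial x^\alpha = \Gamma_\alpha^a(\bm\psi(x))$, and since along a holonomic map one also has $y^a_\alpha(x)=\partial\phi^a/\partial x^\alpha(x)$, we conclude $\Gamma_\alpha^a = y^a_\alpha$ at every point in the image of an integral map; because $\bm\Gamma$ is integrable these points cover $P$, giving $\Gamma_\alpha^a = y^a_\alpha$ identically, hence $J^\alpha(\Gamma_\alpha)=\Delta$ by the coordinate computation above. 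For the ``if'' direction: assuming $\bm\Gamma$ integrable with $J^\alpha(\Gamma_\alpha)=\Delta$, the coordinate computation gives $\Gamma_\alpha^a = y^a_\alpha$, so for any integral map $\bm\psi(x)=(\phi^a(x),\phi^a_\alpha(x),s^\alpha(x))$ the equations \eqref{inteqcoor} yield $\partial\phi^a/\partial x^\alpha = \phi^a_\alpha$, which says exactly that $\phi^a_\alpha = \partial\phi^a/\partial x^\alpha$, i.e. $\bm\psi$ is the first prolongation of $\phi=(\phi^a)$ and therefore holonomic; thus $\bm\Gamma$ is a \textsc{sopde}.

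The only genuinely non-formal point — and thus the ``main obstacle'', though it is minor — is making sure the argument is not circular: the condition $J^\alpha(\Gamma_\alpha)=\Delta$ is a pointwise, algebraic condition on the $k$-vector field, whereas ``\textsc{sopde}'' as defined in Definition \ref{lem0} presupposes integrability and is a statement about integral maps. So the equivalence must be read as ``among integrable $k$-vector fields'', which is exactly how the Proposition is phrased, and the bridge between the two is the elementary observation that along a holonomic map the fibre coordinates $y^a_\alpha$ coincide with the partials $\partial\phi^a/\partial x^\alpha$. One should also note that the symmetry $\Gamma_{\alpha\beta}^a=\Gamma_{\beta\alpha}^a$ recorded after \eqref{localsode1} is a consequence of holonomy (equality of mixed partials in the second equation of \eqref{2ordeq}) and is automatically enforced on an integrable \textsc{sopde}, so it need not be imposed separately in the characterization. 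With these remarks the proof reduces to the two short coordinate computations sketched above.
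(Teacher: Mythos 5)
Your proof is correct and follows essentially the same route as the paper, which simply invokes ``a simple calculation in coordinates'': the pointwise computation $J^\alpha(\Gamma_\alpha)=\Gamma_\alpha^a\,\partial/\partial y^a_\alpha$ compared with $\Delta=y^a_\alpha\,\partial/\partial y^a_\alpha$, combined with the identification of $y^a_\alpha$ with $\partial\phi^a/\partial x^\alpha$ along holonomic integral maps already set up in \eqref{2ordeq}--\eqref{localsode1}. You merely spell out both implications (using integrability so that integral maps cover $P$) in the detail the paper leaves implicit, which is fine.
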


\begin{remark}
The $k$-vector fields that satisfy the above condition, $J^\alpha(\Gamma_\alpha)=\Delta$, whose local expression is \eqref{localsode1},
are called {\sl\textbf{semi-holonomic $k$-vector fields}}.
\end{remark}

\subsubsection{$k$-Contact Lagrangian systems}

Now, we can state the Lagrangian formalism for action-dependent  field theories (see \cite{GGMRR_21}).

\begin{definition}
\label{lagrangean}
A {\sl \textbf{Lagrangian function}} is a function $\L\in\Cinfty( P)$. The {\sl \textbf{Lagrangian energy}} associated to $\L$ is the function $$E_\L:=\Delta(\L)-\L\in\Cinfty( P) \ . $$
The {\sl\textbf{Cartan forms}} associated the Lagrangian function $\L$ are
    \begin{equation*}
\theta_\L^\alpha={}^t(J^\alpha)\circ\d\L \in\df^1( P)
\quad , \quad
\omega_\L^\alpha=-\d\theta_\L^\alpha\in\df^2( P)\,.
    \end{equation*}
Finally, we can define the forms
$$
\eta_\L^\alpha=\d s^\alpha-\theta_\L^\alpha\in\Omega^1( P) \quad ,\quad \d\eta_\L^\alpha=\omega_\L^\alpha\in\Omega^2( P) \,.
$$
\end{definition}

In natural coordinates $(y^a,y^a_\alpha,s^\alpha)$ of $P$,
the local expressions of these elements are
\begin{equation}
\label{EetaL}
E_L=y^a_\alpha\frac{\partial\L}{\partial y^a_\alpha}-\L
\quad ,\quad
\eta_\L^\alpha=\d s^\alpha-\frac{\partial\L}{\partial y^a_\alpha}\d y^a\,.
\end{equation}

\begin{definition}
The {\sl\textbf{Legendre map}} associated with a Lagrangian $\L\in\Cinfty( P)$
is the fiber derivative of~$\L$, considered as a function on the vector bundle $\bar\tau_{Q\times\Rk}\colon P \to Q\times\Rk$;
that is, the map
${\cal F}\L\colon P\equiv\oplus^k\Tan Q\times\Rk\to  P^*\equiv\oplus^k\Tan^*Q\times\Rk$,
given by
$$
{\cal F}\L({v_1}_q,\ldots,{v_k}_q;s^\alpha)=\left({\cal F}\L_s({v_1}_q,\ldots,{v_k}_q),s^\alpha\right)
\ ; \quad
({v_1}_q,\ldots,{v_k}_q)\in\oplus^k\Tan Q\ ;
$$
where $\L_s$ denotes the restriction of the Lagrangian function to the fibers of the projection
$\bar\tau_2\colon\oplus^k\Tan Q\times\Rk\to\Rk$ (i.e.; with $s^\alpha$ ``freezed''),
and ${\cal F}\L_s\colon\oplus^k\Tan Q\to\oplus^k\Tan^*Q$ is the corresponding fiber derivative.
\end{definition}

The local expression of this map is\ 
$\displaystyle{\cal F}\L(y^a,y^a_\alpha,s^\alpha)=\Big(y^a,\frac{\partial\L}{\partial y^a_\alpha},s^\alpha\Big)$.

\begin{proposition}
\label{Prop-regLag}
For a Lagrangian function $\L$ the following conditions are equivalent:
\begin{enumerate}[{\rm(1)}]
\item
$( P,\eta_\L^\alpha)$ is a $k$-contact manifold.
\item
The Legendre map
${\cal FL}$ is a local diffeomorphism.
\item
The Hessian matrix $\ds \left(\frac{\partial^2{\cal L}}{\partial y^a_\alpha \partial y^b_\beta} \right)$
is nondegenerate everywhere.
\end{enumerate}
\end{proposition}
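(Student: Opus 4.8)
The plan is to prove the equivalences $(2)\Leftrightarrow(3)$ and $(1)\Leftrightarrow(2)$ separately, working mostly in the natural coordinates $(y^a,y^a_\alpha,s^\alpha)$ of $P$. The equivalence $(2)\Leftrightarrow(3)$ is the easy part: from the local expression $\mathcal{F}\L(y^a,y^a_\alpha,s^\alpha)=\big(y^a,\partial\L/\partial y^a_\alpha,s^\alpha\big)$, the Jacobian of $\mathcal{F}\L$ is block triangular, with the identity on the $y^a$ and $s^\alpha$ blocks and the Hessian block $\big(\partial^2\L/\partial y^a_\alpha\partial y^b_\beta\big)$ on the velocities; hence its determinant is (up to sign) the Hessian determinant, and $\mathcal{F}\L$ is a local diffeomorphism at a point exactly when the Hessian is nondegenerate there. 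By the inverse function theorem this is the whole content of $(2)\Leftrightarrow(3)$.

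For $(1)\Leftrightarrow(3)$, I would use the coordinate expression $\eta_\L^\alpha=\d s^\alpha-(\partial\L/\partial y^a_\alpha)\,\d y^a$ from \eqref{EetaL}, so that $\d\eta_\L^\alpha=\omega_\L^\alpha$ has the form $\d(\partial\L/\partial y^a_\alpha)\wedge\d y^a$, whose only ``$\d y^a_\beta$'' contributions come through the Hessian coefficients $\partial^2\L/\partial y^a_\alpha\partial y^b_\beta$. The first condition in Definition \ref{def:kcontact} (the corank-$k$ condition $\eta^1\wedge\cdots\wedge\eta^k\neq 0$) holds automatically, since the $\d s^\alpha$ terms already make $\eta_\L^1\wedge\cdots\wedge\eta_\L^k$ nonzero regardless of $\L$. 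The content is therefore in conditions (ii) and (iii): I would compute $\mathcal{D}^{\mathrm R}=\bigcap_\alpha\ker\widehat{\d\eta_\L^\alpha}$ explicitly. A vector $X=A^a\partial/\partial y^a+B^a_\beta\partial/\partial y^a_\beta+C^\beta\partial/\partial s^\beta$ lies in $\ker\widehat{\d\eta_\L^\alpha}$ iff $\inn{X}\d\eta_\L^\alpha=0$; contracting and collecting the coefficients of the independent 1-forms $\d y^a$, $\d y^a_\beta$, $\d s^\beta$ shows that the $\d y^a_\beta$-coefficient forces $A^c(\partial^2\L/\partial y^a_\alpha\partial y^c_\beta)=0$ for all $a,\beta$. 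When the Hessian is nondegenerate this gives $A^a=0$, and the remaining equations then pin down the $B^a_\beta$ in terms of the $C^\beta$, leaving exactly a $k$-dimensional space spanned by the Reeb-type fields $\partial/\partial s^\alpha$ (plus Hessian-determined corrections); so $\mathcal{D}^{\mathrm R}$ is regular of rank $k$, giving (ii), and since these vectors have nonzero $\d s^\alpha$-pairing they are transverse to $\mathcal{D}^{\mathrm C}=\ker\eta_\L^1\cap\cdots\cap\ker\eta_\L^k$, giving (iii). Conversely, if the Hessian is degenerate at a point, a nonzero vector $A^a\partial/\partial y^a$ (suitably completed) in its kernel produces an element of $\mathcal{D}^{\mathrm R}$ that either makes its rank exceed $k$ or makes $\mathcal{D}^{\mathrm R}$ non-regular, and one checks it can be chosen to also lie in $\mathcal{D}^{\mathrm C}$, violating (iii); so $(1)$ fails.

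The main obstacle is the careful bookkeeping in the converse direction of $(1)\Leftrightarrow(3)$: one must verify that degeneracy of the Hessian genuinely breaks one of the $k$-contact axioms rather than being harmless, and in particular that the offending vector can be arranged to sit in $\mathcal{D}^{\mathrm C}\cap\mathcal{D}^{\mathrm R}$ (a purely vertical vector in the velocity directions automatically annihilates all $\eta_\L^\alpha$ since those contain no $\d y^a_\beta$, so once $A^a=0$ forces nothing, the kernel vector of the Hessian sitting in $\partial/\partial y^a_\beta$ directions — wait, that annihilates $\d\eta_\L^\alpha$ too only if the $\d y^a$-coefficients also vanish). Concretely the clean argument is: any vector of the form $v=v^a_\beta\,\partial/\partial y^a_\beta$ satisfies $\inn{v}\eta_\L^\alpha=0$ for all $\alpha$, and $\inn{v}\d\eta_\L^\alpha$ equals (a sign times) $v^c_\beta(\partial^2\L/\partial y^a_\alpha\partial y^c_\beta)\,\d y^a$, which vanishes for all $\alpha$ precisely when the Hessian is degenerate; such a $v\neq 0$ then lies in $\mathcal{D}^{\mathrm C}\cap\mathcal{D}^{\mathrm R}$, contradicting (iii). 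This makes $(1)\Rightarrow(3)$ transparent, and the reverse implication is the rank/regularity computation sketched above. Rolling all three implications together closes the circle $(1)\Leftrightarrow(2)\Leftrightarrow(3)$.
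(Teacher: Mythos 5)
Your proposal is correct and follows essentially the same route as the paper and its references: in natural coordinates the Jacobian of ${\cal F}\L$ is block-triangular with the Hessian $\bigl(\partial^2\L/\partial y^a_\alpha\partial y^b_\beta\bigr)$ as the only nontrivial block, giving $(2)\Leftrightarrow(3)$, while the explicit computation of $\mathcal{D}^{\mathrm R}$ and of $\mathcal{D}^{\mathrm C}\cap\mathcal{D}^{\mathrm R}$ for $\eta_\L^\alpha=\d s^\alpha-(\partial\L/\partial y^a_\alpha)\,\d y^a$ — in particular the observation that a Hessian kernel vector $v=v^a_\beta\,\partial/\partial y^a_\beta$ lies in $\mathcal{D}^{\mathrm C}\cap\mathcal{D}^{\mathrm R}$ and violates axiom (iii) — gives $(1)\Leftrightarrow(3)$, exactly as in the terse proof the paper gives for the analogous $k$-cocontact Proposition \ref{prop:regular-lagrangian}. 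The hesitant aside in your final paragraph is superseded by your own ``clean argument,'' which is the correct one; no gap remains.
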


\begin{definition}
A Lagrangian function $\L$ is said to be {\sl\textbf{regular}} if the equivalent
conditions in Proposition \ref{Prop-regLag} hold.
Otherwise, $\L$ is a {\sl\textbf{singular}} Lagrangian.
In particular, 
$\L$ is said to be {\sl\textbf{hyperregular}} 
if ${\cal FL}$ is a global diffeomorphism.
\end{definition}

\begin{definition}
The pair  $( P,\L)$ is called a {\sl\textbf{$k$-contact Lagrangian system}}.
It defines a
$k$-contact Hamiltonian system $(P,\eta^\alpha_\L,E_\L)$.
\end{definition}

For a $k$-contact Lagrangian system $(P,\L)$; i.e., when $\L$ is regular,
the Reeb vector fields $(\Reeb_\L)_\alpha\in\X(P)$ 
for this system are the unique solution to the equations \eqref{reebcontact}, which now read as
$$
\inn{(\Reeb_\L)_\alpha}\d\eta^\beta_\L=0\quad ,\quad
\inn{(\Reeb_\L)_\alpha}\eta^\beta_\L=\delta_\alpha^\beta \,.
$$
In this case, there exists the inverse 
$W^{ij}_{\alpha\beta}$ of the Hessian matrix,
namely $\displaystyle W^{ab}_{\alpha\beta}\frac{\partial^2\L}{\partial y^b_\beta \partial y^c_\gamma}=\delta^a_c\delta^\gamma_\alpha$,
and then we obtain that
$$
(\Reeb_\L)_\alpha=\frac{\partial}{\partial s^\alpha}-W^{ba}_{\gamma\beta}\frac{\partial^2\L}{\partial s^\alpha\partial y^b_\gamma}\,\frac{\partial}{\partial y^a_\beta} \,.
$$

%%%%%%%%%%%%%%%%%%%%%%%%%%%%%%%%%%%%%%%%%%%%%%%%%%%%%%%%%%%%
\subsubsection{The $k$-contact Lagrangian equations}

The field equations for the Lagrangian formalism of action-dependent field theories can be expressed in the two alternative ways stated in Definition \ref{fieldeqs}:

\begin{definition}
Let $( P,\L)$ be a $k$-contact Lagrangian system.
\begin{enumerate}[{\rm(1)}]
\item 
The {\sl\textbf{$k$-contact Euler--Lagrange equations}} for holonomic maps 
$\bm\psi\colon\Rk\to P$ are:
\begin{equation}
\label{ELkcontact}
 \begin{cases}
\inn{\bm\psi'_\alpha}\d\eta_\L^\alpha=
 \Big(\d E_\L-(\Lie_{(\Reeb_\L)_\alpha}E_\L\,\eta_\L^\alpha\Big)\circ\bm\psi \,,\\
\inn{\bm\psi'_\alpha}\eta_\L^\alpha=
-E_\L\circ\mbox{\boldmath$\psi$} \,.
\end{cases}
\end{equation}
\item
The {\sl\textbf{$k$-contact Lagrangian equations}} for holonomic $k$-vector fields
${\bf X}_\L=((X_\L)_\alpha)$ in $ P$ are
\begin{equation} \begin{cases}
\inn{(X_\L)_\alpha}\d\eta_\L^\alpha=\d E_\L- (\Lie_{(\Reeb_\L)_\alpha}E_\L)\eta_\L^\alpha \,,\\
\inn{(X_\L)_\alpha}\eta_\L^\alpha=-E_\L \,.
    \end{cases}
    \label{fieldLcontact}
\end{equation}
A $k$-vector field which is a solution to these equations is called a
{\sl\textbf{Lagrangian $k$-vector field}}.
These holonomic $k$-vector fields are called {\sl\textbf{Euler--Lagrange $k$-vector fields}}.
\end{enumerate}
\end{definition}

\begin{proposition}
Let $( P,\L)$ be a $k$-contact Lagrangian system.
If $\bfX_\L$ is a holonomic $k$-vector field (that is, a {\sc sopde}) solution to the Lagrangian equations \eqref{fieldLcontact}, then its integral sections are the solutions to the multicontact Euler--Lagrange field equations for holonomic sections \eqref{ELkcontact} associated with $\mathcal{L}$.

In addition, if the Lagrangian system is regular (that is, $k$-contact) then:
\begin{enumerate}[\rm (1)]
\item
The $k$-contact Lagrangian field equations for $k$-vector fields \eqref{fieldLcontact} admit solutions on $P$.
(The solutions are not unique if $m>1$).
\item
Every $k$-vector field $\bfX_\L$ that is
solution to equations \eqref{fieldLcontact} is semi-holonomic.
\end{enumerate}

\end{proposition}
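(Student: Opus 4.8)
The plan is to establish the three assertions by working separately on the semi-holonomy property, the existence of solutions, and the passage from $k$-vector fields to integral maps, relying on the coordinate expressions already developed in the text.

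First I would prove the semi-holonomy statement, which underpins the rest. Let $\bfX_\L = ((X_\L)_\alpha)$ be a solution of \eqref{fieldLcontact}, written in natural coordinates as $(X_\L)_\alpha = X_\alpha^a\,\partial/\partial y^a + X_{\alpha\beta}^a\,\partial/\partial y^a_\beta + X_\alpha^\beta\,\partial/\partial s^\beta$. The key computation is to expand the equation $\inn{(X_\L)_\alpha}\d\eta_\L^\alpha = \d E_\L - (\Lie_{(\Reeb_\L)_\alpha}E_\L)\eta_\L^\alpha$ using the explicit form $\eta_\L^\alpha = \d s^\alpha - (\partial\L/\partial y^a_\alpha)\,\d y^a$ from \eqref{EetaL} and $E_\L = y^a_\alpha\,\partial\L/\partial y^a_\alpha - \L$. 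Comparing the coefficients of $\d y^a_\beta$ on both sides, the right-hand side contributes only through $\d E_\L$, whose $\d y^a_\beta$-component is $y^c_\gamma\,\partial^2\L/\partial y^a_\beta\partial y^c_\gamma$ (the $-\d\L$ term cancels the undifferentiated piece), while the left-hand side produces $(X_\alpha^c - y^c_\alpha)\cdot(\text{Hessian-type terms})$. Regularity of $\L$ (the nondegeneracy of the Hessian, Proposition \ref{Prop-regLag}) then forces $X_\alpha^a = y^a_\alpha$, which is precisely the semi-holonomy condition \eqref{localsode1}, equivalently $J^\alpha((X_\L)_\alpha) = \Delta$. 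This is the step I expect to be the main obstacle: one must carefully track which terms in $\inn{(X_\L)_\alpha}\d\eta_\L^\alpha$ carry $\d y^a_\beta$ and verify that the contributions of the Reeb term and of the $\inn{(X_\L)_\alpha}\d s^\alpha$-type pieces do not interfere, so that the Hessian can be cancelled cleanly.

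Next, for existence of solutions when $\L$ is regular, I would invoke the general result already recorded for $k$-contact Hamiltonian systems: since $(P,\eta_\L^\alpha)$ is a $k$-contact manifold by Proposition \ref{Prop-regLag} and $(P,\eta_\L^\alpha,E_\L)$ is the associated $k$-contact Hamiltonian system, the Proposition stating that $k$-contact Hamiltonian equations \eqref{fieldcontact} always admit (non-unique) solutions applies verbatim with $\H = E_\L$. This gives a solution $\bfX_\L$ of \eqref{fieldLcontact}; by the previous paragraph it is automatically semi-holonomic. The non-uniqueness when $m>1$ (here $m$ playing the role of $k$, matching the earlier statement that solutions are not unique if $k>1$) is inherited from the same Hamiltonian result, since the undetermined directions in \eqref{coor0} survive under the identification with the Lagrangian system.

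Finally, the statement that integral sections of a holonomic solution $\bfX_\L$ satisfy the Euler--Lagrange field equations \eqref{ELkcontact} is the Lagrangian instance of the already-proven Proposition relating integrable solutions of \eqref{fieldcontact} to solutions of \eqref{hec}: if $\bm\psi$ is an integral map of $\bfX_\L$, then $\bm\psi'_\alpha = (X_\L)_\alpha\circ\bm\psi$ by \eqref{integsec}, and contracting the two lines of \eqref{fieldLcontact} along $\bm\psi$ and using this identity yields exactly \eqref{ELkcontact}; holonomy of $\bfX_\L$ guarantees that $\bm\psi$ is of the form \eqref{holsecP}, so these are genuinely Euler--Lagrange equations for a field $\phi\colon\Rk\to Q$. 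No new difficulty arises here beyond bookkeeping, so the weight of the proof rests on the Hessian argument in the first paragraph.
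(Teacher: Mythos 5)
Your proposal is correct and its core coincides with the paper's proof: the paper also expands \eqref{fieldLcontact} in natural coordinates, isolates the $\d y^a_\beta$-components (its equation \eqref{A-E-L-eqs10}, which is exactly your $(X_\alpha^a-y^a_\alpha)\,\partial^2\L/\partial y^b_\beta\partial y^a_\alpha=0$), and uses regularity of the Hessian to force the semi-holonomy condition $X_\alpha^a=y^a_\alpha$; your bookkeeping of the Reeb term (which only contributes along $\eta_\L^\alpha$, hence has no $\d y^a_\beta$-component) is the same cancellation the paper relies on.

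The only place you diverge is in how existence and the map-level statement are sourced. The paper stays inside the Lagrangian coordinate computation: after imposing semi-holonomy, it observes that the remaining equations \eqref{eq2} can always be solved for the second-order coefficients $(X_\L)_{\alpha\beta}^a$ because the Hessian is invertible (this gives existence and the non-uniqueness for $k>1$), and then specializes to integral maps to recover \eqref{ELkcontact}. You instead cite the general $k$-contact Hamiltonian propositions (existence of solutions of \eqref{fieldcontact}, and the equivalence between field- and map-level equations for integrable $k$-vector fields) applied to $(P,\eta_\L^\alpha,E_\L)$. That is legitimate, since for regular $\L$ this is a $k$-contact Hamiltonian system, and it buys brevity; what it gives up is the explicit solvability of the equations for $(X_\L)_{\alpha\beta}^a$, which in the paper's version is what makes transparent both the non-uniqueness and the fact that the integral sections of a \textsc{sopde} solution satisfy precisely the Herglotz--Euler--Lagrange equations \eqref{ELeqs1}. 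Either route is acceptable.
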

\begin{proof}
In a natural chart of coordinates of $P$, equations \eqref{ELkcontact} read
\begin{equation}
\label{ELeqs1}
\frac{\partial}{\partial \alpha}
\left(\frac{\displaystyle\partial \L}{\partial
y^a_\alpha}\circ{\bm\psi}\right)=
\left(\frac{\partial \L}{\partial y^a}+
\displaystyle\frac{\partial\L}{\partial s^\alpha}\displaystyle\frac{\partial\L}{\partial y^a_\alpha}\right)\circ{\bm\psi}
 \quad  , \quad
\parder{s^\alpha}{\alpha}=\L\circ{\bm\psi} \ ;
\end{equation}
meanwhile, for a $k$-vector field ${\bf X}_\L=((X_\L)_\alpha)$, with
$$\displaystyle (X_\L)_\alpha= 
(X_\L)_\alpha^a\frac{\displaystyle\partial}{\displaystyle
\partial y^a}+(X_\L)_{\alpha\beta}^a\frac{\displaystyle\partial}{\displaystyle\partial y^a_\beta}+(X_\L)_\alpha^\beta\,\frac{\partial}{\partial s^\beta}\ ,$$
the Lagrangian equations \eqref{fieldLcontact} are
\begin{align}
0 &=
\displaystyle \L + 
\frac{\partial\L}{\partial y^a_\alpha}\Big((X_\L)_\alpha^a-y^a_\alpha\Big)-(X_\L)_\alpha^\alpha\,,
\label{A-E-L-eqs40}
\\
0 &=
\displaystyle \Big((X_\L)_\alpha^a-y^a_\alpha\Big)
\frac{\partial^2\L}{\partial y^a_\alpha\partial s^\beta} \,,
\label{A-E-L-eqs20}
\\
0 &=
\displaystyle \Big((X_\L)_\alpha^a-y^a_\alpha\Big)
\frac{\partial^2\L}{\partial y^b_\beta\partial y^a_\alpha}
\label{A-E-L-eqs10} \,,
\\
0 &=
\displaystyle
\Big((X_\L)_\alpha^a-y^a_\alpha\Big)
\frac{\partial^2\L}{\partial y^b\partial y^a_\alpha}
+\frac{\partial\L}{\partial y^b}
-\frac{\partial^2\L}{\partial s^\beta\partial y^b_\alpha}(X_\L)_\alpha^\beta
\nonumber
\\ & \quad
-\frac{\partial^2\L}{\partial y^a \partial y^b_\alpha}(X_\L)_\alpha^a
-\frac{\partial^2\L}{\partial y^a_\beta\partial y^b_\alpha}(X_\L)_{\alpha\beta}^a
+\frac{\partial\L}{\partial s^\alpha}
\frac{\partial\L}{\partial y^b_\alpha}\,.
\label{A-E-L-eqs30}
\end{align}
If $\bfX_\L$ is a {\sc sopde}, then $y^a_\alpha=(X_\L)_\alpha^a$;
therefore, equations \eqref{A-E-L-eqs20} hold identically, and
\eqref{A-E-L-eqs40} and \eqref{A-E-L-eqs30} give 
\begin{align}
\label{eq1}
(X_\L)_\alpha^\alpha&= \L\,,
\\
\displaystyle
\frac{\partial\L}{\partial y^b}
-\frac{\partial^2\L}{\partial s^\beta\partial y^b_\alpha}(X_\L)_\alpha^\beta
-\frac{\partial^2\L}{\partial y^a \partial y^b_\alpha}y_\alpha^a
-\frac{\partial^2\L}{\partial y^a_\beta\partial y^b_\alpha}(X_\L)_{\alpha\beta}^a
&=-\frac{\partial\L}{\partial s^\alpha}
\frac{\partial\L}{\partial y^b_\alpha}\,.
\label{eq2}
\end{align}
Finally, for the holonomic integrable maps of ${\bf X}_\L$,
these last equations lead to the Euler--Lagrange equations
\eqref{ELeqs1} for its integral maps.
In addition, the first equation \eqref{eq1} relates the variation 
of the ``dissipation coordinates'' $s^\alpha$ to the Lagrangian function.

If $\L$ is a regular Lagrangian, equations \eqref{A-E-L-eqs10}
lead to $y^a_\alpha=(X_\L)_\alpha^a$, which is the {\sc sopde} condition for ${\bf X}_\L$.
Furthermore, equations \eqref{eq2} have always solution for coefficients $(X_\L)_{\alpha\beta}^b$
(not necessarily unique, unless $k=1$), since the Hessian matrix 
$\ds \left(\frac{\partial^2{\cal L}}{\partial y^a_\alpha \partial y^b_\beta} \right)$
is regular everywhere.
\end{proof}

%%%%%%%%%%%%%%%%%%%%%%%%%%%%%%%%%%%%%%%%%%%%%%%%%%%%%%%%%%%%%
\subsection{\texorpdfstring{$k$}{}-Contact Hamiltonian formalism}
%%%%%%%%%%%%%%%%%%%%%%%%%%%%%%%%%%%%%%%%%%%%%%%%%%%%%%%%%%%%%

Next, we use the developments stated in Section \ref{secHamsys} to develop the Hamiltonian formalism for action-dependent field theories.

In the $k$-contact ambient,
action-dependent Hamiltonian field theories are developed in a manifold which is built enlarging the $k$-cotangent bundle of a manifold $Q$,
as in the Lagrangian setting.
Thus, we consider the bundle $ P^*\equiv\oplus^k \Tan^*Q\times\Rk$;
whose natural coordinates are $(y^a,p_a^\alpha,s^\alpha)$.
We have the canonical projections
\begin{align}
& \widetilde\tau_1\colon \oplus^k\Tan^*Q\times\Rk\to\oplus^k\Tan^*Q
\,, \ &&
\widetilde\tau_2\colon\oplus^k\Tan^*Q\times\Rk\to\Rk
 \,, \\
& s^\alpha\colon\oplus^k\Tan^*Q\times\Rk\to\R 
\,, \ &&
\widetilde\tau_{Q\times\Rk}\colon \oplus^k\Tan^*Q\times\Rk\to Q\times \Rk \,.
\end{align}
Regular or $k$-contact Hamiltonian field theories take place in the canonical $k$-contact manifold
$(\oplus^k\Tan^*Q\times\Rk,\theta^\alpha)$,
giving a {\sl Hamiltonian function} $\mathcal{H}\in\Cinfty(\oplus^k\Tan^*Q\times\Rk)$.

\begin{remark}
[{\sl The canonical $k$-contact Hamiltonian system associated with a $k$-contact Lagrangian system\/}]
In particular, if $(P=\oplus^k\Tan Q\times\Rk,\L)$ is a $k$-contact Lagrangian system,
we have that ${\cal FL}$ is a local or global diffeomorphism between $P$ and $ P^*$,
depending on $\L$ to be a regular or hyper-regular  Lagrangian.
Then, bearing in mind the coordinate expressions \eqref{caneta}
and \eqref{EetaL} of $\eta^\alpha$ $\eta^\alpha_\L$,
and of the Legendre map, we have that
$$
\theta_\L^\alpha={\cal FL}^{\;*}\theta^\alpha
\,,\qquad
\omega_\L^\alpha={\cal FL}^{\;*}\omega^\alpha
\, ,
$$
where $\omega^\alpha=-\d\theta^\alpha$.
Furthermore, there exists (maybe locally) a function 
$\H\in\Cinfty( P^*)$  such that 
$$\H=E_\L\circ{\cal F}\L^{-1}\,.$$
Then, $( P^*,\eta^\alpha,\H)$ is the {\sl\textbf{canonical $k$-contact Hamiltonian system}}
associated with the $k$-contact Lagrangian system
$(P,\L)$ and,
for it, ${\cal FL}_*({\Reeb}_\L)_\alpha={\Reeb}_\alpha$.
Therefore, if $\mathbf{X}_\L$ is an Euler--Lagrange $k$-vector field
associated with $\L$ in $ P$, then 
${\cal FL}_*\mathbf{X}_\L={\bf X}_\H$ is a
contact Hamiltonian $k$-vector field associated with $\H$ in
$ P^*$, and conversely.    
\end{remark}

%%%%%%%%%%%%%%%%%%%%%%%%%%%%%%%%%%%%%
\section{\texorpdfstring{$k$}{}-Cocontact field theories}
\label{kcocontactf}
%%%%%%%%%%%%%%%%%%%%%%%%%%%%%%%%%%%%%%%%%%%%%%%%%%

This section reviews the basics of $k$-cocontact manifolds and their applications in modeling non-autonomous action-dependent field theories (see \cite{Riv_23} for details).

\subsection{\texorpdfstring{$k$}{}-Cocontact structures and \texorpdfstring{$k$}{}-cocontact Hamiltonian systems}

First, we summarize the fundamental concepts and properties about 
$k$-cocontact manifolds and $k$-cocontact Hamiltonian systems.

\subsubsection{\texorpdfstring{$k$}{}-Cocontact structures}\label{sec:kcocostruc}

Given a $N$-dimensional manifold $\mathcal{M}$,
let $\tau^1,\dotsc,\tau^k\in\Omega^1(\mathcal{M})$ be a family of closed one-forms on $\mathcal{M}$ and let $\eta^1,\dotsc,\eta^k\in\Omega^1(\mathcal{M})$ be a family of one-forms on $\mathcal{M}$. We will use the following notations:
\begin{itemize}
    \item $\C^\rmC = \langle\eta^1,\dotsc,\eta^k\rangle\subset\cT \mathcal{M}$,
    \item $\D^\rmC = \left(\C^\rmC\right)^\circ = \ker\widehat{\eta^1}\cap\dotsb\cap\ker\widehat{\eta^k}\subset\T \mathcal{M}$,
    \item $\D^\rmR = \ker\widehat{\d\eta^1}\cap\dotsb\cap\ker\widehat{\d\eta^k}\subset\T \mathcal{M}$,
    \item $\C^\rmR = \left(\D^\rmR\right)^\circ\subset\cT \mathcal{M}$,
    \item $\C^\rmS = \langle\tau^1,\dotsc,\tau^k\rangle\subset\cT \mathcal{M}$,
    \item $\D^\rmS = \left(\C^\rmS\right)^\circ = \ker\widehat{\tau^1}\cap\dotsb\cap\ker\widehat{\tau^k}\subset\T \mathcal{M}$.
\end{itemize}
With these notations, we can define the notion of $k$-cocontact structure:
\begin{definition}\label{dfn:k-cocontact-manifold}
    A {\sl\textbf{$k$-cocontact structure}} on a manifold $\mathcal{M}$ is a family of $k$ closed differential one-forms $\tau^1,\dotsc,\tau^k\in\Omega^1(\mathcal{M})$ and a family of $k$ differential one-forms $\eta^1,\dotsc,\eta^k\in\Omega^1(\mathcal{M})$ such that, with the preceding notations,
    \begin{enumerate}[{\rm(1)}]
        \item $\D^\rmC\subset\T \mathcal{M}$ is a regular distribution of corank $k$,
        \item $\D^\rmS\subset\T \mathcal{M}$ is a regular distribution of corank $k$,
        \item $\D^\rmR\subset\T \mathcal{M}$ is a regular distribution of rank $2k$,
        \item $\D^\rmC \cap\D^\rmS$ is a regular distribution of corank $2k$, $\D^\rmC \cap\D^\rmR$ is a regular distribution of rank $k$, and $\D^\rmS \cap\D^\rmR$ is a regular distribution of rank $k$,
        \item $\D^\rmC\cap\D^\rmR\cap\D^\rmS = \{0\}$.
    \end{enumerate}
    We call $\C^\rmC$ the {\sl\textbf{contact codistribution}}, $\D^\rmC$ the {\sl\textbf{contact distribution}}, $\D^\rmR$ the {\sl\textbf{Reeb distribution}}, $\C^\rmR$ the {\sl\textbf{Reeb codistribution}}, $\C^\rmS$ the {\sl\textbf{space-time codistribution}} and $\D^\rmS$ the {\sl\textbf{space-time distribution}}.
    
    A manifold $\mathcal{M}$ endowed with a $k$-cocontact structure $\tau^1,\dotsc,\tau^k,\eta^1,\dotsc,\eta^k\in\Omega^1(\mathcal{M})$ is a {\sl\textbf{$k$-cocontact manifold}}.
\end{definition}

Notice that the condition $\D^\rmC\cap\D^\rmR\cap\D^\rmS = \{0\}$ implies that
$$ \cT \mathcal{M} = \C^\rmC \oplus \C^\rmR \oplus \C^\rmS \,. $$

\begin{remark}
    In the particular case $k=1$, a 1-cocontact structure is given by two one-forms $\tau,\eta$, with $\d\tau=0$. The conditions in Definition \ref{dfn:k-cocontact-manifold} mean the following: (1) $\eta\neq 0$ everywhere, (2) $\tau\neq 0$ everywhere, (4) $\tau\wedge\eta\neq 0$, (5) $\ker\widehat{\tau}\cap\ker\widehat{\eta}\cap\ker\widehat{\d\eta} = \{0\}$, which implies that $\ker\widehat{\d\eta}$ has rank 0, 1 or 2, and (3) implies that $\ker\widehat{\d\eta}$ has rank 2. Thus, a 1-cocontact structure coincides with the cocontact structure introduced in \cite{LGGMR_23} to describe time-dependent contact mechanics. 
\end{remark}

\begin{lemma}
    The Reeb distribution $\D^\rmR$ and the space-time distribution $\D^\rmS$ are involutive, and therefore integrable.
\end{lemma}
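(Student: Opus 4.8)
The plan is to treat the two distributions separately, since both reduce to showing that a kernel-type distribution associated to a family of one-forms is closed under the Lie bracket. For the space-time distribution $\D^\rmS = \ker\widehat{\tau^1}\cap\dotsb\cap\ker\widehat{\tau^k}$, this is essentially immediate from the hypothesis that each $\tau^\alpha$ is closed: if $X,Y\in\D^\rmS$, then for every $\alpha$ we have $\inn{[X,Y]}\tau^\alpha = \Lie_X(\inn{Y}\tau^\alpha) - \inn{Y}(\Lie_X\tau^\alpha)$, and using Cartan's formula $\Lie_X\tau^\alpha = \inn{X}\d\tau^\alpha + \d(\inn{X}\tau^\alpha)$ together with $\d\tau^\alpha = 0$ and $\inn{X}\tau^\alpha = \inn{Y}\tau^\alpha = 0$, every term vanishes. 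Hence $[X,Y]\in\D^\rmS$, so $\D^\rmS$ is involutive; by Frobenius it is integrable (it is regular by condition (2) of Definition \ref{dfn:k-cocontact-manifold}).

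For the Reeb distribution $\D^\rmR = \ker\widehat{\d\eta^1}\cap\dotsb\cap\ker\widehat{\d\eta^k}$, the argument is the same as in the $k$-contact case (Theorem \ref{reebvf}(1)), which we may invoke directly in spirit: if $X,Y\in\D^\rmR$, i.e.\ $\inn{X}\d\eta^\alpha = \inn{Y}\d\eta^\alpha = 0$ for all $\alpha$, then for each $\alpha$,
\[
\inn{[X,Y]}\d\eta^\alpha
= \Lie_X(\inn{Y}\d\eta^\alpha) - \inn{Y}\Lie_X\d\eta^\alpha
= -\inn{Y}\bigl(\inn{X}\d\d\eta^\alpha + \d\,\inn{X}\d\eta^\alpha\bigr)
= 0,
\]
where the first term is zero because $\d\d\eta^\alpha = 0$ and the second because $\inn{X}\d\eta^\alpha = 0$. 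Thus $[X,Y]\in\D^\rmR$, so $\D^\rmR$ is involutive; it is regular of rank $2k$ by condition (3), so Frobenius gives integrability.

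I do not expect any genuine obstacle here: the content is entirely the two-line Cartan-calculus computation in each case, and the regularity needed to apply the Frobenius theorem is built into the definition of a $k$-cocontact structure. The only point requiring a word of care is that involutivity must be checked on (local) sections of the distribution rather than on individual tangent vectors, but since $\D^\rmR$ and $\D^\rmS$ are regular distributions this is automatic and the identity $\inn{[X,Y]}\alpha = \Lie_X(\inn{Y}\alpha) - \inn{Y}\Lie_X\alpha$ for any one-form $\alpha$ is all that is used.
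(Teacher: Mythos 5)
Your proof is correct and is essentially the standard argument: the paper itself states this lemma without proof (deferring to the reference on the $k$-cocontact formalism), and the expected argument is exactly your Cartan-calculus computation, namely $\inn{[X,Y]}\tau^\alpha=0$ from $\d\tau^\alpha=0$ for $\D^\rmS$, and $\inn{[X,Y]}\d\eta^\alpha=0$ from $\d\d\eta^\alpha=0$ for $\D^\rmR$, followed by Frobenius, with the regularity required for Frobenius supplied by conditions (2) and (3) of Definition \ref{dfn:k-cocontact-manifold}. No gaps.
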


Thus, the distribution $\D^\rmR\cap\D^\rmS$ is also involutive, and therefore integrable. Moreover, the distribution $\D^\rmR\cap\D^\rmC$ is also involutive and integrable. The following theorem characterizes a family of vector fields spanning the Reeb distribution $\D^\rmR$.

\begin{theorem}\label{thm:Reeb-vector-fields}
    Let $(\mathcal{M},\tau^\alpha,\eta^\alpha)$ be a $k$-cocontact manifold. Then, there exist a unique family $R^x_1,\dotsc,R^x_k,R^s_1,\dotsc,R^s_k\in\X(\mathcal{M})$ such that
    \begin{gather}
        \inn{R^x_\alpha}\d\eta^\beta = 0\,,\qquad \inn{R^x_\alpha}\eta^\beta = 0\,,\qquad \inn{R^x_\alpha}\tau^\beta = \delta_\alpha^\beta\,,\\
        \inn{R^s_\alpha}\d\eta^\beta = 0\,,\qquad \inn{R^s_\alpha}\eta^\beta = \delta_\alpha^\beta\,,\qquad \inn{R^s_\alpha}\tau^\beta = 0\,.
    \end{gather}
    The vector fields $R^x_\alpha$ are called {\sl\textbf{space-time Reeb vector fields}}. The vector fields $R^s_\alpha$ are called {\sl\textbf{contact Reeb vector fields}}.

    Moreover, the Reeb vector fields commute and span the Reeb distribution introduced in Definition \ref{dfn:k-cocontact-manifold}:
    $$ \D^\rmR = \langle R^x_1,\dotsc,R^x_k,R^s_1,\dotsc,R^s_k\rangle\,, $$
    motivating its name.
\end{theorem}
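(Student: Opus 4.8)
The plan is to establish the existence, uniqueness, commutativity, and spanning property of the Reeb vector fields by a bundle-theoretic argument that mirrors the $k$-contact case (Theorem \ref{reebvf}), enriched to handle the extra space-time directions. The key linear-algebra fact I would extract first is that, under the regularity conditions (1)--(5) of Definition \ref{dfn:k-cocontact-manifold}, the tangent space at each point splits as $\T_p\mathcal{M} = \D^\rmR_p \oplus \D^\rmC_p\cap\D^\rmS_p$ — indeed from $\cT\mathcal{M} = \C^\rmC\oplus\C^\rmR\oplus\C^\rmS$ one dualizes, using $(\D^\rmC\cap\D^\rmS)^\circ = \C^\rmC + \C^\rmS$ and $(\D^\rmR)^\circ = \C^\rmR$, together with condition (5) which guarantees the sum is direct. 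This identifies $\D^\rmR$ as a complement to $\D^\rmC\cap\D^\rmS$ inside a distribution on which the relevant forms behave well.

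First I would prove existence and uniqueness pointwise. Consider the vector bundle morphism $\Phi\colon \T\mathcal{M}\to\R^{3k}$ sending $v\mapsto (\inn{v}\d\eta^\beta|_{\text{restricted appropriately}},\inn{v}\eta^\beta,\inn{v}\tau^\beta)$. The equations defining $R^x_\alpha$ (resp. $R^s_\alpha$) prescribe the image under a closely related map; uniqueness amounts to showing the common kernel of $\widehat{\d\eta^\beta}$, $\widehat{\eta^\beta}$, $\widehat{\tau^\beta}$ is trivial, which is exactly condition (5) rephrased (a vector annihilated by all three families lies in $\D^\rmR\cap\D^\rmC\cap\D^\rmS = \{0\}$). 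For existence, I would work inside $\D^\rmR$: since $\inn{v}\d\eta^\beta = 0$ for all $v\in\D^\rmR$ — wait, that is not automatic, so more carefully: restrict attention to finding $R^x_\alpha \in \D^\rmR\cap\D^\rmC$ (automatically killing $\d\eta^\beta$ and $\eta^\beta$) and solve $\inn{R^x_\alpha}\tau^\beta = \delta^\beta_\alpha$; this is solvable precisely because condition (4) says $\D^\rmC\cap\D^\rmR$ has rank $k$ and pairs nondegenerately with $\C^\rmS$ (one must check $\tau^1,\dots,\tau^k$ restrict to a coframe on $\D^\rmC\cap\D^\rmR$, which follows from (5): a vector in $\D^\rmC\cap\D^\rmR$ in the kernel of all $\tau^\beta$ lies in $\D^\rmC\cap\D^\rmR\cap\D^\rmS=\{0\}$, and dimension count via (4) gives surjectivity). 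The construction of $R^s_\alpha\in\D^\rmR\cap\D^\rmS$ with $\inn{R^s_\alpha}\eta^\beta = \delta^\beta_\alpha$ is symmetric. Smoothness of the resulting vector fields follows because they are obtained by inverting a smoothly-varying invertible matrix (the pairing matrix is nonsingular by the regularity assumptions, which are open conditions).

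Next I would prove the spanning claim $\D^\rmR = \langle R^x_\alpha, R^s_\alpha\rangle$: the $2k$ vector fields are pointwise linearly independent (if a linear combination vanishes, pairing with $\tau^\beta$ and $\eta^\beta$ forces all coefficients to zero), they all lie in $\D^\rmR$ by construction, and $\D^\rmR$ has rank $2k$ by condition (3); hence equality. Finally, for commutativity I would use the involutivity of $\D^\rmR$ (the preceding Lemma), so $[R^x_\alpha, R^x_\beta]$, $[R^x_\alpha, R^s_\beta]$, $[R^s_\alpha, R^s_\beta]$ all lie in $\D^\rmR$, and then pin them down by contracting with the defining one-forms: using the Cartan identity $\inn{[X,Y]}\gamma = \Lie_X\inn{Y}\gamma - \inn{Y}\Lie_X\gamma = \Lie_X\inn{Y}\gamma - \inn{Y}(\d\inn{X}\gamma + \inn{X}\d\gamma)$ on $\gamma = \tau^\delta, \eta^\delta, \d\eta^\delta$, each term vanishes (the contractions $\inn{R}\tau^\delta$ etc. are constants $0$ or $\delta$, so their exterior derivatives vanish; $\d\tau^\delta = 0$ by hypothesis; $\inn{R}\d\eta^\delta = 0$ by the defining relations and $\d(\d\eta^\delta) = 0$). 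Thus the bracket is annihilated by all the one-forms spanning $\C^\rmC,\C^\rmR,\C^\rmS$, hence is zero.

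The main obstacle I anticipate is the bookkeeping in the existence step: one must carefully verify that the restriction of $(\tau^\beta)$ to $\D^\rmC\cap\D^\rmR$ — and dually of $(\eta^\beta)$ to $\D^\rmS\cap\D^\rmR$ — is genuinely nondegenerate, extracting this from the somewhat implicit rank conditions (4)--(5) rather than assuming it. This is a finite-dimensional linear-algebra lemma but requires assembling conditions (1)--(5) in the right order; once it is in place, the smoothness, spanning, and commutativity steps are routine.
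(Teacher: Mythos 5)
Your proposal is correct, and in fact the survey states this theorem without proof (deferring to the reference on $k$-cocontact manifolds), so there is no in-paper argument to compare against; your route — solving for $R^x_\alpha$ inside the rank-$k$ intersection $\D^\rmC\cap\D^\rmR$ and for $R^s_\alpha$ inside $\D^\rmS\cap\D^\rmR$, using condition (5) to get nondegeneracy of the restricted pairings with $\tau^\beta$, resp.\ $\eta^\beta$, and for uniqueness, then counting ranks for the spanning claim and killing the brackets with the Cartan/Lie-derivative identity — is exactly the standard argument for this statement. One small remark: your parenthetical ``wait, that is not automatic'' is misplaced, since $\inn{v}\d\eta^\beta=0$ \emph{is} automatic for $v\in\D^\rmR$ by the very definition $\D^\rmR=\ker\widehat{\d\eta^1}\cap\dotsb\cap\ker\widehat{\d\eta^k}$; what is not automatic is $\inn{v}\eta^\beta=0$, which is precisely why you intersect with $\D^\rmC$ — and that is what your argument correctly does, so nothing is affected.
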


The following proposition proves the existence of a special set of coordinates, the so-called adapted coordinates.

\begin{proposition}\label{prop:adapted-coordinates}
    Consider a $k$-cocontact manifold $(\mathcal{M},\tau^\alpha,\eta^\alpha)$. Then, around every point in $\mathcal{M}$, there exist local coordinates $(\alpha,z^I,s^\alpha)$ such that
    $$ R_\alpha^x = \parder{}{x^\alpha}\,,\qquad \tau^\alpha = \d x^\alpha\,,\qquad R_\alpha^s = \parder{}{s^\alpha}\,,\qquad \eta^\alpha = \d s^\alpha - f_I^\alpha(z^J)\d z^I\,, $$
    where the functions $f_I^\alpha$ only depend on the coordinates $z^I$. These coordinates are called {\sl\textbf{adapted coordinates}}.
\end{proposition}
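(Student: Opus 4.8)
The plan is to build the adapted coordinates in stages, exploiting the integrability results already established. First I would use the fact that the space-time one-forms $\tau^1,\dotsc,\tau^k$ are closed and $\tau^1\wedge\dotsb\wedge\tau^k\neq 0$ (which follows from condition (2), since $\D^\rmS$ has corank $k$). By the simultaneous Poincaré/Frobenius argument for independent closed one-forms, there exist local functions $x^1,\dotsc,x^k$ with $\tau^\alpha = \d x^\alpha$. Next, the space-time Reeb vector fields $R^x_\alpha$ from Theorem \ref{thm:Reeb-vector-fields} commute and satisfy $\inn{R^x_\alpha}\tau^\beta = \delta_\alpha^\beta$; similarly the contact Reeb vector fields $R^s_\alpha$ commute among themselves, commute with the $R^x_\beta$ (all of this is part of Theorem \ref{thm:Reeb-vector-fields}), and one checks $\inn{R^x_\alpha}\d x^\beta = \delta_\alpha^\beta$ is consistent. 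Since the $2k$ vector fields $R^x_1,\dotsc,R^x_k,R^s_1,\dotsc,R^s_k$ are pointwise independent (they span $\D^\rmR$, which has rank $2k$) and pairwise commuting, the simultaneous flow-box theorem gives coordinates in which $R^x_\alpha = \partial/\partial x^\alpha$ and $R^s_\alpha = \partial/\partial s^\alpha$ for some functions $x^\alpha, s^\alpha$; one must then check these $x^\alpha$ can be taken to agree (up to the obvious normalization) with the ones coming from $\tau^\alpha = \d x^\alpha$, which holds because $\inn{R^x_\alpha}\tau^\beta=\delta^\beta_\alpha$ and $\inn{R^s_\alpha}\tau^\beta=0$ pin down $\tau^\beta=\d x^\beta$ in these coordinates. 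Complete the chart with any additional functions $z^I$ whose differentials, together with the $\d x^\alpha$ and $\d s^\alpha$, form a coframe; this is possible because $\D^\rmC\cap\D^\rmR\cap\D^\rmS=\{0\}$ guarantees enough independent directions.

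It then remains to compute $\eta^\alpha$ in this chart. Write $\eta^\alpha = A^\alpha_\beta\,\d s^\beta + B^\alpha_\beta\,\d x^\beta + f^\alpha_I\,\d z^I$ for smooth coefficients. Contracting with $R^s_\beta = \partial/\partial s^\beta$ and using $\inn{R^s_\beta}\eta^\alpha = \delta^\alpha_\beta$ gives $A^\alpha_\beta = \delta^\alpha_\beta$; contracting with $R^x_\beta = \partial/\partial x^\beta$ and using $\inn{R^x_\beta}\eta^\alpha = 0$ gives $B^\alpha_\beta = 0$. Hence $\eta^\alpha = \d s^\alpha - f^\alpha_I\,\d z^I$ (absorbing a sign into $f^\alpha_I$). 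The final point is to show the coefficients $f^\alpha_I$ depend only on the $z^J$, i.e. $\partial f^\alpha_I/\partial s^\beta = 0$ and $\partial f^\alpha_I/\partial x^\beta = 0$. For this I would use $\inn{R^s_\beta}\d\eta^\alpha = 0$: since $\d\eta^\alpha = -\d f^\alpha_I\wedge\d z^I = -\bigl(\tparder{f^\alpha_I}{s^\beta}\,\d s^\beta + \tparder{f^\alpha_I}{x^\beta}\,\d x^\beta + \tparder{f^\alpha_I}{z^J}\,\d z^J\bigr)\wedge\d z^I$, contracting with $\partial/\partial s^\beta$ yields $\tparder{f^\alpha_I}{s^\beta}\,\d z^I = 0$, so $\tparder{f^\alpha_I}{s^\beta}=0$; likewise contracting $\inn{R^x_\beta}\d\eta^\alpha=0$ yields $\tparder{f^\alpha_I}{x^\beta}=0$.

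The main obstacle, and the step deserving the most care, is the simultaneous straightening: ensuring that the functions $x^\alpha$ obtained from the closed forms $\tau^\alpha = \d x^\alpha$ can be simultaneously used as the flow-box coordinates for $R^x_\alpha$, while the $R^s_\alpha$ are straightened to $\partial/\partial s^\alpha$ with the $x^\alpha$ held fixed. Concretely, one starts from $\tau^\alpha=\d x^\alpha$, notes that in any chart extending $(x^\alpha)$ the relations $\inn{R^x_\alpha}\d x^\beta=\delta^\beta_\alpha$ and $\inn{R^s_\alpha}\d x^\beta = 0$ mean $R^x_\alpha = \partial/\partial x^\alpha + (\text{terms tangent to level sets of the }x^\beta)$ and $R^s_\alpha$ is tangent to those level sets; then one straightens the commuting family within the foliation by the $x^\alpha=\mathrm{const}$ leaves, and finally adjusts so $R^x_\alpha$ becomes exactly $\partial/\partial x^\alpha$. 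All the needed commutation relations $[R^x_\alpha,R^x_\beta]=[R^x_\alpha,R^s_\beta]=[R^s_\alpha,R^s_\beta]=0$ are supplied by Theorem \ref{thm:Reeb-vector-fields}, so no new hard analysis is required beyond the standard simultaneous flow-box theorem; the rest is the bookkeeping of contractions described above.
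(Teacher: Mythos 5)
Your argument is correct and follows the same route as the standard proof (the paper itself states this proposition without proof, deferring to \cite{Riv_23}): simultaneously straighten the commuting Reeb vector fields $R^x_\alpha,R^s_\alpha$, reconcile the flow-box coordinates with the local primitives of the closed forms $\tau^\alpha$, and read off the coefficients of $\eta^\alpha$ and $\d\eta^\alpha$ by contracting with the Reeb fields. The only sentence needing care is the claim that $\inn{R^x_\alpha}\tau^\beta=\delta^\beta_\alpha$ and $\inn{R^s_\alpha}\tau^\beta=0$ by themselves ``pin down'' $\tau^\beta=\d x^\beta$ — these contractions only fix the $\d x$- and $\d s$-components, and one must invoke closedness of $\tau^\beta$ (equivalently, correct the flow-box coordinate by a function of the $z^I$, which is permissible since $x^\beta$ minus the flow-box coordinate is annihilated by all Reeb fields) to remove a possible $\d z^I$ part; your final paragraph effectively supplies exactly this adjustment, so the gap is cosmetic rather than substantive.
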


\begin{example}[Canonical $k$-cocontact structure]\label{ex:canonical-k-cocontact-structure}
    Let $Q$ be a smooth $n$-dimensional manifold with coordinates $(y^a)$ and let $k\geq 1$. Consider the product manifold $\mathbf{P}^*=\Rk\times\bigoplus^k\cT Q\times\Rk$ endowed with natural coordinates $(x^\alpha; y^a, p_a^\alpha; s^\alpha)$. We have the canonical projections
    \begin{center}
        \begin{tikzcd}
            \R & \Rk\times\bigoplus^k\cT Q\times\Rk \arrow[l, swap, "\pi_1^\alpha", ] \arrow[r, "\pi_3^\alpha"] \arrow[dd, "\pi_2"] \arrow[ddd, bend right=50, swap, "\pi_\circ"] \arrow[ddr, "\pi_2^\alpha"] & \R \\
            &  && \\
            & \bigoplus^k\cT Q \arrow[r, swap, "\pi^\alpha"] & \cT Q \\
            & \Rk\times Q\times\Rk 
        \end{tikzcd}
    \end{center}
    Let $\theta$ be the Liouville one-form on $\cT Q$ with local expression in natural coordinates $\theta = p_a \d y^a$. Then, the family $(\tau^\alpha,\eta^\alpha)$ where $\tau^\alpha = \pi_1^{\alpha\,*}\,\d x$ with $x$ the canonical coordinate of $\R$ and $\eta^\alpha = \d s^\alpha - \pi_2^{\alpha\,*}\theta$, is a $k$-cocontact structure on $\mathcal{M}$. In natural coordinates,
    $$ \tau^\alpha = \d x^\alpha\,,\qquad \eta^\alpha = \d s^\alpha - p_a^\alpha\d y^a\,. $$
    Thus, the Reeb vector fields are $R_\alpha^x = \tparder{}{x^\alpha}$ and $R_\alpha^s = \tparder{}{s^\alpha}$.
\end{example}

The following theorem is an upgrade of Proposition 
\ref{prop:adapted-coordinates} and states the existence of Darboux-like coordinates in a $k$-cocontact manifold provided the existence of a certain subdistribution $\mathcal{V}\subset\D^\rmC$.

\begin{theorem}[Darboux theorem for $k$-cocontact manifolds]
    Let $(\mathcal{M},\tau^\alpha,\eta^\alpha)$ be a $k$-cocontact manifold with dimension $\dim \mathcal{M} = k + n + kn + k$ such that there exists an integrable subdistribution $\mathcal{V}\subset\D^C$ with $ \rk \mathcal{V} = nk$. Then, around every point of $\mathcal{M}$ there exist local coordinates $(x^\alpha,y^a,p_a^\alpha,s^\alpha)$, where $1\leq\alpha\leq k$ and $1\leq a\leq n$, such that, locally,
    $$ \tau^\alpha = \d x^\alpha\,,\qquad \eta^\alpha = \d s^\alpha - p_a^\alpha\d y^a\,. $$
    Using these coordinates,
    $$ \D^\rmR = \left \langle R_\alpha^x = \parder{}{x^\alpha}\,,\ R_\alpha^s = \parder{}{s^\alpha} \right\rangle\,,\qquad \mathcal{V} = \left\langle \parder{}{p_a^\alpha} \right\rangle\,. $$
    These coordinates are called {\sl\textbf{Darboux coordinates}} of the $k$-cocontact manifold $(\mathcal{M},\tau^\alpha,\eta^\alpha)$.
\end{theorem}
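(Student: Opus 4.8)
The plan is to reduce to the Darboux theorem for $k$-contact manifolds (Theorem~\ref{Darboux k-contact}) by peeling off the space-time directions with the help of adapted coordinates. First I would apply Proposition~\ref{prop:adapted-coordinates} to get, around the given point, coordinates $(x^\alpha,z^I,s^\alpha)$ with $1\le\alpha\le k$ and $1\le I\le n+kn$ (note $\dim\mathcal{M}-2k=n+kn$), in which $\tau^\alpha=\d x^\alpha$, $R_\alpha^x=\partial/\partial x^\alpha$, $R_\alpha^s=\partial/\partial s^\alpha$ and $\eta^\alpha=\d s^\alpha-f_I^\alpha(z^J)\,\d z^I$, with the $f_I^\alpha$ depending only on the $z^J$. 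Since neither the $\tau^\alpha$ nor the $\eta^\alpha$ involve the $x^\alpha$, the structure is a local product $\mathcal{M}\cong\R^k\times W$, where $W$ is the slice with coordinates $(z^I,s^\alpha)$, the $x^\alpha$ parametrize the $\R^k$-factor, and $\eta^\alpha=\sigma^*\bar\eta^\alpha$ for the projection $\sigma\colon\mathcal{M}\to W$ and $\bar\eta^\alpha=\d s^\alpha-f_I^\alpha\,\d z^I\in\Omega^1(W)$.

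Next I would show $(W,\bar\eta^\alpha)$ is a $k$-contact manifold of dimension $n+kn+k$ in the sense of Definition~\ref{def:kcontact}: (i) is inherited from $\eta^1\wedge\dotsb\wedge\eta^k\neq0$; and since $\d\eta^\alpha$ involves only the $\d z^I$ in the adapted chart, the equality $\D^\rmR(\mathcal{M})=\langle\partial/\partial x^\alpha,\partial/\partial s^\alpha\rangle$ from Theorem~\ref{thm:Reeb-vector-fields} and Proposition~\ref{prop:adapted-coordinates} restricts on $W$ to $\D^\rmR(W)=\langle\partial/\partial s^\alpha\rangle$ of rank $k$, with $\D^\rmC\cap\D^\rmR$ trivial there, giving (ii) and (iii); one also notes $\D^\rmC(\mathcal{M})\cap\D^\rmR(\mathcal{M})=\langle R_\alpha^x\rangle=\ker\d\sigma$. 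One then transports $\mathcal{V}$ to the slice: $\bar{\mathcal{V}}:=\d\sigma(\mathcal{V})$ is an integrable rank-$nk$ subdistribution of $\D^\rmC(W)$ — integrability descends along $\sigma$, the rank is preserved because $\mathcal{V}$ is transverse to $\ker\d\sigma=\langle R_\alpha^x\rangle$, the inclusion into $\D^\rmC(W)$ is inherited from $\mathcal{V}\subset\ker\widehat{\eta^\alpha}$, and moreover $\d\bar\eta^\alpha$ vanishes on $\bar{\mathcal{V}}\times\bar{\mathcal{V}}$ since $\d\eta^\alpha(v,w)=-\eta^\alpha([v,w])$ for sections $v,w$ of the integrable distribution $\mathcal{V}\subset\ker\widehat{\eta^\alpha}$. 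Applying Theorem~\ref{Darboux k-contact} to $(W,\bar\eta^\alpha,\bar{\mathcal{V}})$ yields coordinates $(y^a,p_a^\alpha,s^\alpha)$ on $W$ with $\bar\eta^\alpha=\d s^\alpha-p_a^\alpha\,\d y^a$, $\D^\rmR(W)=\langle\partial/\partial s^\alpha\rangle$ and $\bar{\mathcal{V}}=\langle\partial/\partial p_a^\alpha\rangle$.

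Finally I would pull these coordinates back along $\sigma$ and keep the $x^\alpha$. The coordinate change supplied by Theorem~\ref{Darboux k-contact} only rearranges the $z^I$ into $(y^a,p_a^\alpha)$ and shifts the $s^\alpha$ by a function of the new $z$-variables, so it is triangular with respect to the product and leaves $\tau^\alpha=\d x^\alpha$ and $R_\alpha^x=\partial/\partial x^\alpha$ untouched; hence $(x^\alpha,y^a,p_a^\alpha,s^\alpha)$ is a chart on $\mathcal{M}$ with $\tau^\alpha=\d x^\alpha$ and $\eta^\alpha=\sigma^*\bar\eta^\alpha=\d s^\alpha-p_a^\alpha\,\d y^a$. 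In these coordinates $\d\eta^\alpha=\d y^a\wedge\d p_a^\alpha$, from which $\D^\rmR=\langle\partial/\partial x^\alpha,\partial/\partial s^\alpha\rangle$ follows by a one-line computation (or by Theorem~\ref{thm:Reeb-vector-fields}), and $\mathcal{V}=\langle\partial/\partial p_a^\alpha\rangle$ because $\mathcal{V}$ is $\sigma$-related to $\langle\partial/\partial p_a^\alpha\rangle$, is tangent to the slices, and has the right rank.

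The one genuinely delicate point is the transport of $\mathcal{V}$ to the slice: one must know that $\mathcal{V}$ is compatible with the product splitting — transverse to the space-time Reeb directions $\langle R_\alpha^x\rangle$ and $\sigma$-projectable — before Theorem~\ref{Darboux k-contact} can be invoked on $\bar{\mathcal{V}}$. This is automatic in the canonical model of Example~\ref{ex:canonical-k-cocontact-structure}, where $\mathcal{V}=\langle\partial/\partial p_a^\alpha\rangle\subset\D^\rmS$, and in general it should be read as part of the standing hypothesis on $\mathcal{V}$ (for instance $\mathcal{V}\subset\D^\rmC\cap\D^\rmS$). An essentially equivalent alternative that bypasses the reduction: the closed $2$-forms $\omega^\alpha:=\d\eta^\alpha$ together with the image of $\mathcal{V}$ make the $z$-space into a $k$-symplectic manifold, so the $k$-symplectic Darboux theorem gives $(y^a,p_a^\alpha)$ with $\omega^\alpha=\d y^a\wedge\d p_a^\alpha$, and then the Poincaré lemma writes $f_I^\alpha\,\d z^I-p_a^\alpha\,\d y^a=\d g^\alpha$, so that replacing $s^\alpha$ by $s^\alpha-g^\alpha$ completes the chart.
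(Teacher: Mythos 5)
The paper itself states this Darboux theorem without proof (it is imported from \cite{Riv_23}), so there is no internal argument to compare with; judged on its own, your reduction to the $k$-contact Darboux theorem \ref{Darboux k-contact} via the adapted coordinates of Proposition \ref{prop:adapted-coordinates} is the natural route, and the verification that the slice $(W,\bar\eta^\alpha)$ is $k$-contact of dimension $n+kn+k$ is fine. The problem is precisely the point you flag and then leave open: before $\bar{\mathcal V}=\d\sigma(\mathcal V)$ makes sense and Theorem \ref{Darboux k-contact} can be applied, you need (a) $\mathcal V$ to be tangent to the slices (equivalently $\mathcal V\subset\D^\rmS$, in particular $\mathcal V\cap\langle R^x_\alpha\rangle=0$), and (b) $\mathcal V$ to be $\sigma$-projectable, i.e.\ invariant under the flows of the space-time Reeb fields $R^x_\alpha$. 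Neither follows from the stated hypothesis ``$\mathcal V\subset\D^\rmC$ integrable of rank $nk$'', and declaring them ``part of the standing hypothesis'' is not a proof of the statement as given. Moreover, your proposed strengthening $\mathcal V\subset\D^\rmC\cap\D^\rmS$ repairs (a) but not (b).

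That (b) is genuinely extra information, and not a removable technicality, can be seen already in the canonical model of Example \ref{ex:canonical-k-cocontact-structure} with $k=n=1$: on $\R\times\cT Q\times\R$ with coordinates $(x,y,p,s)$, $\tau=\d x$, $\eta=\d s-p\,\d y$, the rank-one (hence integrable) distribution $\mathcal V=\bigl\langle \partial/\partial p+f(x)\,(\partial/\partial y+p\,\partial/\partial s)\bigr\rangle$ with $f$ non-constant lies in $\D^\rmC\cap\D^\rmS$ and has the right rank, but $[R^x,\cdot]$ does not preserve it. In any chart of the type asserted by the theorem the Reeb fields are coordinate fields and $\langle\partial/\partial \tilde p^{\,\alpha}_a\rangle$ is invariant under them, so no such chart can satisfy $\mathcal V=\langle\partial/\partial \tilde p^{\,\alpha}_a\rangle$ for this $\mathcal V$; hence the invariance $[R^x_\alpha,\Gamma(\mathcal V)]\subset\Gamma(\mathcal V)$ must either be derived (it cannot be, from the hypotheses you allow yourself) or imposed explicitly before the reduction step. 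Your alternative sketch does not escape this either: speaking of ``the image of $\mathcal V$'' in the $z$-space already presupposes projectability of $\mathcal V$ along the $x$- (and $s$-) directions. So the proof as written establishes the theorem only under an additional compatibility assumption on $\mathcal V$ with the space-time Reeb directions, and that assumption is where the real content of the cocontact (as opposed to contact) Darboux statement lies.
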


Taking into account the previous theorem, we can consider the manifold introduced in Example \ref{ex:canonical-k-cocontact-structure} as the canonical model for $k$-cocontact structures.

\subsubsection{\texorpdfstring{$k$}{}-Cocontact Hamiltonian systems}
\label{secHamsys2}

This section introduces the notion of $k$-cocontact Hamiltonian system and its Hamilton--De Donder--Weyl equations. The existence of solutions to these equations is proved. We provide local expressions of the Hamilton--De Donder--Weyl equations for maps and $k$-vector fields in both adapted and Darboux coordinates.

\begin{definition}
    A {\sl\textbf{$k$-cocontact Hamiltonian system}} is a tuple $(\mathcal{M},\tau^\alpha,\eta^\alpha,h)$, where $(\tau^\alpha,\eta^\alpha)$ is a $k$-cocontact structure on the manifold $\mathcal{M}$ and $h\colon \mathcal{M}\to\R$ is a {\sl\textbf{Hamiltonian function}}. Given a map $\psi\colon D\subset\Rk\to \mathcal{M}$, the {\sl\textbf{$k$-cocontact Hamilton--De Donder--Weyl equations for the map $\psi$}} are
    \begin{equation}\label{eq:HDW-map}
        \begin{dcases}
            \inn{\psi_\alpha'}\d\eta^\alpha = \left(\d h - (\Lie_{R_\alpha^x}h)\tau^\alpha - (\Lie_{R_\alpha^s}h)\eta^\alpha\right)\circ\psi\,,\\
            \inn{\psi_\alpha'}\eta^\alpha = -h\circ\psi\,,\\
            \inn{\psi_\alpha'}\tau^\beta = \delta_\alpha^\beta\,.
        \end{dcases}
    \end{equation}
\end{definition}

Now we are going to look at the expression in coordinates of the Hamilton--De Donder--Weyl equations \eqref{eq:HDW-map}. 
Consider first the adapted coordinates $(x^\alpha,z^I,s^\alpha)$,
 where $t = (t^1,\dotsc,t^k)\in\Rk$. In these coordinates,
$$ R_\alpha^x = \parder{}{x^\alpha}\,,\quad \tau^\alpha = \d x^\alpha\,,\quad R_\alpha^s = \parder{}{s^\alpha}\,,\quad \eta^\alpha = \d s^\alpha - f_I^\alpha(z^J)\d z^I\,,\quad \d\eta^\alpha = \frac{1}{2}\omega^\alpha_{IJ}\d z^I\wedge\d z^J\,, $$
where $\omega^\alpha_{IJ} = \dparder{f_I^\alpha}{z^J} - \dparder{f_J^\alpha}{z^I}$. Consider a map $\psi\colon D\subset\Rk\to \mathcal{M}$ with local expression $\psi(t) = (x^\alpha(t),x^I(t),s^\alpha(t))$. Then,
$$ \psi'_\alpha = \left(x^\beta, z^I, s^\beta; \parder{x^\beta}{t^\alpha}, \parder{z^I}{t^\alpha}, \parder{s^\beta}{t^\alpha}\right)\,. $$
Then, the Hamilton--De Donder--Weyl equations in adapted coordinates read
\begin{equation}
    \begin{dcases}
        \parder{x^J}{t^\alpha}\omega_{JI}^\alpha = \left(\parder{h}{x^I} + \parder{h}{t^\alpha}f_I^\alpha\right)\circ\psi\,,\\
        \parder{s^\alpha}{t^\alpha} - f_I^\alpha\parder{x^I}{t^\alpha} = -h\circ\psi\,,\\
        \parder{x^\alpha}{t^\beta} = \delta^\alpha_\beta\,.
    \end{dcases}
\end{equation}

Furthermore, if the local expression in Darboux coordinates of a map $\psi\colon D\subset\Rk\to \mathcal{M}$ is $\psi(t) = (x^\alpha(t), y^a(t), p_a^\alpha(t), s^\alpha(t))$. Then, the Hamilton--De Donder--Weyl equations in Darboux coordinates read
\begin{equation}\label{eq:HDW-map-Darboux}
    \begin{dcases}
        \parder{x\beta}{t^\alpha} = \delta_\alpha^\beta\,,\\
        \parder{y^a}{t^\alpha} = \parder{h}{p_a^\alpha}\circ\psi\,,\\
        \parder{p_a^\alpha}{t^\alpha} = -\left( \parder{h}{y^a} + p_a^\alpha\parder{h}{s^\alpha} \right)\circ\psi\,,\\
        \parder{s^\alpha}{t^\alpha} = \left( p_a^\alpha\parder{h}{p_a^\alpha} - h \right)\circ\psi\,.
    \end{dcases}
\end{equation}

\begin{definition}
    Consider a $k$-cocontact Hamiltonian system $(\mathcal{M},\tau^\alpha,\eta^\alpha, h)$. The {\sl\textbf{$k$-cocontact Hamilton--De Donder--Weyl equations for a $k$-vector field}} $\bfX = (X_\alpha)\in\X^k(\mathcal{M})$ are
    \begin{equation}\label{eq:HDW-field}
        \begin{dcases}
            \inn{X_\alpha}\d\eta^\alpha = \d h - (\Lie_{R_\alpha^x}h)\tau^\alpha - (\Lie_{R_\alpha^s}h)\eta^\alpha\,,\\
            \inn{X_\alpha}\eta^\alpha = -h\,,\\
            \inn{X_\alpha}\tau^\beta = \delta_\alpha^\beta\,.
        \end{dcases}
    \end{equation}
    A $k$-vector field solution to these equations is a {\sl\textbf{$k$-cocontact Hamiltonian $k$-vector field}}. We will denote this set of $k$-vector fields by $\Xh^k(\mathcal{M})$.
\end{definition}

\begin{proposition}\label{prop:k-cocontact-HDW-have-solutions}
    The $k$-cocontact Hamilton--De Donder--Weyl equations \eqref{eq:HDW-field} admit solutions. They are not unique if $k > 1$.
\end{proposition}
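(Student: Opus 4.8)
The plan is to reduce the geometric equations \eqref{eq:HDW-field} to the coordinate equations \eqref{eq:HDW-map-Darboux} (read as algebraic conditions on the components of a $k$-vector field, i.e.\ dropping the ``$\circ\psi$'') and show that these can always be solved, with the solution set being an affine space of positive dimension when $k>1$. First I would fix Darboux coordinates $(x^\alpha,y^a,p_a^\alpha,s^\alpha)$ around an arbitrary point, which exist by the Darboux theorem for $k$-cocontact manifolds applied to a suitable integrable subdistribution $\mathcal V\subset\D^\rmC$ (for the canonical model of Example \ref{ex:canonical-k-cocontact-structure} one simply takes $\mathcal V=\langle\partial/\partial p_a^\alpha\rangle$). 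In these coordinates $\tau^\alpha=\d x^\alpha$, $\eta^\alpha=\d s^\alpha-p_a^\alpha\,\d y^a$, $\d\eta^\alpha=\d y^a\wedge\d p_a^\alpha$, and the Reeb vector fields are $R^x_\alpha=\partial/\partial x^\alpha$, $R^s_\alpha=\partial/\partial s^\alpha$, so that $\Lie_{R^x_\alpha}h=\partial h/\partial x^\alpha$ and $\Lie_{R^s_\alpha}h=\partial h/\partial s^\alpha$.

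Next I would write a generic $k$-vector field $\bfX=(X_\alpha)$ with
$X_\alpha = A_\alpha^\beta\,\partial/\partial x^\beta + B_\alpha^a\,\partial/\partial y^a + C_{\alpha a}^\beta\,\partial/\partial p_a^\beta + D_\alpha^\beta\,\partial/\partial s^\beta$
and expand the three equations of \eqref{eq:HDW-field}. The last equation $\inn{X_\alpha}\tau^\beta=\delta_\alpha^\beta$ forces $A_\alpha^\beta=\delta_\alpha^\beta$. The second equation $\inn{X_\alpha}\eta^\alpha=-h$ becomes $D_\alpha^\alpha - p_a^\alpha B_\alpha^a = -h$. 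For the first equation, contracting $X_\alpha$ with $\d\eta^\alpha = \d y^a\wedge\d p_a^\alpha$ gives $\inn{X_\alpha}\d\eta^\alpha = B_\alpha^a\,\d p_a^\alpha - C_{\alpha a}^\alpha\,\d y^a$; expanding the right-hand side $\d h - (\partial h/\partial x^\alpha)\tau^\alpha - (\partial h/\partial s^\alpha)\eta^\alpha$ in the basis $\{\d x^\alpha,\d y^a,\d p_a^\alpha,\d s^\alpha\}$ and noting the $\d x^\alpha$ and $\d s^\alpha$ components cancel identically (this is precisely the role of the Reeb terms), one reads off, by matching $\d p_a^\alpha$ and $\d y^a$ components respectively,
\[
B_\alpha^a = \parder{h}{p_a^\alpha}\,,\qquad
C_{\alpha a}^\alpha = -\left(\parder{h}{y^a} + p_a^\alpha\parder{h}{s^\alpha}\right)\,,
\]
and substituting $B$ into the second equation gives $D_\alpha^\alpha = p_a^\alpha\,\partial h/\partial p_a^\alpha - h$. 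This recovers exactly \eqref{eq:HDW-map-Darboux} and shows any solution must have these components.

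Finally, for existence: the data just derived determine $A_\alpha^\beta$ and $B_\alpha^a$ completely, while they constrain only the ``trace'' parts $C_{\alpha a}^\alpha$ (summed over $\alpha$) and $D_\alpha^\alpha$ of the remaining coefficients. One therefore picks, say, $C_{\alpha a}^\beta := \tfrac1k\,\delta_\alpha^\beta\bigl(-\partial h/\partial y^a - p_a^\alpha\,\partial h/\partial s^\alpha\bigr)$ — wait, more carefully: set $C_{\alpha a}^\beta=0$ for $\alpha\neq\beta$ and distribute the required trace, and similarly $D_\alpha^\beta = \delta_\alpha^\beta\,\tfrac1k(p_b^\alpha\partial h/\partial p_b^\alpha - h)$ off the diagonal being free; this yields a global solution on the canonical model, and a local solution in general which can be glued via a partition of unity since the solution set is an affine subbundle (the equations are affine in the components of $\bfX$). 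For non-uniqueness when $k>1$: the associated homogeneous system is $A_\alpha^\beta=0$, $B_\alpha^a=0$, $C_{\alpha a}^\alpha=0$, $D_\alpha^\alpha=0$, which for $k>1$ leaves the off-diagonal components $C_{\alpha a}^\beta$ ($\alpha\neq\beta$) and $D_\alpha^\beta$ ($\alpha\neq\beta$), and the traceless part of the diagonal $C$'s and $D$'s, entirely free — a nonzero space — whereas for $k=1$ everything is pinned down. The main obstacle I anticipate is not conceptual but bookkeeping: carefully verifying that the $\d x^\alpha$- and $\d s^\alpha$-components of the first equation vanish identically (so that the equation is consistent and does not impose extra constraints), which is exactly where the Reeb correction terms $(\Lie_{R^x_\alpha}h)\tau^\alpha$ and $(\Lie_{R^s_\alpha}h)\eta^\alpha$ are needed, together with checking that the gluing of local solutions respects the affine structure. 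One may also simply invoke that the Darboux model is the canonical one and exhibit the explicit global solution there, which is the cleanest route.
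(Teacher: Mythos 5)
Your computation in Darboux coordinates is correct and reproduces exactly the system \eqref{eq:HDW-field-Darboux} that the paper records right after the proposition, and your conclusion (the equations fix $A^\beta_\alpha$ and $B^a_\alpha$ but constrain only the traces $C^\alpha_{\alpha a}$ and $D^\alpha_\alpha$, whence existence always and non-uniqueness precisely when $k>1$) is the same underdetermination argument the paper relies on; the partition-of-unity gluing also works, since the equations are affine in $\bfX$ and the weights sum to one. However, there is a genuine gap in your very first step: you invoke Darboux coordinates ``around an arbitrary point, which exist by the Darboux theorem for $k$-cocontact manifolds applied to a suitable integrable subdistribution $\mathcal{V}\subset\D^{\rmC}$''. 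In this framework the Darboux theorem is \emph{conditional}: it assumes $\dim\mathcal{M}=k+n+kn+k$ and the existence of an integrable subdistribution $\mathcal{V}\subset\D^{\rmC}$ with $\rk\mathcal{V}=nk$, and neither hypothesis is part of Definition \ref{dfn:k-cocontact-manifold}. What is guaranteed around every point is only the adapted chart of Proposition \ref{prop:adapted-coordinates}, in which the equations read $A^\beta_\alpha=\delta^\beta_\alpha$, \ $B^J_\alpha\omega^\alpha_{JI}=\parder{h}{z^I}+\parder{h}{s^\alpha}f^\alpha_I$, \ $D^\alpha_\alpha-f^\alpha_I B^I_\alpha=-h$, and there the solvability of the middle system for the $B^J_\alpha$ is no longer immediate: it has to be extracted from the structure conditions (the rank conditions and the splitting of $\Tan\mathcal{M}$ coming from items (1)--(5) of the definition, which give pointwise surjectivity of the relevant bundle map), or quoted from the intrinsic proof in the original reference. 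As written, your argument establishes the proposition only on $k$-cocontact manifolds admitting Darboux coordinates (in particular the canonical model $\Rk\times\oplus^k\Tan^*Q\times\Rk$), not for an arbitrary $k$-cocontact Hamiltonian system.

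Two minor points: the explicit trace-distributing choice should be written with a summed dummy index, e.g.\ $C^\beta_{\alpha a}=\frac1k\,\delta^\beta_\alpha\big({-\parder{h}{y^a}}-p^\gamma_a\parder{h}{s^\gamma}\big)$ and $D^\beta_\alpha=\frac1k\,\delta^\beta_\alpha\big(p^\gamma_b\parder{h}{p^\gamma_b}-h\big)$, since in your formulas the index $\alpha$ appears both free and repeated; and when you assert that the $\d x^\alpha$- and $\d s^\alpha$-components of the first equation cancel, it is worth stating that this is an identity of the right-hand side alone (the left-hand side $\inn{X_\alpha}\d\eta^\alpha$ has no such components in Darboux coordinates), so no compatibility condition is being swept under the rug.
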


Consider a $k$-vector field $\bfX = (X_1,\dotsc,X_k)\in\X^k(\mathcal{M})$ with local expression in adapted coordinates
$$ X_\alpha = A_\alpha^\beta\parder{}{x^\beta} + B_\alpha^I\parder{}{z^I} + D_\alpha^\beta\parder{}{s^\beta}\,. $$
Thus, equations \eqref{eq:HDW-field} in adapted coordinates read
\begin{equation}
    \begin{dcases}
        A_\alpha^\beta = \delta_\alpha^\beta\,,\\
        B_\alpha^J\omega_{JI}^\alpha = \parder{h}{z^I} + \parder{h}{s^\alpha}f_I^\alpha\,,\\
        D_\alpha^\alpha - f_I^\alpha B_\alpha^I = -h\,.
    \end{dcases}
\end{equation}

On the other hand, consider a $k$-vector field $\bfX = (X_1,\dotsc,X_k)\in\X^k(\mathcal{M})$ with local expression in Darboux coordinates
$$ X_\alpha = A_\alpha^\beta\parder{}{x^\beta} + B_\alpha^a\parder{}{y^a} + C_{\alpha i}^\beta\parder{}{p_a^\beta} + D_\alpha^\beta\parder{}{s^\beta}\,. $$
Imposing equations \eqref{eq:HDW-field}, we get the conditions
\begin{equation}\label{eq:HDW-field-Darboux}
    \begin{dcases}
        A_\alpha^\beta = \delta_\alpha^\beta\,,\\
        B_\alpha^a = \parder{h}{p_a^\alpha}\,,\\
        C_{\alpha i}^\alpha = -\left( \parder{h}{y^a} + p_a^\alpha\parder{h}{s^\alpha} \right)\,,\\
        D_\alpha^\alpha = p_a^\alpha \parder{h}{p_a^\alpha} - h\,.
    \end{dcases}
\end{equation}

\begin{proposition}
    Let $\bfX \in\X^k(\mathcal{M})$ be an integrable $k$-vector field. Then $\bfX$ is a solution to \eqref{eq:HDW-field} if and only if every integral section of $\bfX$ satisfies the $k$-cocontact Hamilton--De Donder--Weyl equations \eqref{eq:HDW-map}.
\end{proposition}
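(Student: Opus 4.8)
The plan is to prove the equivalence by reducing it to the coordinate computations already performed in the excerpt. First I would recall the standard fact relating a $k$-vector field and its integral sections: if $\bm\psi\colon D\subset\Rk\to\mathcal M$ is an integral section of $\bfX=(X_\alpha)$, then by definition $\Tan\bm\psi\circ\tfrac{\partial}{\partial x^\alpha}=X_\alpha\circ\bm\psi$, i.e. $\bm\psi'_\alpha=X_\alpha\circ\bm\psi$ for every $\alpha$. Hence, for any differential form $\beta$ on $\mathcal M$, one has $\inn{\bm\psi'_\alpha}\beta = (\inn{X_\alpha}\beta)\circ\bm\psi$. Applying this termwise to $\beta=\d\eta^\alpha$, $\beta=\eta^\alpha$, and $\beta=\tau^\beta$, and summing over $\alpha$ where appropriate, shows that if $\bfX$ satisfies \eqref{eq:HDW-field} then every integral section $\bm\psi$ satisfies \eqref{eq:HDW-map}. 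This is the ``only if'' direction and it is essentially immediate.

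For the converse, I would argue using the integrability hypothesis: since $\bfX$ is integrable, through every point $p\in\mathcal M$ there passes an integral section $\bm\psi$ with $\bm\psi(x_0)=p$ for some $x_0$, and moreover the first prolongations $\bm\psi'_\alpha(x_0)$ recover exactly the values $X_\alpha(p)$. Therefore the pointwise identities $(\inn{X_\alpha}\d\eta^\alpha)(p)$, $(\inn{X_\alpha}\eta^\alpha)(p)$, $(\inn{X_\alpha}\tau^\beta)(p)$ can be read off from the corresponding identities $(\inn{\bm\psi'_\alpha}\d\eta^\alpha)(x_0)$, etc., which hold because $\bm\psi$ satisfies \eqref{eq:HDW-map}. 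Since $p$ was arbitrary, $\bfX$ satisfies \eqref{eq:HDW-field} at every point. Alternatively — and this is the cleaner bookkeeping route — I would simply invoke the local expressions: equations \eqref{eq:HDW-map-Darboux} for $\bm\psi$ in Darboux coordinates are obtained from \eqref{eq:HDW-field-Darboux} for $\bfX$ precisely by substituting the \textsc{sopde}-type relations $A_\alpha^\beta=\delta_\alpha^\beta$, $\partial y^a/\partial x^\alpha = B_\alpha^a\circ\bm\psi$, $\partial p_a^\alpha/\partial x^\beta = C_{\beta a}^\alpha\circ\bm\psi$, $\partial s^\alpha/\partial x^\beta = D_\beta^\alpha\circ\bm\psi$ coming from $\bm\psi'=\bfX\circ\bm\psi$; conversely, if these hold for one integral section through each point, the displayed system \eqref{eq:HDW-field-Darboux} must hold identically.

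The only genuine subtlety — and the step I would flag as the main obstacle — is making sure that the passage from ``$\bm\psi$ satisfies \eqref{eq:HDW-map} for every integral section of $\bfX$'' to ``$\bfX$ satisfies \eqref{eq:HDW-field}'' uses integrability in an honest way: one needs that the integral sections sweep out a neighbourhood of every point and that their prolongations $\bm\psi'_\alpha$ fill out the values of $X_\alpha$, so that the equations \eqref{eq:HDW-field}, which are pointwise $\Cinfty(\mathcal M)$-linear conditions on $\bfX$, are forced. This is exactly the content of the definition of integrable $k$-vector field recalled earlier in the excerpt (every point of $\mathcal M$ lies in the image of an integral map), so no new work is required; it just must be stated. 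Everything else is the mechanical termwise application of $\inn{\bm\psi'_\alpha}\beta=(\inn{X_\alpha}\beta)\circ\bm\psi$, together with the third equation $\inn{X_\alpha}\tau^\beta=\delta_\alpha^\beta$ being automatically compatible with holonomy in the $x^\alpha$ variables. I would write the proof in two short paragraphs, one per implication, and keep the coordinate check to a single sentence referencing \eqref{eq:HDW-map-Darboux} and \eqref{eq:HDW-field-Darboux}.
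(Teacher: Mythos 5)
Your argument is correct and coincides with the approach the paper takes (the paper treats this as immediate from the definition of integral sections, exactly as in the analogous $k$-contact proposition): the forward implication is the identity $\inn{\bm\psi'_\alpha}\beta=(\inn{X_\alpha}\beta)\circ\bm\psi$ coming from $\bm\psi'=\bfX\circ\bm\psi$, and the converse uses integrability so that every point lies in the image of an integral section whose prolongation realizes the values $X_\alpha(p)$, forcing the pointwise equations \eqref{eq:HDW-field}. Your flagged subtlety is the right one and is handled correctly; no gap.
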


It is worth noting that, as in the $k$-symplectic and $k$-contact cases, equations \eqref{eq:HDW-map} and \eqref{eq:HDW-field} are not completely equivalent since a solution to \eqref{eq:HDW-map} may not be an integral section of an integrable $k$-vector field $\bfX$ solution to equations \eqref{eq:HDW-field}.

The following proposition provides an alternative way of writing the $k$-cocontact Hamilton--De Donder--Weyl equations for $k$-vector fields.

\begin{proposition}
    The $k$-cocontact Hamilton--De Donder--Weyl equations \eqref{eq:HDW-field} are equivalent to
    \begin{equation}
        \begin{dcases}
            \Lie_{X_\alpha}\eta^\alpha = -(\Lie_{R_\alpha^x}h)\tau^\alpha - (\Lie_{R_\alpha^s}h)\eta^\alpha\,,\\
            \inn{X_\alpha}\eta^\alpha = -h\,,\\
            \inn{X_\alpha}\tau^\beta = \delta_\alpha^\beta\,.
        \end{dcases}
    \end{equation}
\end{proposition}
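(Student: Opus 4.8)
The plan is to show that the first equation of \eqref{eq:HDW-field}, namely $\inn{X_\alpha}\d\eta^\alpha = \d h - (\Lie_{R_\alpha^x}h)\tau^\alpha - (\Lie_{R_\alpha^s}h)\eta^\alpha$, can be replaced by $\Lie_{X_\alpha}\eta^\alpha = -(\Lie_{R_\alpha^x}h)\tau^\alpha - (\Lie_{R_\alpha^s}h)\eta^\alpha$, given that the other two equations $\inn{X_\alpha}\eta^\alpha = -h$ and $\inn{X_\alpha}\tau^\beta = \delta_\alpha^\beta$ hold. The key tool is Cartan's magic formula applied summed over $\alpha$: $\Lie_{X_\alpha}\eta^\alpha = \inn{X_\alpha}\d\eta^\alpha + \d(\inn{X_\alpha}\eta^\alpha)$.

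First I would compute $\d(\inn{X_\alpha}\eta^\alpha)$ using the second equation: since $\inn{X_\alpha}\eta^\alpha = -h$, we get $\d(\inn{X_\alpha}\eta^\alpha) = -\d h$. Hence $\Lie_{X_\alpha}\eta^\alpha = \inn{X_\alpha}\d\eta^\alpha - \d h$. Substituting the first equation of \eqref{eq:HDW-field} into this gives $\Lie_{X_\alpha}\eta^\alpha = \d h - (\Lie_{R_\alpha^x}h)\tau^\alpha - (\Lie_{R_\alpha^s}h)\eta^\alpha - \d h = -(\Lie_{R_\alpha^x}h)\tau^\alpha - (\Lie_{R_\alpha^s}h)\eta^\alpha$, which is the desired reformulation. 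This establishes one implication.

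For the converse, I would reverse the computation: assuming $\Lie_{X_\alpha}\eta^\alpha = -(\Lie_{R_\alpha^x}h)\tau^\alpha - (\Lie_{R_\alpha^s}h)\eta^\alpha$ together with $\inn{X_\alpha}\eta^\alpha = -h$, Cartan's formula again yields $\inn{X_\alpha}\d\eta^\alpha = \Lie_{X_\alpha}\eta^\alpha - \d(\inn{X_\alpha}\eta^\alpha) = -(\Lie_{R_\alpha^x}h)\tau^\alpha - (\Lie_{R_\alpha^s}h)\eta^\alpha + \d h$, recovering the first equation of \eqref{eq:HDW-field}. Since the equations $\inn{X_\alpha}\eta^\alpha = -h$ and $\inn{X_\alpha}\tau^\beta = \delta_\alpha^\beta$ are common to both formulations, the two systems are equivalent.

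This proof is essentially a direct application of Cartan's formula summed over the index $\alpha$, so there is no real obstacle; the only mild subtlety to keep track of is that the contraction, the Lie derivative, and the sum over $\alpha$ all commute with the exterior derivative in the appropriate sense, so that $\d(\inn{X_\alpha}\eta^\alpha) = \d\bigl(\sum_\alpha \inn{X_\alpha}\eta^\alpha\bigr) = \sum_\alpha \d(\inn{X_\alpha}\eta^\alpha)$. No use of the third equation or of regularity of the $k$-cocontact structure is needed for this particular equivalence.
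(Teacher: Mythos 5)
Your proof is correct and is exactly the standard argument: Cartan's magic formula $\Lie_{X_\alpha}\eta^\alpha = \inn{X_\alpha}\d\eta^\alpha + \d(\inn{X_\alpha}\eta^\alpha)$ summed over $\alpha$, combined with $\inn{X_\alpha}\eta^\alpha = -h$, which is the same route taken in the literature the paper cites for this result (the paper itself omits the proof). Your closing observation that the third equation plays no role in the equivalence and is merely carried along is also accurate.
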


\subsection{\texorpdfstring{$k$}{}-Cocontact Lagrangian formalism}

In this section we devise the Lagrangian counterpart of the formulations introduced in the previous section. We begin by introducing the geometric structures of the phase bundle and defining the notion of second-order partial differential equation. In second place, we develop the Lagrangian formalism and introduce the $k$-cocontact Euler--Lagrange equations as the Hamilton--De Donder--Weyl of a $k$-cocontact Lagrangian system.

\subsubsection{Geometry of the phase bundle}

The phase space for the Lagrangian counterpart of the $k$-cocontact formalism will be the product bundle $\mathbf{P}= \Rk\times\bigoplus^k\T Q\times\Rk$ endowed with natural coordinates $(x^\alpha, y^a, y^a_\alpha, s^\alpha)$. We have the natural projections
\begin{align*}
    \tau_1^\alpha&\colon \mathbf{P}\to\R\ ,&& \tau_1^\alpha(x^1,\dotsc,x^k, {v_q}_1, \dotsc, {v_q}_k, s^1,\dotsc,s^k) = x^\alpha\,,\\
    \tau_2&\colon \mathbf{P}\to\textstyle\bigoplus\nolimits^k\T Q\ ,&& \tau_2(x^1,\dotsc,x^k, {v_q}_1, \dotsc, {v_q}_k, s^1,\dotsc,s^k) = ({v_q}_1, \dotsc, {v_q}_k)\,,\\
    \tau_2^\alpha&\colon \mathbf{P}\to\T Q\ ,&& \tau_2^\alpha(x^1,\dotsc,x^k, {v_q}_1, \dotsc, {v_q}_k, s^1,\dotsc,s^k) = {v_q}_\alpha\,,\\
    \tau^\alpha&\colon \textstyle\bigoplus\nolimits^k\T Q\to\T Q\ ,&& \tau^\alpha(x^1,\dotsc,x^k, {v_q}_1, \dotsc, {v_q}_k, s^1,\dotsc,s^k) = {v_q}_\alpha\,,\\
    \tau_3^\alpha&\colon \mathbf{P}\to\R\ ,&& \tau_3^\alpha(x^1,\dotsc,x^k, {v_q}_1, \dotsc, {v_q}_k, s^1,\dotsc,s^k) = s^\alpha\,,\\
    \tau_\circ&\colon \mathbf{P}\to \Rk\times Q\times\Rk\ ,&& \tau_\circ(x^1,\dotsc,x^k, {v_q}_1, \dotsc, {v_q}_k, s^1,\dotsc,s^k) = (x^1,\dotsc,x^k, q, s^1,\dotsc,s^k)\,, 
\end{align*}
which can be summarized in the following diagram:
\begin{center}
    \begin{tikzcd}
        \R & \Rk\times\bigoplus^k\T Q\times\Rk \arrow[l, swap, "\tau_1^\alpha", ] \arrow[r, "\tau_3^\alpha"] \arrow[dd, "\tau_2"] \arrow[ddd, bend right=50, swap, "\tau_\circ"] \arrow[ddr, "\tau_2^\alpha"] & \R \\
        &  && \\
        & \bigoplus^k\T Q \arrow[r, swap, "\tau^\alpha"] & \T Q \\
        & \Rk\times Q\times\Rk 
    \end{tikzcd}
\end{center}

Since the bundle $\tau_2 \colon \Rk\times\bigoplus^k\T Q\times\Rk\to\bigoplus^k\T Q$ is trivial, the canonical structures in $\bigoplus^k\T Q$, namely the canonical $k$-tangent structure $(J^\alpha)$ and the Liouville vector field $\Delta$, can be extended to $\Rk\times\bigoplus^k\T Q\times\Rk$ in a natural way. Their local expression remain the same:
$$ J^\alpha = \parder{}{y^a_\alpha}\otimes\d y^a\,,\qquad \Delta = y^a_\alpha\parder{}{y^a_\alpha}\,. $$
These canonical structures can be used to extend the notion of \textsc{sopde} (second-order partial differential equation) to the bundle $\Rk\times\bigoplus^k\T Q\times\Rk$:

\begin{definition}
    A $k$-vector field $\mathbf{\Gamma} = (\Gamma_\alpha)\in\X^k(\Rk\times\bigoplus^k\T Q\times\Rk)$ is a {\sl\textbf{second-order partial differential equation}} or \textsc{sopde} if $J^\alpha(\Gamma_\alpha) = \Delta$.
\end{definition}

A straightforward computations shows that the local expression of a \textsc{sopde} reads
$$ \Gamma_\alpha = A_\alpha^\beta\parder{}{x^\beta} + y^a_\alpha\parder{}{y^a} + C_{\alpha\beta}^a\parder{}{y_\beta^a} + D_\alpha^\beta\parder{}{s^\beta} \,. $$

\begin{definition}
    Consider a map $\psi\colon\Rk\to\Rk\times Q\times\Rk$ with $\psi = (x^\alpha, \phi, s^\alpha)$, where $\phi\colon\Rk\to Q$. The {\sl\textbf{first prolongation}} of $\psi$ to $\Rk\times\bigoplus^k\T Q\times\Rk$ is the map $\psi'\colon\Rk\to\Rk\times\bigoplus^k\T Q\times\Rk$ given by $\psi' = (x^\alpha,\phi',s^\alpha)$, where $\phi'$ is the first prolongation of $\phi$ to $\bigoplus^k\T Q$. The map $\psi'$ is said to be {\sl\textbf{holonomic}}.
\end{definition}
Let $\psi\colon\Rk\to\Rk\times Q\times\Rk$ be a map with local expression $\psi(r) = (x^\alpha(r), y^a(r), s^\alpha(r))$, where $r\in\Rk$. Then, its first prolongation has local expression
$$ \psi'(t) = \left( x^\alpha(t), y^a(t), \parder{y^a}{t^\alpha}(t), s^\alpha(r) \right)\,. $$

\begin{proposition}
    An integrable $k$-vector field $\mathbf{\Gamma} \in\X^k(\Rk\times\bigoplus^k\T Q\times\Rk)$ is a \textsc{sopde} if and only if its integral sections are holonomic.
\end{proposition}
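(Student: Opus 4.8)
The plan is to prove the equivalence by a purely local coordinate computation, since both properties ("being a \textsc{sopde}" and "having holonomic integral sections") are local in nature and the canonical structures $J^\alpha$, $\Delta$ have fixed coordinate expressions on $\Rk\times\bigoplus^k\T Q\times\Rk$. First I would recall that a general integrable $k$-vector field $\mathbf{\Gamma}=(\Gamma_\alpha)$ has components written as
\[
\Gamma_\alpha = A_\alpha^\beta\parder{}{x^\beta} + B_\alpha^a\parder{}{y^a} + C_{\alpha\beta}^a\parder{}{y_\beta^a} + D_\alpha^\beta\parder{}{s^\beta}\,,
\]
and that applying $J^\alpha$ (and summing over $\alpha$, as per the crossed-index convention) yields $J^\alpha(\Gamma_\alpha) = B_\alpha^a\,\delta^{\,?}\cdots$; concretely $J^\alpha(\Gamma_\alpha) = B_\alpha^a\parder{}{y_\alpha^a}$, so the equation $J^\alpha(\Gamma_\alpha)=\Delta = y_\alpha^a\parder{}{y_\alpha^a}$ holds if and only if $B_\alpha^a = y_\alpha^a$ for all $a,\alpha$. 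This establishes that $\mathbf{\Gamma}$ is a \textsc{sopde} exactly when its local expression has the form displayed after the previous definition.

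Next I would analyze the integral sections. For an integrable $k$-vector field, an integral section $\psi\colon\Rk\to\Rk\times\bigoplus^k\T Q\times\Rk$, written $\psi(t)=(x^\alpha(t),y^a(t),y_\beta^a(t),s^\alpha(t))$, satisfies the first-order system coming from $\T\psi\circ\tparder{}{t^\alpha} = \Gamma_\alpha\circ\psi$, i.e.
\[
\parder{x^\beta}{t^\alpha} = A_\alpha^\beta\circ\psi\,,\qquad
\parder{y^a}{t^\alpha} = B_\alpha^a\circ\psi\,,\qquad
\parder{y_\beta^a}{t^\alpha} = C_{\alpha\beta}^a\circ\psi\,,\qquad
\parder{s^\beta}{t^\alpha} = D_\alpha^\beta\circ\psi\,.
\]
By definition, $\psi$ is holonomic when it is the first prolongation of its projection to $\Rk\times Q\times\Rk$, which in coordinates means precisely $y_\alpha^a(t) = \tparder{y^a}{t^\alpha}(t)$. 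Comparing with the second equation above, holonomy forces $B_\alpha^a\circ\psi = y_\alpha^a$; if this holds for all integral sections through all points (which exist by integrability), then $B_\alpha^a = y_\alpha^a$ identically, i.e. $\mathbf{\Gamma}$ is a \textsc{sopde}. Conversely, if $\mathbf{\Gamma}$ is a \textsc{sopde}, then $B_\alpha^a = y_\alpha^a$, so any integral section satisfies $\tparder{y^a}{t^\alpha} = y_\alpha^a\circ\psi$, which is exactly the holonomy condition; hence every integral section is holonomic.

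The mild subtlety — which I would flag as the main point requiring care rather than a genuine obstacle — is the direction "holonomic integral sections $\Rightarrow$ \textsc{sopde}": one must invoke integrability of $\mathbf{\Gamma}$ to guarantee that every point of the total space lies on some integral section, so that the pointwise identity $B_\alpha^a(p) = y_\alpha^a(p)$ can be deduced everywhere from its validity along integral sections. This is implicit in the definition of integrability recalled earlier in the excerpt. All remaining steps are the routine bookkeeping of writing $\T\psi\circ\tparder{}{t^\alpha}$ in coordinates and reading off components, which I would present compactly rather than in full detail.
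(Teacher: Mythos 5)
Your proof is correct and follows exactly the route the paper intends: the paper states this proposition without an explicit proof (referring to the coordinate discussion around the local form of a \textsc{sopde} and the analogous computation in the $k$-contact section), and that implicit argument is precisely your comparison of $J^\alpha(\Gamma_\alpha)=\Delta$, i.e.\ $B^a_\alpha=y^a_\alpha$, with the first-order system $\partial y^a/\partial t^\alpha = B^a_\alpha\circ\psi$ satisfied by integral sections and the holonomy condition $y^a_\alpha\circ\psi=\partial y^a/\partial t^\alpha$. You also correctly flag the only delicate point, namely that integrability is needed in the converse direction so that the identity $B^a_\alpha=y^a_\alpha$ propagates from integral sections to the whole manifold.
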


It is important to point out that the product manifold $\Rk\times\bigoplus^k\T Q\times\Rk$ does not have a canonical $k$-cocontact structure, in contrast to what happens to the manifold $\Rk\times\bigoplus^k\cT Q\times\Rk$, where we do have a natural $k$-cocontact structure as seen in Example \ref{ex:canonical-k-cocontact-structure}. In what follows we will show that, in favourable cases, given a Lagrangian function $L$ defined on $\Rk\times\bigoplus^k\T Q\times\Rk$ one can build up a $k$-cocontact structure.

\subsubsection{\texorpdfstring{$k$}--cocontact Lagrangian systems}

\begin{definition}
    A {\sl\textbf{Lagrangian function}} on $\Rk\times\bigoplus^k\T Q\times\Rk$ is a function $L\colon\Rk\times\bigoplus^k\T Q\times\Rk\to\R$.
    \begin{itemize}
        \item The {\sl\textbf{Lagrangian energy}} associated to the Lagrangian function $L$ is the function $E_L\in\Cinfty(\Rk\times\bigoplus^k\T Q\times\Rk)$ given by $E_L = \Delta(L) - L$.
        \item The {\sl\textbf{Cartan forms}} associated to the Lagrangian $L$ are
        $$ \theta_L^\alpha = \transp{J^\alpha}\circ\d L\in\Omega^1(\Rk\times\textstyle\bigoplus\nolimits^k\T Q\times\Rk)\,,\qquad \omega_L^\alpha = -\d\theta_L^\alpha\in\Omega^2(\Rk\times\textstyle\bigoplus\nolimits^k\T Q\times\Rk)\,, $$
        where $\transp{J^\alpha}$ denotes the transpose of $J^\alpha$.
        \item The {\sl\textbf{contact forms}} associated to the Lagrangian $L$ are
        $$ \eta_L^\alpha = \d s^\alpha - \theta_L^\alpha\in\Omega^1(\Rk\times\textstyle\bigoplus\nolimits^k\T Q\times\Rk)\,. $$
        \item The couple $(\Rk\times\bigoplus^k\T Q\times\Rk, L)$ is a $k$-{\sl\textbf{cocontact Lagrangian system}}.
    \end{itemize}
\end{definition}

It is clear that $\d\eta_L^\alpha = \omega_L^\alpha$. The local expressions in natural coordinates $(x^\alpha, y^a, y_\alpha^a, s^\alpha)$ of the objects introduced in the previous definition are
\begin{align}
    E_L &= y_\alpha^a\parder{L}{y_\alpha^a} - L\,,\\
    \theta^\alpha_L &= \parder{L}{y_\alpha^a}\d y^a\,,\\
    \eta_L^\alpha &= \d s^\alpha - \parder{L}{y_\alpha^a}\d y^a\,,\\
    \d\eta_L^\alpha &= \parderr{L}{x^\beta}{y_\alpha^a}\d y^a\wedge\d x^\beta + \parderr{L}{y^b}{y_\alpha^a}\d y^a\wedge\d y^b + \parderr{L}{y_\beta^b}{y_\alpha^a}\d y^a\wedge\d y_\beta^b + \parderr{L}{s^\beta}{y_\alpha^a}\d y^a\wedge\d s^\beta\,.
\end{align}

\begin{definition}
    Given a Lagrangian function $L\colon\Rk\times\bigoplus^k\T Q\times\Rk\to\R$, the {\sl\textbf{Legendre map}} of $L$ is its fibre derivative as a function on the vector bundle $\tau_\circ\colon\Rk\times\bigoplus^k\T Q\times\Rk\to \Rk\times Q\times\Rk$. Namely, the Legendre map of a Lagrangian function $L\colon\Rk\times\bigoplus^k\T Q\times\Rk\to\R$ is the map
    $$ \F L\colon \Rk\times\textstyle\bigoplus\nolimits^k\T Q\times\Rk\longrightarrow\Rk\times\textstyle\bigoplus\nolimits^k\cT Q\times\Rk $$
    given by
    $$ \F L(t, {v_q}_1,\dotsc, {v_q}_k, z) = (t, \F L(t, \cdot, z)({v_q}_1,\dotsc, {v_q}_k), z)\,, $$
    where $\F L(t, \cdot, z)$ denotes the Legendre map of the Lagrangian function with $t$ and $z$ freezed.
\end{definition}

In natural coordinates $(x^\alpha, y^a, y_\alpha^a,s^\alpha)$, the Legendre map has local expression
$$
    \F L(x^\alpha, y^a, y^a_\alpha, s^\alpha) = \left( x^\alpha, y^a, \parder{L}{y^a_\alpha}, s^\alpha \right)\,.
$$
\begin{proposition}
    The Cartan forms satisfy
    $$ \theta_L^\alpha = (\pi_2^\alpha\circ\F L)^\ast\theta\,,\qquad \omega_L^\alpha = (\pi_2^\alpha\circ\F L)^\ast\omega\,, $$
    where $\theta\in\Omega^1(\cT Q)$ and $\omega = -\d\theta\in\Omega^2(\cT Q)$ are the Liouville and symplectic canonical forms of the cotangent bundle $\cT Q$.
\end{proposition}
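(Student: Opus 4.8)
The plan is to establish the identity for $\theta_L^\alpha$ first and then get the one for $\omega_L^\alpha$ for free. Since $\omega_L^\alpha = -\d\theta_L^\alpha$ by definition, $\omega = -\d\theta$ on $\cT Q$, and pullback commutes with the exterior differential, the identity $\theta_L^\alpha = (\pi_2^\alpha\circ\F L)^\ast\theta$ immediately yields
$$
\omega_L^\alpha \;=\; -\d\theta_L^\alpha \;=\; -\d\bigl((\pi_2^\alpha\circ\F L)^\ast\theta\bigr) \;=\; (\pi_2^\alpha\circ\F L)^\ast(-\d\theta) \;=\; (\pi_2^\alpha\circ\F L)^\ast\omega\,.
$$
So all the content sits in the first equality. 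This reduction also explains, \emph{a posteriori}, why the coordinate expression of $\d\eta_L^\alpha = \omega_L^\alpha$ carries $\d x^\beta$- and $\d s^\beta$-terms: they come from differentiating the coefficient $\partial L/\partial y^a_\alpha$, which depends on the space-time and action variables, so there is nothing to verify about them separately.

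For the identity $\theta_L^\alpha = (\pi_2^\alpha\circ\F L)^\ast\theta$ I would compute in the natural coordinates $(x^\beta, y^a, y^a_\beta, s^\beta)$ of $\mathbf{P} = \Rk\times\bigoplus^k\T Q\times\Rk$ and $(y^a, p_a)$ of $\cT Q$. From the coordinate form of the Legendre map and of the projection $\pi_2^\alpha\colon\Rk\times\bigoplus^k\cT Q\times\Rk\to\cT Q$ (which keeps the $\alpha$-th momentum block), the composite $\pi_2^\alpha\circ\F L$ reads
$$
(\pi_2^\alpha\circ\F L)(x^\beta, y^a, y^a_\beta, s^\beta) \;=\; \Bigl(y^a,\ \parder{L}{y^a_\alpha}\Bigr)\,,
$$
that is, $y^a\circ(\pi_2^\alpha\circ\F L) = y^a$ and $p_a\circ(\pi_2^\alpha\circ\F L) = \partial L/\partial y^a_\alpha$. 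Since $\theta = p_a\,\d y^a$, pulling back gives $(\pi_2^\alpha\circ\F L)^\ast\theta = \frac{\partial L}{\partial y^a_\alpha}\,\d y^a$, which is precisely the local expression of $\theta_L^\alpha = \transp{J^\alpha}\circ\d L$ recorded above; this proves the claim.

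As a coordinate-free alternative, worth including as a short remark, one can invoke the tautological property of the Liouville form: for any smooth map $F\colon N\to\cT Q$, setting $g := \pi_Q\circ F$ with $\pi_Q\colon\cT Q\to Q$, one has $(F^\ast\theta)_z(w) = \bigl\langle F(z),\, T_z g(w)\bigr\rangle$ for every $w\in T_zN$. Applying this with $N = \mathbf{P}$ and $F = \pi_2^\alpha\circ\F L$: the base map $g$ is just the projection $\mathbf{P}\to Q$, so $T g(w)$ is the $Q$-horizontal part of $w$, and $F(z)$ is the partial fibre derivative of $L$ in the $\alpha$-th velocity. The pairing of these two is exactly $\d_z L$ evaluated on the vertical lift $J^\alpha(w)$ into the $\alpha$-th $\T Q$-summand, i.e.\ $(\transp{J^\alpha}\circ\d L)(w) = \theta_L^\alpha(w)$.

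I do not foresee a real obstacle: this is the $k$-cocontact version of the classical relation $\theta_L = \F L^\ast\theta$ from tangent-bundle and $k$-symplectic geometry, and the extra $\Rk$-factors (the variables $x^\alpha$ and $s^\alpha$) enter only as inert parameters. The only points needing a little care are correctly identifying the composite $\pi_2^\alpha\circ\F L$ — in particular that it lands in a single copy of $\cT Q$ and selects the $\alpha$-th momentum slot — and keeping the logical order, deriving the $\omega_L^\alpha$-identity from the $\theta_L^\alpha$-one by naturality of $\d$ rather than by a separate computation.
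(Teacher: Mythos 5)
Your proof is correct. The paper states this proposition without proof (it is imported from the $k$-cocontact reference \cite{Riv_23}), so there is no paper argument to compare against; your verification is the standard one: in natural coordinates $\pi_2^\alpha\circ\F L$ sends $(x^\beta,y^a,y^a_\beta,s^\beta)$ to $\bigl(y^a,\partial L/\partial y^a_\alpha\bigr)$, so it pulls $\theta=p_a\,\d y^a$ back to $\dfrac{\partial L}{\partial y^a_\alpha}\,\d y^a=\theta_L^\alpha$, and the identity for $\omega_L^\alpha$ then follows from $\omega_L^\alpha=-\d\theta_L^\alpha$ and the commutation of pullback with $\d$; the coordinate-free remark via the tautological property of the Liouville form is a correct optional complement.
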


The regularity of the Legendre map characterizes the Lagrangian functions which yield $k$-cocontact structures on the phase bundle $\Rk\times\bigoplus^k\T Q\times\Rk$.

\begin{proposition}\label{prop:regular-lagrangian}
Consider a Lagrangian function $L\colon \Rk\times\bigoplus^k\T Q\times\Rk\to\R$. The following are equivalent:
    \begin{enumerate}[{\rm(1)}]
\item The Legendre map $\F L$ is a local diffeomorphism.
\item The family $(\tau^\alpha = \d x^\alpha, \eta^\alpha_L)$ is a $k$-cocontact structure on $\Rk\times\bigoplus^k\T Q\times\Rk$.
    \end{enumerate}
\end{proposition}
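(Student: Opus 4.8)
The plan is to verify the five conditions of Definition \ref{dfn:k-cocontact-manifold} for the family $(\tau^\alpha=\d x^\alpha,\eta_L^\alpha)$, checking that each is equivalent to — or implied by — the non-degeneracy of the Hessian $\left(\partial^2 L/\partial y_\alpha^a\partial y_\beta^b\right)$, which by the standard fibre-derivative argument is equivalent to $\F L$ being a local diffeomorphism. Conditions (1), (2), (4), and (5) are actually automatic: since $\tau^\alpha=\d x^\alpha$, the forms $\d x^1,\dots,\d x^k$ are everywhere independent, so $\D^{\rm S}=\langle\partial/\partial x^\alpha\rangle^\circ$ has constant corank $k$ (condition (2)); since $\eta_L^\alpha=\d s^\alpha-(\partial L/\partial y_\alpha^a)\d y^a$, the top-degree term in $\eta_L^1\wedge\dots\wedge\eta_L^k$ contains $\d s^1\wedge\dots\wedge\d s^k$ and hence never vanishes, giving condition (1); the wedge $\tau^1\wedge\dots\wedge\tau^k\wedge\eta_L^1\wedge\dots\wedge\eta_L^k$ is likewise nonzero because the $\d x$'s and $\d s$'s are independent, which handles the corank statements in (4); and for (5) one notes $R_\alpha^x=\partial/\partial x^\alpha$ (so $\D^{\rm S}$ is the span of the $x$-directions is complementary in the right way), while any vector in $\D^{\rm C}\cap\D^{\rm S}$ annihilates all $\d x^\alpha$ and all $\eta_L^\alpha$, forcing its $\partial/\partial x^\alpha$ and $\partial/\partial s^\alpha$ components to vanish; then $\iota_v\d\eta_L^\alpha=0$ is needed for membership in $\D^{\rm R}$, and this is where the Hessian enters. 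So the crux is condition (3): $\D^{\rm R}$ is a regular distribution of rank $2k$, together with the rank statements in (4) concerning $\D^{\rm R}$.

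First I would compute $\d\eta_L^\alpha=\omega_L^\alpha$ in natural coordinates — its expression is already displayed in the excerpt — and determine the kernel of the bundle map $\widehat{\d\eta_L^\alpha}$ intersected over $\alpha$. Writing a general tangent vector $v = a^\beta\partial/\partial x^\beta + b^a\partial/\partial y^a + c_\beta^a\partial/\partial y_\beta^a + d^\beta\partial/\partial s^\beta$, the condition $\iota_v\d\eta_L^\alpha=0$ for all $\alpha$ splits, after collecting the coefficients of $\d x^\beta$, $\d y^b$, $\d y_\beta^b$, and $\d s^\beta$, into a linear system in the unknowns $a^\beta,b^a,c_\beta^a,d^\beta$. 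The key sub-block is the coefficient of $\d y_\beta^b$, which reads $-b^a\,\partial^2 L/\partial y_\beta^b\partial y_\alpha^a = 0$ (summed on $a$, for each $\alpha,\beta,b$); when the Hessian is non-degenerate this forces $b^a=0$, and then the remaining equations are solved by $a^\beta,d^\beta$ free with $c_\beta^a$ determined, yielding $\dim\ker = 2k$ at every point, i.e. a regular rank-$2k$ distribution — this is condition (3). Conversely, if the Hessian is degenerate at a point, a nonzero $b^a$ in its kernel produces extra kernel vectors (with $c_\beta^a$ adjusted to kill the $\d x$- and $\d y$-terms), so $\D^{\rm R}$ either jumps rank or exceeds $2k$; one must check this degeneracy genuinely obstructs condition (3) (rather than merely failing one of the cleaner conditions), which is the direction that makes the equivalence, not just the implication, go through.

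Once $\D^{\rm R}$ is identified as $\langle\partial/\partial x^\alpha + (\text{correction}), \partial/\partial s^\alpha\rangle$ — concretely, $R_\alpha^x$ and $R_\alpha^s$ of Theorem \ref{thm:Reeb-vector-fields}, which in the regular Lagrangian case take the form $R_\alpha^s=\partial/\partial s^\alpha - W^{\cdots}\partial^2 L/\partial s^\cdots\cdots\,\partial/\partial y^\cdots$ analogous to the $k$-contact Reeb fields displayed earlier and $R_\alpha^x=\partial/\partial x^\alpha - W^{\cdots}\partial^2 L/\partial x^\cdots\cdots\,\partial/\partial y^\cdots$ — the rank statements in (4) are immediate: $\D^{\rm R}\cap\D^{\rm C}$ is spanned by the $R_\alpha^s$ (those annihilating every $\d x^\beta$), hence rank $k$; $\D^{\rm R}\cap\D^{\rm S}$ is spanned by the $R_\alpha^x$ (those annihilating every $\eta_L^\beta$), hence rank $k$; and the triple intersection in (5) is zero because a vector in all three must have vanishing $x$- and $s$-components and, being in $\D^{\rm C}$, vanishing $y$-components too, while lying in $\D^{\rm R}$ then forces the $y_\beta^a$-components to vanish as well by the same Hessian argument. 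I expect the main obstacle to be purely bookkeeping: keeping the four families of coefficients in the linear system straight and isolating the Hessian block cleanly, together with arguing carefully that degeneracy of the Hessian really does break condition (3) and is not silently repaired — but conceptually the proof is just "the Hessian block controls the rank of the Reeb distribution", exactly parallel to Proposition \ref{Prop-regLag} in the $k$-contact case.
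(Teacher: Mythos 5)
Your overall strategy is the same as the paper's (which compresses everything into one line: in natural coordinates the fibre Hessian of $\F L$ is $\bigl(\parderr{L}{y^a_\alpha}{y^b_\beta}\bigr)$, and both statements amount to its being everywhere nonsingular), and your forward direction, nondegenerate Hessian $\Rightarrow$ conditions (1)--(5) of Definition \ref{dfn:k-cocontact-manifold}, is correct as sketched. The genuine gap is in the converse, the very step you flag as delicate. You propose to show that degeneracy of the Hessian obstructs condition (3), i.e.\ forces $\rk\D^\rmR\neq 2k$, by producing ``a nonzero $b^a$ in its kernel''. This fails twice over. First, a kernel element of the Hessian is an $nk$-component object $u=(u^a_\alpha)$, not an $n$-component $b^a$; degeneracy does not give a nonzero $b^a$ with $\parderr{L}{y^a_\alpha}{y^b_\beta}\,b^a=0$ for all $\alpha,\beta,b$, so the extra kernel vectors with $b\neq 0$ that you invoke need not exist. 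Second, and more seriously, degeneracy need not change $\rk\D^\rmR$ at all: take $k=n=1$, coordinates $(x,y,v,s)$, and a point where $\parderr{L}{v}{v}=0$ but $\parderr{L}{x}{v}=A\neq 0$, $\parderr{L}{s}{v}=D\neq 0$; there $\d\eta_L=\d y\wedge(A\,\d x+D\,\d s)$, whose kernel still has rank $2=2k$, and even the rank statements in (4) survive. What breaks is condition (5), since $\partial/\partial v\in\D^\rmC\cap\D^\rmR\cap\D^\rmS$ at that point.

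That observation is also the correct (and short) general converse: if $u=(u^a_\alpha)\neq 0$ lies in the kernel of the Hessian at some point, then the vertical vector $v=u^a_\alpha\,\partial/\partial y^a_\alpha$ satisfies $\inn{v}\d x^\beta=0$, $\inn{v}\eta_L^\beta=0$, and $\inn{v}\d\eta_L^\beta=-\parderr{L}{y^b_\gamma}{y^a_\beta}\,u^b_\gamma\,\d y^a=0$, hence $0\neq v\in\D^\rmC\cap\D^\rmR\cap\D^\rmS$, contradicting condition (5). So the implication (2)$\Rightarrow$(1) should be routed through condition (5) (or the rank-$k$ parts of (4)), not through (3); with that replacement your argument closes. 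Two cosmetic slips besides: $\D^\rmR\cap\D^\rmC$ is spanned by the $R^x_\alpha$ (these annihilate the $\eta_L^\beta$) and $\D^\rmR\cap\D^\rmS$ by the $R^s_\alpha$ (these annihilate the $\d x^\beta$) --- you have the two swapped, though both intersections do have rank $k$ either way.
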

\begin{proof}
    Taking natural coordinates $(x^\alpha, y^a, y^a_\alpha, s^\alpha)$, We have
    \begin{align}
        \F^2 L(x^\alpha, y^a, y^a_\alpha, s^\alpha) &= \left( x^\alpha, y^a, W_{ij}^{\alpha\beta}, s^\alpha \right)\,,\quad \text{where } W_{ij}^{\alpha\beta} = \left( \parderr{L}{y^a_\alpha}{y^b_\beta} \right)\,.
    \end{align}
    The conditions in the proposition mean that the matrix $W = (W_{ij}^{\alpha\beta})$ is everywhere nonsingular.
\end{proof}

\begin{definition}
    A Lagrangian function $L\colon \Rk\times\bigoplus^k\T Q\times\Rk\to\R$ is said to be {\sl\textbf{regular}} if the equivalent statements in Proposition \ref{prop:regular-lagrangian} hold. Otherwise $L$ is said to be {\sl\textbf{singular}}. In addition, if the Legendre map $\F L$ is a global diffeomorphism, $L$ is a {\sl\textbf{hyperregular}} Lagrangian.
\end{definition}

Let $(\Rk\times\bigoplus^k\T Q\times\Rk, L)$ be a regular $k$-cocontact Lagrangian system. By Theorem \ref{thm:Reeb-vector-fields}, the Reeb vector fields $(R^x_L)_\alpha, (R^s_L)_\alpha\in\X(\Rk\times\bigoplus^k\T Q\times\Rk)$ are uniquely given by the relations
\begin{gather}
    \inn{(R^x_L)_\alpha}\d\eta_L^\beta = 0\,,\qquad \inn{(R^x_L)_\alpha}\eta_L^\beta = 0\,,\qquad \inn{(R^x_L)_\alpha}\d x^\beta = \delta_\alpha^\beta\,,\\
    \inn{(R^s_L)_\alpha}\d\eta_L^\beta = 0\,,\qquad \inn{(R^s_L)_\alpha}\eta_L^\beta = \delta_\alpha^\beta\,,\qquad \inn{(R^s_L)_\alpha}\d x^\beta = 0\,.
\end{gather}
The local expressions of the Reeb vector fields are
\begin{gather}
    (R^x_L)_\alpha = \parder{}{x^\alpha} - W_{\gamma\beta}^{ji}\parderr{L}{x^\alpha}{y^b_\gamma}\parder{}{y^a_\beta}\,,\quad
    (R^s_L)_\alpha = \parder{}{s^\alpha} - W_{\gamma\beta}^{ji}\parderr{L}{s^\alpha}{y^b_\gamma}\parder{}{y^a_\beta}\,,
\end{gather}
where $W_{\alpha\beta}^{ij}$ is inverse of the Hessian matrix $W_{ij}^{\alpha\beta} = \left( \dparderr{L}{y^a_\alpha}{y^b_\beta} \right)$, namely
$$ W_{\alpha\beta}^{ij}\parderr{L}{y^b_\beta}{y^k_\gamma} = \delta^a_k\delta^\gamma_\alpha\,. $$

\subsubsection{\texorpdfstring{$k$}--cocontact Euler--Lagrange equations}

We have proved in the previous section that every regular $k$-cocontact Lagrangian system $(\Rk\times\bigoplus^k\T Q\times\Rk,L)$ yields the $k$-cocontact Hamiltonian system $(\Rk\times\bigoplus^k\T Q\times\Rk,\tau^\alpha = \d x^\alpha, \eta^\alpha, E_L)$. Taking this into account, we can define:

\begin{definition}
    Let $(\Rk\times\bigoplus^k\T Q\times\Rk,L)$ be a $k$-cocontact Lagrangian system. The {\sl\textbf{$k$-cocontact Euler--Lagrange equations for a holonomic map}} $\psi\colon\Rk\to \Rk\times\bigoplus^k\T Q\times\Rk$ are
    \begin{equation}\label{eq:EL-map}
        \begin{dcases}
            \inn{\psi_\alpha'}\d\eta_L^\alpha = \left(\d E_L - (\Lie_{(R^x_L)_\alpha}E_L)\d x^\alpha - (\Lie_{(R^s_L)_\alpha}E_L)\eta_L^\alpha\right)\circ\psi\,,\\
            \inn{\psi_\alpha'}\eta_L^\alpha = -E_L\circ\psi\,,\\
            \inn{\psi_\alpha'}\d x^\beta = \delta_\alpha^\beta\,.
        \end{dcases}
    \end{equation}
    The {\sl\textbf{$k$-cocontact Lagrangian equations for a $k$-vector field}} $\ \bfX = (X_\alpha)\in\X^k(\Rk\times\bigoplus^k\T Q\times\Rk)$ are
    \begin{equation}\label{eq:EL-field}
        \begin{dcases}
            \inn{X_\alpha}\d\eta_L^\alpha = \d E_L - (\Lie_{(R^x_L)_\alpha}E_L)\d x^\alpha - (\Lie_{(R^s_L)_\alpha}E_L)\eta_L^\alpha\,,\\
            \inn{X_\alpha}\eta_L^\alpha = -E_L\,,\\
            \inn{X_\alpha}\d x^\beta = \delta_\alpha^\beta\,.
        \end{dcases}
    \end{equation}
    A $k$-vector field $\bfX$ solution to equations \eqref{eq:EL-field} is said to be a {\sl\textbf{$k$-cocontact Lagrangian vector field}}.
\end{definition}

The next proposition states that, if the Lagrangian $L$ is regular, the Lagrangian equations \eqref{eq:EL-field} always have solutions, although they are not unique in general.

\begin{proposition}
    Consider a regular $k$-cocontact Lagrangian system $(\Rk\times\bigoplus^k\T Q\times\Rk, L)$. Then, the $k$-cocontact Lagrangian equations \eqref{eq:EL-field} admit solutions. They are not unique if $k>1$.
\end{proposition}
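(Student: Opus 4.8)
The plan is to observe that, once $L$ is regular, the statement is essentially a translation of Proposition~\ref{prop:k-cocontact-HDW-have-solutions}. By Proposition~\ref{prop:regular-lagrangian}, regularity of $L$ means precisely that $(\tau^\alpha = \d x^\alpha,\ \eta_L^\alpha)$ is a $k$-cocontact structure on $\mathbf{P} = \Rk\times\bigoplus^k\T Q\times\Rk$; hence $(\mathbf{P},\d x^\alpha,\eta_L^\alpha,E_L)$ is a $k$-cocontact Hamiltonian system, whose space-time and contact Reeb vector fields are exactly the $(R^x_L)_\alpha$ and $(R^s_L)_\alpha$ singled out by Theorem~\ref{thm:Reeb-vector-fields} and written down above. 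Under this identification the $k$-cocontact Lagrangian equations \eqref{eq:EL-field} coincide, term by term, with the $k$-cocontact Hamilton--De Donder--Weyl equations \eqref{eq:HDW-field} for the Hamiltonian $h = E_L$. Consequently the existence of a solution $\bfX$, together with its non-uniqueness when $k>1$, follows at once from Proposition~\ref{prop:k-cocontact-HDW-have-solutions}.

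To make this self-contained and to exhibit where the freedom lives, I would also run the computation directly in natural coordinates $(x^\alpha,y^a,y^a_\alpha,s^\alpha)$, in parallel with \eqref{eq:HDW-field-Darboux}. Writing $X_\alpha = A_\alpha^\beta\,\partial/\partial x^\beta + B_\alpha^a\,\partial/\partial y^a + C_{\alpha\beta}^a\,\partial/\partial y^a_\beta + D_\alpha^\beta\,\partial/\partial s^\beta$ and substituting the coordinate expressions of $\eta_L^\alpha$, $\d\eta_L^\alpha$, $E_L$ and of the Reeb vector fields, equations \eqref{eq:EL-field} decompose as follows: the third equation gives $A_\alpha^\beta = \delta_\alpha^\beta$; the $\d y^a_\beta$-components of the first equation reduce to $(B_\alpha^a - y^a_\alpha)\,\partial^2 L/\partial y^b_\beta\partial y^a_\alpha = 0$, and since the Hessian $W_{ij}^{\alpha\beta}$ is everywhere nonsingular this forces $B_\alpha^a = y^a_\alpha$ (so every solution is automatically a semi-holonomic $k$-vector field, just as in the $k$-contact case); the second equation then fixes the trace $D_\alpha^\alpha = L$; and the $\d y^b$-components of the first equation form a linear, Euler--Lagrange-type system for the remaining coefficients $C_{\alpha\beta}^a$ whose principal part is built from $W_{ij}^{\alpha\beta}$. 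Because $W$ is invertible, this last system is solvable, and a global smooth solution $\bfX$ exists (this is the content of Proposition~\ref{prop:k-cocontact-HDW-have-solutions}); when $k=1$ it determines the single family $C_{11}^a$ uniquely, while for $k>1$ free components remain and uniqueness fails.

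The only step that requires genuine care is the bookkeeping in the $\d x^\beta$- and $\d s^\beta$-components of the first equation of \eqref{eq:EL-field}: one must check that the contributions of $(\Lie_{(R^x_L)_\alpha}E_L)\,\d x^\alpha$ and $(\Lie_{(R^s_L)_\alpha}E_L)\,\eta_L^\alpha$ on the right-hand side precisely cancel the terms produced by the explicit $x^\alpha$- and $s^\alpha$-dependence of $L$, so that these two blocks become identities once $B_\alpha^a = y^a_\alpha$ has been imposed, and what genuinely remains is only the relation in the $\d y^a_\beta$-block and the linear system for the $C_{\alpha\beta}^a$. This cancellation is the exact analogue of the manipulation passing from \eqref{A-E-L-eqs20}--\eqref{A-E-L-eqs30} to \eqref{eq2} in the $k$-contact setting; once it is verified the conclusion is immediate --- and, as noted, the whole statement can in any case simply be quoted from Proposition~\ref{prop:k-cocontact-HDW-have-solutions} via the reduction in the first paragraph.
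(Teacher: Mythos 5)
Your proposal is correct and follows essentially the route the paper intends: the paper states this proposition without a separate proof, but its framing (``every regular $k$-cocontact Lagrangian system yields the $k$-cocontact Hamiltonian system $(\Rk\times\bigoplus^k\T Q\times\Rk,\d x^\alpha,\eta_L^\alpha,E_L)$'') is precisely your reduction to Proposition~\ref{prop:k-cocontact-HDW-have-solutions}, and your coordinate analysis reproduces equations \eqref{eq:k-contact-Lagrangian-1}--\eqref{eq:k-contact-Lagrangian-5} together with the regularity argument on the Hessian that the paper gives immediately after the statement.
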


Consider a map $\psi\colon\Rk\to\Rk\times\bigoplus^k\T Q\times\Rk$ with local expression in natural coordinates $\psi(r) = (x^\alpha(r),y^a(r),y^a_\alpha(r),s^\alpha(r))$, where $r = (r^1,\dotsc,r^k)\in\Rk$. Then, equations \eqref{eq:EL-map} for the map $\psi$ read
\begin{equation}\label{eq:EL-map-coordinates}
    \begin{dcases}
        \parder{x^\beta}{r^\alpha} = \delta_\alpha^\beta\,,\\
        \parder{}{r^\alpha}\left( \parder{L}{y^a_\alpha} \circ\psi\right) = \left( \parder{L}{y^a} + \parder{L}{s^\alpha}\parder{L}{y^a_\alpha} \right)\circ\psi\,,\\
        \parder{(s^\alpha)}{r^\alpha} = L\circ\psi\,.
    \end{dcases}
\end{equation}
For a $k$-vector field $\bfX = (X_\alpha)\in\X^k(\Rk\times\bigoplus^k\T Q\times\Rk)$, with local expression in natural coordinates
$$ X_\alpha = A_\alpha^\beta\parder{}{x^\beta} + B_\alpha^a\parder{}{y^a} + C_{\alpha \beta}^a\parder{}{y_\beta^a} + D_\alpha^\beta\parder{}{s^\beta}\,, $$
equations \eqref{eq:EL-field} read
\begin{align}
    0 &= A_\alpha^\beta - \delta_\alpha^\beta\,,\label{eq:k-contact-Lagrangian-1}\\
    0 &= \left( B_\alpha^b - y_\alpha^b \right)\parderr{L}{y^b_\alpha}{s^\beta}\,,\label{eq:k-contact-Lagrangian-2}\\
    0 &= \left( B_\alpha^b - y_\alpha^b \right)\parderr{L}{y^b_\alpha}{x^\beta}\,,\label{eq:k-contact-Lagrangian-2.5}\\
    0 &= \left( B_\alpha^b - y_\alpha^b \right)\parderr{L}{y^a_\beta}{y^b_\alpha}\,,\label{eq:k-contact-Lagrangian-3}\\
    0 &= \left( B_\alpha^b - y_\alpha^b \right)\parderr{L}{y^a}{y^b_\alpha} + \parder{L}{y^a} - \parderr{L}{x^\alpha}{y_\alpha^a} - \parderr{L}{y^b}{y^a_\alpha}B_\alpha^b \nonumber\\
    & \qquad\qquad\qquad\qquad\qquad\qquad - \parderr{L}{y^b_\beta}{y^a_\alpha}C_{\alpha\beta}^b - \parderr{L}{s^\beta}{y^a_\alpha}D_\alpha^\beta + \parder{L}{s^\alpha}\parder{L}{y^a_\alpha} \,,\label{eq:k-contact-Lagrangian-4}\\
    0 &= L + \parder{L}{y^a_\alpha}\left( B_\alpha^a - y^a_\alpha \right) - D_\alpha^\alpha\,.\label{eq:k-contact-Lagrangian-5}
\end{align}
If the Lagrangian function $L$ is regular, equations \eqref{eq:k-contact-Lagrangian-3} yield the conditions $B_\alpha^a = y_\alpha^a$, namely the $k$-vector field $\bfX$ has to be a \textsc{sopde}. In this case, equations \eqref{eq:k-contact-Lagrangian-2} and \eqref{eq:k-contact-Lagrangian-2.5} hold identically and equations \eqref{eq:k-contact-Lagrangian-1}, \eqref{eq:k-contact-Lagrangian-4} and \eqref{eq:k-contact-Lagrangian-5} yield
\begin{align}
    A_\alpha^\beta &= \delta_\alpha^\beta\,,\label{eq:k-contact-Lagrangian-regular-1}\\
    \parder{L}{y^a} + \parder{L}{s^\alpha}\parder{L}{y^a_\alpha} &= \parderr{L}{x^\alpha}{y_\alpha^a} + \parderr{L}{y^b}{y^a_\alpha}y^b_\alpha + \parderr{L}{y^b_\beta}{y^a_\alpha}C_{\alpha\beta}^b + \parderr{L}{s^\beta}{y^a_\alpha}D_\alpha^\beta\,,\label{eq:k-contact-Lagrangian-regular-2}\\
    D_\alpha^\alpha &= \L\,.\label{eq:k-contact-Lagrangian-regular-3}
\end{align}
If the \textsc{sopde} $\bfX$ is integrable, equations \eqref{eq:k-contact-Lagrangian-regular-1}, \eqref{eq:k-contact-Lagrangian-regular-2} and \eqref{eq:k-contact-Lagrangian-regular-3} are the Euler--Lagrange equations \eqref{eq:EL-map-coordinates} for its integral maps. Thus, we have proved the following:

\begin{proposition}\label{prop:Euler-Lagrange}
    Let $L\colon \Rk\times\bigoplus^k\T Q\times\Rk\to\R$ be a regular Lagrangian and consider a Lagrangian $k$-vector field $\bfX$, namely a solution to equations \eqref{eq:EL-field}. Then $\bfX$ is a \textsc{sopde} and if, in addition, $\bfX$ is integrable, its integral sections are solutions to the $k$-cocontact Euler--Lagrange equations \eqref{eq:EL-map}.

    The \textsc{sopde} $\bfX$ is called an {\sl\textbf{Euler--Lagrange $k$-vector field}} associated to the Lagrangian function $L$.
\end{proposition}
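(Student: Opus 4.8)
The plan is to prove everything in the natural coordinates $(x^\alpha,y^a,y^a_\alpha,s^\alpha)$, reducing the intrinsic equations \eqref{eq:EL-field} to the component system \eqref{eq:k-contact-Lagrangian-1}--\eqref{eq:k-contact-Lagrangian-5} and then invoking regularity. First I would write a generic $k$-vector field as $X_\alpha = A_\alpha^\beta\parder{}{x^\beta} + B_\alpha^a\parder{}{y^a} + C_{\alpha\beta}^a\parder{}{y_\beta^a} + D_\alpha^\beta\parder{}{s^\beta}$ and expand the three equations in \eqref{eq:EL-field} one at a time. The equation $\inn{X_\alpha}\d x^\beta=\delta_\alpha^\beta$ gives \eqref{eq:k-contact-Lagrangian-1} immediately; the equation $\inn{X_\alpha}\eta_L^\alpha=-E_L$, using $\eta_L^\alpha=\d s^\alpha-\parder{L}{y^a_\alpha}\d y^a$ and $E_L=y^a_\alpha\parder{L}{y^a_\alpha}-L$, gives \eqref{eq:k-contact-Lagrangian-5}. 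The work lies in the equation $\inn{X_\alpha}\d\eta_L^\alpha=\d E_L-(\Lie_{(R^x_L)_\alpha}E_L)\d x^\alpha-(\Lie_{(R^s_L)_\alpha}E_L)\eta_L^\alpha$: one inserts $X_\alpha$ into the four-term coordinate expression of $\d\eta_L^\alpha$ recorded above, expands the right-hand side using the explicit local forms of the Reeb vector fields $(R^x_L)_\alpha$, $(R^s_L)_\alpha$ (available here because, by Proposition \ref{prop:regular-lagrangian}, regularity guarantees $(\d x^\alpha,\eta_L^\alpha)$ is a genuine $k$-cocontact structure), and then matches coefficients in the cobasis $\{\d x^\beta,\d y^a,\d y^a_\beta,\d s^\beta\}$. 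The coefficients of $\d s^\beta$, $\d x^\beta$ and $\d y^a_\beta$ yield \eqref{eq:k-contact-Lagrangian-2}, \eqref{eq:k-contact-Lagrangian-2.5}, \eqref{eq:k-contact-Lagrangian-3}, and the coefficient of $\d y^a$ yields \eqref{eq:k-contact-Lagrangian-4}.

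The decisive step is then algebraic. Equation \eqref{eq:k-contact-Lagrangian-3} says that the Hessian $W_{ij}^{\alpha\beta}=\parderr{L}{y^a_\alpha}{y^b_\beta}$, viewed as a linear operator on the fibre velocities indexed by the multi-index $(b,\alpha)$, annihilates $B_\alpha^b-y_\alpha^b$; since $L$ is regular this matrix is everywhere invertible, so $B_\alpha^a=y_\alpha^a$ for all $\alpha,a$. Recalling $J^\alpha=\parder{}{y^a_\alpha}\otimes\d y^a$ and $\Delta=y^a_\alpha\parder{}{y^a_\alpha}$, this is precisely the identity $J^\alpha(X_\alpha)=\Delta$, i.e.\ $\bfX$ is a \textsc{sopde}. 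Substituting $B_\alpha^a=y_\alpha^a$ back, equations \eqref{eq:k-contact-Lagrangian-2} and \eqref{eq:k-contact-Lagrangian-2.5} hold identically and \eqref{eq:k-contact-Lagrangian-1}, \eqref{eq:k-contact-Lagrangian-4}, \eqref{eq:k-contact-Lagrangian-5} collapse to \eqref{eq:k-contact-Lagrangian-regular-1}--\eqref{eq:k-contact-Lagrangian-regular-3}.

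For the final assertion, suppose moreover that $\bfX$ is integrable. By the characterisation of \textsc{sopde}s among integrable $k$-vector fields, every integral section of $\bfX$ is holonomic, hence of the form $\psi=(x^\alpha(r),y^a(r),\parder{y^a}{r^\alpha}(r),s^\alpha(r))$, and along it $\parder{x^\beta}{r^\alpha}=A_\alpha^\beta\circ\psi$, $\parder{y^a_\beta}{r^\alpha}=C_{\alpha\beta}^a\circ\psi$, $\parder{s^\beta}{r^\alpha}=D_\alpha^\beta\circ\psi$. Evaluating \eqref{eq:k-contact-Lagrangian-regular-1}--\eqref{eq:k-contact-Lagrangian-regular-3} along $\psi$ and applying the chain rule to $\parder{}{r^\alpha}\big(\parder{L}{y^a_\alpha}\circ\psi\big)$ turns the middle equation into the second line of \eqref{eq:EL-map-coordinates}, while the first and last become the other two lines; these are exactly the coordinate form of the intrinsic Euler--Lagrange equations \eqref{eq:EL-map}. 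The main obstacle is the bookkeeping in the first paragraph: correctly expanding $\d\eta_L^\alpha$ together with the two Reeb correction terms and matching the whole thing cobasis-element by cobasis-element; once the component system is written down, the regularity argument and the chain-rule identification are short and mechanical.
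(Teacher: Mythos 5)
Your proposal is correct and follows essentially the same route as the paper: the paper's proof is precisely the coordinate expansion of \eqref{eq:EL-field} into \eqref{eq:k-contact-Lagrangian-1}--\eqref{eq:k-contact-Lagrangian-5}, the use of regularity of the Hessian in \eqref{eq:k-contact-Lagrangian-3} to force $B_\alpha^a=y_\alpha^a$ (the \textsc{sopde} condition), and the identification of the remaining equations along holonomic integral sections with \eqref{eq:EL-map-coordinates}. No gaps to report.
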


\begin{remark}
    In the case $k = 1$, we recover the cocontact Lagrangian formalism presented in \cite{LGGMR_23} for time-dependent contact Lagrangian systems.
\end{remark}

\subsection{\texorpdfstring{$k$}{}-Cocontact Hamiltonian formalism}

Now, the developments stated in Section \ref{secHamsys2} are used to develop the Hamiltonian formalism for action-dependent field theories in this formulation.

In the $k$-cocontact formulation,
action-dependent Hamiltonian field theories is developed in the product bundle $\mathbf{P}^*= \Rk\times\bigoplus^k\T^*Q\times\Rk$ endowed with natural coordinates $(x^\alpha, y^a, p^a_\alpha, s^\alpha)$.
Then, regular or $k$-cocontact Hamiltonian field theories take place in the canonical $k$-cocontact manifold
$(\Rk\times\oplus^k\Tan^*Q\times\Rk,\tau^\alpha,\theta^\alpha)$,
giving a {\sl Hamiltonian function} $\mathcal{H}\in\Cinfty(\Rk\times\oplus^k\Tan^*Q\times\Rk)$.

\begin{remark}
[{\sl The canonical $k$-cocontact Hamiltonian system associated with a $k$-cocontact Lagrangian system\/}]
Let $(\Rk\times\bigoplus^k\T Q\times\Rk, L)$ be a $k$-cocontact Lagrangian system.
    If the Lagrangian function $L$ is regular or hyperregular, the Legendre map $\F L$ is a (local) diffeomorphism between $\Rk\times\bigoplus^k\T Q\times\Rk$ and $\Rk\times\bigoplus^k\cT Q\times\Rk$ such that $\F L^\ast\eta^\alpha = \eta_L^\alpha$. In addition, there exists, at least locally, a function $h\in\Cinfty(\Rk\times\bigoplus^k\cT Q\times\Rk)$ such that $h \circ \F L = E_L$. Then, we have the $k$-cocontact Hamiltonian system $(\Rk\times\bigoplus^k\cT Q\times\Rk, \eta^\alpha, h)$, for which $\F L_\ast (R^x_L)_\alpha = R^x_\alpha$ and $\F L_\ast (R^s_L)_\alpha = R^s_\alpha$. If $\mathbf{\Gamma}$ is an Euler--Lagrange $k$-vector field associated to the Lagrangian function $L$ in $\Rk\times\bigoplus^k\T Q\times\Rk$, we have that the $k$-vector field $\bfX = \F L_\ast\mathbf{\Gamma}$ is a $k$-cocontact Hamiltonian $k$-vector field associated to $h$ in $\Rk\times\bigoplus^k\T Q\times\Rk$, and conversely.
\end{remark}

%%%%%%%%%%%%%%%%%%%%%%%%%%%%%%%%%%%%%
\section{Multicontact field theories}
\label{multicontactf}
%%%%%%%%%%%%%%%%%%%%%%%%%%%%%%%%%%%%%%%%%%%%%%%%%%

The {\sl multicontact formulation} is the most general geometric framework
to describe action-dependent field theories.
It was first introduced in \cite{LGMRR_23,LGMRR_25},
although a more general definition of multicontact structure has recently been proposed in \cite{LIR_25}.
(You can find all the details on this structure and its applications in these references).

\subsection{Multicontact structures}

First, following \cite{LIR_25}, we define:

\begin{definition}
Let $\mathcal{M}$ be a differentiable manifold.
A form $\Theta\in\df^k(\mathcal{M})$, with $\dim \mathcal{M}>k$,
is a {\sl \textbf{multicontact form}} in $\mathcal{M}$ if:
\begin{enumerate}[{\rm(1)}]
    \item
$\ker \Theta\cap\ker\d \Theta=\{0\}$, and
    \item 
$\ker\d\Theta\neq\{0\}$.
    \end{enumerate}
    Then, the pair $(\mathcal{M},\Theta)$ is said to be a {\sl \textbf{multicontact manifold}}.
\end{definition}

The properties of this kind of structure have been studied in \cite{LIR_25}.
Nevertheless, as is already the case with multisymplectic structures \cite{LMS_03,Mar_88}, 
the existence of adapted or Darboux coordinates is not guaranteed for these multicontact forms, 
unless additional conditions are imposed \cite{LGMRR_23}.

Thus, let $\mathcal{M}$ with $\dim{\mathcal{M}}=k+N$ be a manifold, with $N\geq k\geq 1$;
and let $\Theta,\omega\in\df^k(\mathcal{M})$ be two $k$-forms with constant ranks.
Given a regular distribution $\mathcal{D}\subset\Tan \mathcal{M}$, consider the $\Cinfty(\mathcal{M})$-module of sections $\Gamma(\mathcal{D})$ and, for every $m\in\mathbb{N}$, define
the set of $m$-forms on $\mathcal{M}$ vanishing by the vector fields in $\Gamma(\mathcal{D})$; that is,
\[
\mathcal{A}^m(\mathcal{D}):=\big\{ \alpha\in\df^m(\mathcal{M})\mid
\inn{Z}\alpha=0 \,,\ \mbox{\rm for every}\ Z\in\Gamma({\cal D})\big\}=
\big\{ \alpha\in\df^m(\mathcal{M}) \mid 
\Gamma(\cal D)\subset\ker\alpha \big\} \,,
\]
where $\ker\alpha =\{ Z\in\vf(\mathcal{M})\mid \inn{Z}\alpha=0\}$
(it is the `$1$-$\ker$ of a form $\alpha\in\df^m(\mathcal{M})$, with $m>1$). Then,

\begin{definition}
\label{def:reeb_dist}
The {\sl\textbf{Reeb distribution}} associated with
the pair $(\Theta,\omega)$ is the distribution
${\cal D}^{\mathfrak{R}}\subset\Tan \mathcal{M}$ defined as
\[
{\cal D}^{\mathfrak{R}}=\big\{ Z\in\ker\omega\ \mid\ \inn{Z}\dd\Theta\in {\cal A}^k(\ker\omega)\big\} \ .
\]
The set of sections of the Reeb distribution is denoted by $\mathfrak{R}:=\Gamma({\cal D}^{\mathfrak{R}})$,
and its elements $R\in\mathfrak{R}$ are called {\sl\textbf{Reeb vector fields}}.
If $\ker\omega$ has a constant rank, then
\begin{equation}
\label{Reebdef}
\mathfrak{R}=\big\{ R\in\Gamma(\ker\omega)\,|\, \inn{R}\dd\Theta\in {\cal A}^k(\ker\omega)\big\} \,.
\end{equation}
\end{definition}

Note that $\ker\omega\cap\ker\dd\Theta\subset{\cal D}^\mathfrak{R}$.
Furthermore, if $\omega\in\df^k(\mathcal{M})$ is a closed form and has a constant rank, then $\mathfrak{R}$ is involutive. Therefore:

\begin{definition}
\label{multicontactdef}
The pair $(\Theta,\omega)$ is a {\sl\textbf{special multicontact structure}} on $\mathcal{M}$ if $\omega\in\df^k(\mathcal{M})$ is a closed form, and we have the following properties:
\begin{enumerate}[{\rm (1)}]
\item\label{prekeromega}
$ \rank\ker\omega=N$.
\item\label{prerankReeb}
$ \rank{\cal D}^{\mathfrak{R}}=k$.
\item\label{prerankcar}
$ \rank\left(\ker\omega\cap\ker\Theta\cap\ker\d\Theta\right)=0$.
\item \label{preReebComp}
${\cal A}^{k-1}(\ker\omega)=\{\inn{R}\Theta\mid R\in \mathfrak{R}\}$.
\end{enumerate}
Then, 
the triple $(\mathcal{M},\Theta,\omega)$ is said to be a {\sl\textbf{special multicontact manifold}},
$(\Theta,\omega)$ is a {\sl\textbf{special multicontact structure}},
and $\Theta\in\df^k(\mathcal{M})$ is a {\sl\textbf{special multicontact form}} on $\mathcal{M}$.
\end{definition}

The following proposition presents an essential characteristic of special multicontact structures.

\begin{proposition}
\label{sigma}
Let $(\mathcal{M},\Theta,\omega)$ be a multicontact manifold, then
there exists a unique $1$-form
$\sigma_{\Theta}\in\df^1(\mathcal{M})$, called the {\sl\textbf{dissipation form}}, satisfying
\[
\sigma_{\Theta}\wedge\inn{R}\Theta=\inn{R}\dd\Theta \,,\qquad \text{for every }\,R\in\mathfrak{R} \,.
\]
\end{proposition}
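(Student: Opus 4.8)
The plan is to reduce the defining relation of $\sigma_{\Theta}$ to a fibrewise identity in exterior algebra and then solve it using properties (1)--(4) of Definition~\ref{multicontactdef}. First I would fix the underlying linear picture: since $\omega$ has constant rank and $\rank\ker\omega=N$, the distribution $\ker\omega$ is regular of corank $k$, its annihilator $(\ker\omega)^\circ\subset\cT\mathcal{M}$ is a subbundle of rank $k$, and for every $m$ one has $\mathcal{A}^m(\ker\omega)=\Lambda^m\big((\ker\omega)^\circ\big)$ --- an $m$-form whose $1$-kernel contains $\ker\omega_p$ must lie in $\Lambda^m$ of the annihilator, by a standard basis argument. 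In particular $\mathcal{A}^{k-1}(\ker\omega)$ is a vector bundle of rank $\binom{k}{k-1}=k$ and $\mathcal{A}^{k}(\ker\omega)$ is a line bundle. For $R\in\mathfrak{R}$ one has $\inn{R}\dd\Theta\in\mathcal{A}^{k}(\ker\omega)$ by Definition~\ref{def:reeb_dist}, and $\inn{R}\Theta\in\mathcal{A}^{k-1}(\ker\omega)$ by condition~(4); indeed condition~(4) states exactly that the $\Cinfty(\mathcal{M})$-linear map $\Psi\colon\mathfrak{R}\to\mathcal{A}^{k-1}(\ker\omega)$, $R\mapsto\inn{R}\Theta$, is surjective on sections. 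Since $\mathfrak{R}=\Gamma(\mathcal{D}^{\mathfrak{R}})$ with $\rank\mathcal{D}^{\mathfrak{R}}=k$ (condition~(2)), $\Psi$ is induced by a surjective morphism between vector bundles of equal rank $k$, hence is a vector bundle isomorphism $\mathcal{D}^{\mathfrak{R}}\xrightarrow{\ \sim\ }\mathcal{A}^{k-1}(\ker\omega)$.

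Next I would solve the equation fibre by fibre. Fix $p\in\mathcal{M}$ and write $W^\circ:=(\ker\omega_p)^\circ$, a $k$-dimensional subspace of $\cT_p\mathcal{M}$, so that $\mathcal{A}^{k-1}(\ker\omega)_p=\Lambda^{k-1}W^\circ$ and $\mathcal{A}^{k}(\ker\omega)_p=\Lambda^{k}W^\circ$. Composing with the inverse of $\Psi$, the requirement ``$\sigma\wedge\inn{R}\Theta=\inn{R}\dd\Theta$ for all $R\in\mathcal{D}^{\mathfrak{R}}_p$'' is equivalent to ``$\sigma\wedge\gamma=F(\gamma)$ for all $\gamma\in\Lambda^{k-1}W^\circ$'', where $F\colon\Lambda^{k-1}W^\circ\to\Lambda^{k}W^\circ$, $F(\gamma):=\inn{\Psi^{-1}(\gamma)}\dd\Theta$, is linear. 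Writing $\sigma=\sigma'+\sigma''$ with $\sigma'\in W^\circ$ and $\sigma''$ in a fixed complement of $W^\circ$, each $\sigma''\wedge\gamma$ lies in a summand of $\Lambda^{k}\cT_p\mathcal{M}$ complementary to $\Lambda^{k}W^\circ$, whereas $F(\gamma)\in\Lambda^{k}W^\circ$; hence the equation forces $\sigma''\wedge\gamma=0$ for every $\gamma$, and a one-line computation with a basis of $W^\circ$ gives $\sigma''=0$. So $\sigma$ must lie in $W^\circ$, and it only remains to observe that the linear map $W^\circ\to\Hom\big(\Lambda^{k-1}W^\circ,\Lambda^{k}W^\circ\big)$, $\sigma'\mapsto(\gamma\mapsto\sigma'\wedge\gamma)$, is injective (another short basis check) between spaces of equal dimension $k$, hence bijective. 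Therefore there is a unique $\sigma_{\Theta}(p)\in W^\circ$ with $\sigma_{\Theta}(p)\wedge(-)=F$; and since the defining relation is $\Cinfty(\mathcal{M})$-linear in $R$, it suffices that it hold for the values at $p$ of a local frame of $\mathcal{D}^{\mathfrak{R}}$, so this $\sigma_{\Theta}(p)$ works for every $R\in\mathfrak{R}$ at once.

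Finally, for smoothness and global existence I would work over an open set on which $\mathcal{D}^{\mathfrak{R}}$ admits a frame $R_1,\dots,R_k$ and $(\ker\omega)^\circ$ admits a frame $\beta^1,\dots,\beta^k$; then $\mu:=\beta^1\wedge\cdots\wedge\beta^k$ frames $\mathcal{A}^{k}(\ker\omega)$, the $\inn{R_\alpha}\Theta$ frame $\mathcal{A}^{k-1}(\ker\omega)$ through $\Psi$, and with $\inn{R_\alpha}\dd\Theta=c_\alpha\,\mu$, $c_\alpha\in\Cinfty(\mathcal{M})$, the previous paragraph presents $\sigma_{\Theta}$ as a combination of $\beta^1,\dots,\beta^k$ whose coefficients depend smoothly (via an inverse matrix) on the $c_\alpha$ and on the invertible, smoothly varying transition between $\{\inn{R_\alpha}\Theta\}$ and $\{\beta^1\wedge\cdots\widehat{\beta^\alpha}\cdots\wedge\beta^k\}$; hence $\sigma_{\Theta}$ is smooth. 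By the fibrewise uniqueness already obtained, these local forms agree on overlaps and patch to a single $\sigma_{\Theta}\in\df^1(\mathcal{M})$, which is the unique solution. The step I expect to be the main obstacle is precisely the passage from the module identity in condition~(4), together with the rank conditions~(1) and~(2), to the vector bundle isomorphism $\Psi$, along with the identification $\mathcal{A}^m(\ker\omega)=\Lambda^m\big((\ker\omega)^\circ\big)$; once these are secured, everything else is linear algebra performed pointwise.
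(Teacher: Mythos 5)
Your argument is correct, and it is worth noting that the survey itself states Proposition~\ref{sigma} without proof (the result is imported from the multicontact references), so there is no in-paper argument to match line by line; judged on its own, your reconstruction works. The chain you build is sound: constancy of $\rank\ker\omega$ makes $(\ker\omega)^\circ$ a rank-$k$ subbundle and identifies $\mathcal{A}^{m}(\ker\omega)$ with sections of $\Lambda^m\big((\ker\omega)^\circ\big)$; condition~(4), upgraded from a statement about global sections to pointwise surjectivity via bump functions, combined with $\rank\mathcal{D}^{\mathfrak{R}}=k$ from condition~(2), gives the bundle isomorphism $\Psi\colon\mathcal{D}^{\mathfrak{R}}\to\Lambda^{k-1}\big((\ker\omega)^\circ\big)$, $v\mapsto\inn{v}\Theta$; and the fibrewise equation $\sigma\wedge\gamma=F(\gamma)$ is then solved uniquely because $W^\circ\to\Hom\big(\Lambda^{k-1}W^\circ,\Lambda^{k}W^\circ\big)$ is an isomorphism between $k$-dimensional spaces, with the complementary component of $\sigma$ forced to vanish. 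Two small observations: your proof shows that conditions (1), (2) and (4) alone suffice (condition (3) is never used), which is a slightly sharper statement than the one quoted; and it also yields the extra information, implicit in the references, that $\sigma_\Theta$ takes values in $(\ker\omega)^\circ$, consistent with the coordinate expression \eqref{sigmacoord} that the survey obtains through the adapted coordinates of Theorem~\ref{cor:coordenadas}. The smoothness-and-gluing step via local frames and the $\Cinfty(\mathcal{M})$-linearity in $R$ is handled adequately, so I see no gap.
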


And, using this form, we introduce:

\begin{definition}\label{def:bard}
\label{bard}
Let $\sigma_{\Theta}\in\df^1(\mathcal{M})$ be the dissipation form. 
We define the operator
\begin{align*}
\bd:\df^m(\mathcal{M})&\longrightarrow\df^{m+1}(\mathcal{M})
\\
\beta&\longmapsto\bd\beta=\dd \beta+\sigma_{\Theta}\wedge\beta\,.
\end{align*}
\end{definition}

The multicontact structures corresponding to action-dependent field theories arising from the Herglotz variational principle (see \cite{GLMR_24}) satisfy the following additional requirement.

\begin{definition}\label{def:variational}
Let \((\mathcal{M},\Theta,\omega)\) be a multicontact manifold satisfying
\begin{equation}
\iota_X\iota_{Y}\Theta = 0 \,, \quad
\text{for every } X,Y\in\Gamma(\ker\omega) \,.
\end{equation}
Then $(\mathcal{M},\Theta,\omega)$ is a {\sl\textbf{variational multicontact manifold}} and $(\Theta,\omega)$ is said to be a {\sl\textbf{variational multicontact structure}}.
\end{definition}

The next theorem states the existence of canonical coordinates for these last kinds of multicontact structures:

\begin{theorem} 
\label{cor:coordenadas}
Given a special multicontact manifold $(\mathcal{M},\Theta,\omega)$;
around every point ${\rm p}\in\mathcal{M}$,  there exists a local chart of {\rm adapted coordinates} 
$(U;x^\mu,u^I,s^\mu,)$ ($1\leq\mu\leq k$, $1\leq I\leq N-k$) such that
$$
\ker\omega=\Big<\frac{\partial}{\partial u^I},\frac{\partial}{\partial s^\mu}\Big> \, , \qquad
{\cal D}^\mathfrak{R}=\Big<\frac{\partial}{\partial s^\mu}\Big> \, ;
$$
and the coordinates $(x^\mu)$ can be chosen such that 
$$\omega\vert_U=\dd x^1\wedge\dots\wedge\dd x^k\equiv\dd^k x\ , $$
(and so we shall do henceforth).
In addition, if $(P,\Theta,\omega)$is a variational multicontact manifold, then the local expression of the multicontact form is
\begin{equation}
\label{thetacoor}
\Theta\vert_U=H(x^\nu,u^J,s^\nu)\,\d^kx+f_I^\mu(x^\nu,u^J)\,\d u^I\wedge\d^{k-1}x_\mu+
\d s^\mu\wedge\d^{k-1}x_\mu\,;
\end{equation}
where $\displaystyle\d^{k-1}x_\mu=\inn{\frac{\partial}{\partial x^\mu}}\d^kx$.
Furthermore, in these coordinates,
\begin{equation}
\label{sigmacoord}
\sigma_{\Theta}\vert_U=\parder{H}{s^\mu}\,\d x^\mu\,,
\end{equation}
\end{theorem}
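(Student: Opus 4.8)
The plan is to build the chart in two stages — first straightening $\omega$, then adapting the coordinates to the Reeb distribution — and only afterwards to read off the coordinate expressions of $\Theta$ and $\sigma_{\Theta}$ from the structure equations. First, since $\omega\in\df^k(\mathcal{M})$ is closed and $\ker\omega$ has constant rank $N$, the distribution $\ker\omega$ is involutive: for $Z,Z'\in\Gamma(\ker\omega)$ one has $\inn{[Z,Z']}\omega=\Lie_Z\inn{Z'}\omega-\inn{Z'}(\d\inn{Z}\omega+\inn{Z}\d\omega)=0$. By the Frobenius theorem there is a chart $(x^\mu,w^A)$, with $1\leq\mu\leq k$ and $1\leq A\leq N$, in which $\ker\omega=\langle\partial/\partial w^A\rangle$; then $\omega=g\,\d x^1\wedge\dotsb\wedge\d x^k$ for some function $g$ (nowhere zero, since $\rank\ker\omega=N$ forces $\omega\neq0$), and $\d\omega=0$ gives $g=g(x^\nu)$. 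Replacing $x^1$ by $\int g\,\d x^1$ and leaving the other $x^\mu$ unchanged, we get $\omega=\d^kx$ without altering $\ker\omega$.

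Next I adapt the chart to the Reeb distribution. By the remark preceding Definition~\ref{multicontactdef}, $\mathcal{D}^{\mathfrak{R}}\subset\ker\omega$ is involutive and of rank $k$. From Proposition~\ref{sigma}, $\inn{R}\d\Theta=\sigma_{\Theta}\wedge\inn{R}\Theta$ for $R\in\mathfrak{R}$, so $\inn{R}\Theta=0$ forces $R\in\ker\omega\cap\ker\Theta\cap\ker\d\Theta=\{0\}$; together with the property $\mathcal{A}^{k-1}(\ker\omega)=\{\inn{R}\Theta\mid R\in\mathfrak{R}\}$ from Definition~\ref{multicontactdef}, this makes $R\mapsto\inn{R}\Theta$ a $\Cinfty(\mathcal{M})$-module isomorphism $\mathfrak{R}\to\mathcal{A}^{k-1}(\ker\omega)$. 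In the chart of the previous step, $\mathcal{A}^{k-1}(\ker\omega)$ is freely generated by the forms $\d^{k-1}x_\mu=\inn{\partial/\partial x^\mu}\d^kx$ (a $(k-1)$-form killed by $\ker\omega$ has no $\d w^A$ factor), so there is a unique local frame $R_1,\dotsc,R_k$ of $\mathfrak{R}$ with $\inn{R_\mu}\Theta=\d^{k-1}x_\mu$. These fields commute: using that $\d^{k-1}x_\mu$ is closed, that $\inn{R_\mu}\d^{k-1}x_\nu=0$ (as $R_\mu\in\ker\d^kx$), and that $\inn{R_\mu}\d\Theta=\sigma_{\Theta}\wedge\d^{k-1}x_\mu$, one gets $\inn{[R_\mu,R_\nu]}\Theta=-(\inn{R_\nu}\sigma_{\Theta})\,\d^{k-1}x_\mu$; writing $[R_\mu,R_\nu]=\gamma^\lambda_{\mu\nu}R_\lambda$ (by involutivity), antisymmetry in $\mu,\nu$ and the linear independence of the $\d^{k-1}x_\mu$ force $\gamma^\lambda_{\mu\nu}=0$ (the case $k=1$ being trivial). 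Since the $R_\mu$ lie in $\ker\omega$ they annihilate the $x^\mu$, so straightening this commuting frame and completing it to a chart yields coordinates $(x^\mu,u^I,s^\mu)$ with $\partial/\partial s^\mu=R_\mu$, $\ker\omega=\langle\partial/\partial u^I,\partial/\partial s^\mu\rangle$, $\mathcal{D}^{\mathfrak{R}}=\langle\partial/\partial s^\mu\rangle$, and still $\omega=\d^kx$. This proves the adapted-chart part of the theorem.

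Assume now that $(\mathcal{M},\Theta,\omega)$ is variational. Since $\ker\omega=\langle\partial/\partial u^I,\partial/\partial s^\mu\rangle$, the condition $\inn{X}\inn{Y}\Theta=0$ for $X,Y\in\Gamma(\ker\omega)$ says exactly that each monomial of $\Theta$ involves at most one of the differentials $\d u^I,\d s^\mu$; as $\Theta$ has degree $k$ and there are exactly $k$ coordinates $x^\mu$, this forces $\Theta=H\,\d^kx+f^\mu_I\,\d u^I\wedge\d^{k-1}x_\mu+G^\mu_{\;\nu}\,\d s^\nu\wedge\d^{k-1}x_\mu$ for some functions $H,f^\mu_I,G^\mu_{\;\nu}$. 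Contracting with $\partial/\partial s^\nu=R_\nu$ gives $\inn{R_\nu}\Theta=G^\mu_{\;\nu}\,\d^{k-1}x_\mu=\d^{k-1}x_\nu$, hence $G^\mu_{\;\nu}=\delta^\mu_\nu$. Next, because $\partial/\partial s^\lambda=R_\lambda\in\mathfrak{R}$ we have $\inn{\partial/\partial s^\lambda}\d\Theta\in\mathcal{A}^k(\ker\omega)$; from $\d\Theta=\d H\wedge\d^kx+\d f^\mu_I\wedge\d u^I\wedge\d^{k-1}x_\mu$ one computes $\inn{\partial/\partial u^J}\inn{\partial/\partial s^\lambda}\d\Theta=(\partial f^\mu_J/\partial s^\lambda)\,\d^{k-1}x_\mu$, which must vanish, so $f^\mu_I=f^\mu_I(x^\nu,u^J)$; this gives \eqref{thetacoor}. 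Finally $\d f^\mu_I$ is then free of $\d s^\mu$, whence $\inn{R_\mu}\d\Theta=(\partial H/\partial s^\mu)\,\d^kx$, and the $1$-form $(\partial H/\partial s^\mu)\,\d x^\mu$ satisfies the defining relation of Proposition~\ref{sigma} on the frame $R_\mu$ (both sides equal $c^\mu(\partial H/\partial s^\mu)\,\d^kx$ on $R=c^\mu R_\mu$); by uniqueness $\sigma_{\Theta}=(\partial H/\partial s^\mu)\,\d x^\mu$, which is \eqref{sigmacoord}.

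The one genuinely non-formal point is the commutation $[R_\mu,R_\nu]=0$: it is exactly what allows the Reeb frame to be straightened to $\partial/\partial s^\mu$, and it relies on Proposition~\ref{sigma} and on $\rank(\ker\omega\cap\ker\Theta\cap\ker\d\Theta)=0$. Everything else is careful bookkeeping of the residual coordinate freedom, making sure that the three successive straightenings — of $\omega$, of the nested pair $\mathcal{D}^{\mathfrak{R}}\subset\ker\omega$, and of the Reeb frame — stay mutually compatible and keep $\omega=\d^kx$; the degenerate features of the case $k=1$ (where the $\d^{k-1}x_\mu$ reduce to a single generator) should also be checked but cause no difficulty.
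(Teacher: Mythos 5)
Your proof is correct, and it follows essentially the same route as the one the paper relies on (the theorem is quoted here from \cite{LGMRR_23,LIR_25} without proof): Frobenius applied to the closed, constant-rank form $\omega$ to get $\omega=\d^kx$, the local Reeb frame $R_\mu$ with $\inn{R_\mu}\Theta=\d^{k-1}x_\mu$ obtained from conditions (3)--(4) and Proposition~\ref{sigma}, its commutativity, simultaneous straightening to $\partial/\partial s^\mu$, and then reading off \eqref{thetacoor} and \eqref{sigmacoord} from the variational condition. The points you leave as ``bookkeeping'' (the parametric flow-box straightening that preserves the $x^\mu$, and the bump-function localization of conditions stated for global sections) are indeed routine and do not constitute gaps.
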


Moreover, we have the following local characterization of the Reeb vector fields:

\begin{proposition}
\label{reeblemma}
If $(\mathcal{M},\Theta,\omega)$ is a special multicontact manifold, in the above coordinate chart,
there exists a unique local basis
$\{ R_\mu\}$ of $\mathfrak{R}$ such that,
\begin{equation}
\label{reebcalcul}
(\inn{R_\mu}\Theta)\vert_U=\dd^{k-1}x_\mu\, .
\end{equation}
Moreover $[R_\mu,R_\nu]=0$.
\end{proposition}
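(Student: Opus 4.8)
The plan is to show that interior product with $\Theta$ identifies $\mathfrak{R}$ with the module $\mathcal{A}^{k-1}(\ker\omega)$, to compute the latter explicitly in the adapted chart of Theorem~\ref{cor:coordenadas}, and to read off from this identification both the existence and uniqueness of the $R_\mu$ and their commutativity. First I would record the relevant linear algebra in that chart: since $\ker\omega|_U=\big\langle\frac{\partial}{\partial u^I},\frac{\partial}{\partial s^\mu}\big\rangle$, its annihilator is $\mathcal{A}^1(\ker\omega)=\langle\d x^1,\dots,\d x^k\rangle$, so $\mathcal{A}^{k-1}(\ker\omega)$ is the free $\Cinfty(U)$-module with basis the $k$ forms $\d^{k-1}x_\mu=\inn{\frac{\partial}{\partial x^\mu}}\d^k x$; each of these is closed, being a combination of wedges of exact $1$-forms, and each satisfies $\inn{Z}\d^{k-1}x_\mu=0$ for every $Z\in\Gamma(\ker\omega)$.

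Next I would consider the $\Cinfty(U)$-linear map $\Phi\colon\mathfrak{R}\to\mathcal{A}^{k-1}(\ker\omega)$, $R\mapsto\inn{R}\Theta$. It is well defined and surjective by condition~\ref{preReebComp} of Definition~\ref{multicontactdef}, and it is injective: if $R\in\mathfrak{R}$ has $\inn{R}\Theta=0$, then $R\in\ker\Theta\cap\ker\omega$, while Proposition~\ref{sigma} gives $\inn{R}\d\Theta=\sigma_\Theta\wedge\inn{R}\Theta=0$, so $R\in\ker\omega\cap\ker\Theta\cap\ker\d\Theta=\{0\}$ by condition~\ref{prerankcar}. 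Since $\mathcal{D}^{\mathfrak{R}}$ is a regular rank-$k$ distribution (condition~\ref{prerankReeb}), $\mathfrak{R}$ is, on a possibly smaller $U$, a free $\Cinfty(U)$-module of rank $k$, so $\Phi$ is an isomorphism. Setting $R_\mu:=\Phi^{-1}(\d^{k-1}x_\mu)$ then produces a local basis $\{R_\mu\}$ of $\mathfrak{R}$ with $\inn{R_\mu}\Theta=\d^{k-1}x_\mu$, and uniqueness of such a family is exactly the injectivity of $\Phi$.

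For the bracket identity, note that $\mathfrak{R}$ is involutive because $\omega$ is closed of constant rank, so $[R_\mu,R_\nu]\in\mathfrak{R}$; by injectivity of $\Phi$ it therefore suffices to check $\inn{[R_\mu,R_\nu]}\Theta=0$. Applying $\inn{[R_\mu,R_\nu]}=\Lie_{R_\mu}\inn{R_\nu}-\inn{R_\nu}\Lie_{R_\mu}$ to $\Theta$: the first term is $\Lie_{R_\mu}\d^{k-1}x_\nu=\d\,\inn{R_\mu}\d^{k-1}x_\nu=0$, since $\d^{k-1}x_\nu$ is closed and $\inn{R_\mu}\d^{k-1}x_\nu=0$; for the second, Cartan's formula gives $\Lie_{R_\mu}\Theta=\d\,\inn{R_\mu}\Theta+\inn{R_\mu}\d\Theta=\d\,\d^{k-1}x_\mu+\inn{R_\mu}\d\Theta=\inn{R_\mu}\d\Theta$, and $\inn{R_\mu}\d\Theta\in\mathcal{A}^k(\ker\omega)$ by the definition~\eqref{Reebdef} of $\mathfrak{R}$, whence $\inn{R_\nu}\inn{R_\mu}\d\Theta=0$ because $R_\nu\in\Gamma(\ker\omega)$. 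Thus $\inn{[R_\mu,R_\nu]}\Theta=0$ and $[R_\mu,R_\nu]=0$. I expect the only genuinely delicate point to be the identification of $\mathcal{A}^{k-1}(\ker\omega)$ with the span of the $\d^{k-1}x_\mu$ in the adapted chart, together with careful bookkeeping of which forms annihilate $\ker\omega$; once that is settled, everything reduces to short applications of Cartan's identities and the defining conditions of a special multicontact structure.
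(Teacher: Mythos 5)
Your argument is correct. Note that the paper itself states Proposition \ref{reeblemma} without proof (it is quoted from the original multicontact references), so there is no in-paper argument to compare with; judged on its own, your route is sound and complete. Identifying $\mathfrak{R}$ with $\mathcal{A}^{k-1}(\ker\omega)$ through $R\mapsto\inn{R}\Theta$ --- surjectivity from condition (4) of Definition \ref{multicontactdef}, injectivity from Proposition \ref{sigma} together with condition (3) --- and then computing $\mathcal{A}^{k-1}(\ker\omega)\vert_U=\langle \d^{k-1}x_\mu\rangle$ in the adapted chart of Theorem \ref{cor:coordenadas} gives existence and uniqueness exactly as you say, and the bracket computation via $\inn{[R_\mu,R_\nu]}=\Lie_{R_\mu}\inn{R_\nu}-\inn{R_\nu}\Lie_{R_\mu}$, Cartan's formula, and $\inn{R_\mu}\d\Theta\in\mathcal{A}^k(\ker\omega)$ (the defining property \eqref{Reebdef}) is clean; a nice feature is that you never invoke the variational coordinate expression \eqref{thetacoor}, so the argument covers all special multicontact structures, whereas in the variational case one could shortcut to $R_\mu=\partial/\partial s^\mu$ directly from \eqref{thetacoor}. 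Two small points worth making explicit: condition (4) is phrased for globally defined forms and sections, while $\d^{k-1}x_\mu$ lives only on $U$, so to produce $R_\mu$ on $U$ one should multiply by a bump function supported in $U$ and shrink to the set where it equals $1$ (routine, but it is the step that legitimizes ``surjective by condition (4)'' locally); and, having obtained the $R_\mu$, their pointwise linear independence follows immediately from that of the $\d^{k-1}x_\mu$, which combined with $\rank\mathcal{D}^{\mathfrak{R}}=k$ is what upgrades them from a generating family to a genuine local frame of $\mathcal{D}^{\mathfrak{R}}$.
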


Finally, to establish the relation between multicontact and special multicontact structures, we need following result:

\begin{lemma}\label{lem:multicoMit} 
Let $(\mathcal{M},\Theta,\omega)$ be a special multicontact manifold. Then,
\begin{enumerate}
    \item $\ker \Theta\subset\ker\omega$\,.
    \item Moreover, if $(\mathcal{M},\Theta,\omega)$ is a variational
    multicontact manifold (see Definition \ref{def:variational}), then
    \begin{align}
        \ker \Theta\oplus D^{\mathfrak{R}}=\ker \omega\,.
    \end{align}
\end{enumerate}
\end{lemma}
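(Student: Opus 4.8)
The plan is to work entirely in the adapted (Darboux-like) coordinates provided by Theorem~\ref{cor:coordenadas}, since both parts of the lemma are statements about distributions and in these coordinates every relevant object has an explicit local expression. In such a chart we have $\omega\vert_U = \d^k x$, the kernel $\ker\omega = \langle \partial/\partial u^I, \partial/\partial s^\mu\rangle$, and the Reeb distribution ${\cal D}^\mathfrak{R} = \langle \partial/\partial s^\mu\rangle$; moreover, if the structure is variational, $\Theta$ is given by \eqref{thetacoor}. Throughout, a vector field will be written $X = a_\mu\,\partial/\partial x^\mu + b^I\,\partial/\partial u^I + c^\mu\,\partial/\partial s^\mu$, and I will compute $\inn{X}\Theta$ and $\inn{X}\omega$ componentwise.

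For part~(1), $\ker\Theta\subset\ker\omega$: since $\omega = \d^k x$, we have $\inn{X}\omega = a_\mu\,\d^{k-1}x_\mu$ (up to signs), so $X\in\ker\omega$ if and only if $a_\mu = 0$ for all $\mu$, i.e.\ $X$ has no $\partial/\partial x^\mu$ component. Now suppose $X\in\ker\Theta$ but $a_{\mu_0}\neq 0$ for some $\mu_0$. Contracting \eqref{thetacoor} with $\partial/\partial x^{\mu_0}$ produces, among other terms, the summand $\d s^{\mu_0}\wedge \d^{k-1}x_{\mu_0}$ contracted to give $\pm\,\d^{k-1}x_{\mu_0}$ — more precisely, $\inn{X}\Theta$ contains a term $a_{\mu_0}\,\d s^{\mu_0}\wedge(\text{contraction of }\d^{k-1}x_{\mu_0})$ and a term $\pm a_{\mu_0}\,(\text{piece of }\d^{k-1}x_{\mu_0})$ which cannot be cancelled by the $b^I,c^\mu$ contributions because those never produce a $\d^{k-2}x$-type form lying transverse to the $\d u^I$, $\d s^\mu$ directions in the same way. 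The cleanest way to package this: contract $\inn{X}\Theta = 0$ once more with a suitable $\partial/\partial u^J$ or simply read off the coefficient of a monomial involving $\d x^\nu$'s only, to force $a_\mu = 0$. Hence $X\in\ker\omega$. (Alternatively, and perhaps more robustly, one can avoid coordinates: if $X\notin\ker\omega$ then by condition~\eqref{prekeromega} and a dimension count $X$ pairs nontrivially with $\omega$, and the structural identity $\sigma_\Theta\wedge\inn{R}\Theta = \inn{R}\d\Theta$ from Proposition~\ref{sigma} together with condition~\eqref{preReebComp} constrains $\inn{X}\Theta$ enough to contradict $X\in\ker\Theta$; but the coordinate argument is shorter.)

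For part~(2), assume the structure is variational, so $\inn{X}\inn{Y}\Theta = 0$ for all $X,Y\in\Gamma(\ker\omega)$. First, $\ker\Theta$ and ${\cal D}^\mathfrak{R}$ intersect trivially: ${\cal D}^\mathfrak{R}\subset\ker\omega\cap\ker\Theta\cap\ker\d\Theta$ would follow from $R\in\ker\Theta$ and $R\in{\cal D}^\mathfrak{R}$ (using $\inn{R}\Theta = 0$ and, via Proposition~\ref{sigma}, $\inn{R}\d\Theta = \sigma_\Theta\wedge\inn{R}\Theta = 0$), contradicting condition~\eqref{prerankcar}; alternatively, $\inn{R_\mu}\Theta = \d^{k-1}x_\mu\neq 0$ by \eqref{reebcalcul}, so no nonzero Reeb field lies in $\ker\Theta$. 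Thus the sum $\ker\Theta + {\cal D}^\mathfrak{R}$ is direct. It remains to show the sum fills $\ker\omega$, and here I would argue by dimension: $\ker\omega$ has rank $N$, ${\cal D}^\mathfrak{R}$ has rank $k$ by condition~\eqref{prerankReeb}, so it suffices to prove $\rank(\ker\Theta\cap\ker\omega) = N-k$, i.e.\ $\rank\ker\Theta = N-k$ (using part~(1)). On $\ker\omega$, consider the map $X\mapsto \inn{X}\Theta$; the variational condition says $\inn{Y}(\inn{X}\Theta) = 0$ for all $Y\in\ker\omega$, so $\inn{X}\Theta\in{\cal A}^{k-1}(\ker\omega)$; by condition~\eqref{preReebComp} this space equals $\{\inn{R}\Theta\mid R\in\mathfrak{R}\}$, which has dimension $k$ (it is spanned by the $\inn{R_\mu}\Theta = \d^{k-1}x_\mu$, which are independent). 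The kernel of $X\mapsto\inn{X}\Theta$ restricted to $\Gamma(\ker\omega)$ is exactly $\ker\Theta\cap\ker\omega = \ker\Theta$, and the image is contained in the $k$-dimensional space ${\cal A}^{k-1}(\ker\omega)$. One checks the image is in fact all of it — every $\d^{k-1}x_\mu = \inn{R_\mu}\Theta$ is hit — so rank–nullity on the $N$-dimensional $\ker\omega$ gives $\rank\ker\Theta = N - k$. Combined with the directness, $\ker\Theta\oplus{\cal D}^\mathfrak{R} = \ker\omega$.

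The main obstacle I anticipate is part~(1): making rigorous, without getting lost in index bookkeeping, the claim that a vector field with a nonzero $\partial/\partial x^\mu$-component cannot annihilate the general multicontact form — in the variational case \eqref{thetacoor} makes this a short monomial-coefficient computation, but for a \emph{general} special multicontact $\Theta$ one does not have the explicit formula \eqref{thetacoor} and must instead lean on Propositions~\ref{sigma} and~\ref{reeblemma} and conditions~\eqref{prekeromega}–\eqref{preReebComp}. I would therefore present part~(1) via the coordinate-free route (exhibiting that $X\notin\ker\omega$ forces $\inn{X}\Theta$ to have a nonzero component detected by contracting with a Reeb field, using \eqref{reebcalcul} and the dissipation identity), reserving the explicit formula \eqref{thetacoor} for the dimension count in part~(2) where the variational hypothesis is available.
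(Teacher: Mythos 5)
Your proposal is correct, but it reaches the result by a partly different route than the paper, and it is worth separating the two parts. For part (1) your \emph{recommended} argument (the coordinate-free one: if $X\in\ker\Theta$ then $0=\inn{R_\mu}\inn{X}\Theta=-\inn{X}\inn{R_\mu}\Theta=-\inn{X}\,\d^{k-1}x_\mu$, which kills the $\partial/\partial x^\mu$-components of $X$) is essentially the paper's proof: the paper argues by contradiction that a $Y_p\in\ker\Theta_p\setminus\ker\omega_p$ would pair nontrivially with some $\alpha\in\mathcal{A}^{k-1}(\ker\omega)$, and by condition (4) of Definition \ref{multicontactdef} such an $\alpha$ is of the form $\inn{R}\Theta$, giving $0=\inn{Y_p}\inn{R_p}\Theta_p\neq 0$. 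You were right to distrust your primary coordinate computation: \eqref{thetacoor} is only available in the variational case, while part (1) is asserted for every special multicontact manifold, so the fallback route is the one to present (note also that both your detection argument and the paper's implicitly need $k\geq 2$, since for $k=1$ the relevant forms are functions). For part (2) your argument is genuinely different: the paper produces the splitting explicitly — given $Y_p\in\ker\omega_p$, the variational condition puts $\inn{Y_p}\Theta_p$ in $\mathcal{A}^{k-1}(\ker\omega)$, condition (4) supplies $R$ with $\inn{R_p}\Theta_p=\inn{Y_p}\Theta_p$, and $Y_p=(Y_p-R_p)+R_p$ with $Y_p-R_p\in\ker\Theta_p$ — and then quotes Lemma 3.5 of \cite{LGMRR_23} for $\ker\Theta\cap\mathcal{D}^{\mathfrak{R}}=\{0\}$. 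You instead run rank–nullity on the pointwise map $X\mapsto\inn{X}\Theta$ from the $N$-dimensional $\ker\omega_p$ onto the $k$-dimensional annihilator space (onto because it hits the independent forms $\d^{k-1}x_\mu=\inn{R_\mu}\Theta$ of Proposition \ref{reeblemma}), getting $\rank\ker\Theta=N-k$ and concluding by dimensions; and you prove directness self-containedly, either from $\inn{R}\d\Theta=\sigma_\Theta\wedge\inn{R}\Theta$ plus condition (3), or from the linear independence of the $\d^{k-1}x_\mu$. What the paper's route buys is an explicit decomposition valid pointwise with no rank bookkeeping; what yours buys is self-containedness (no external citation for directness) and, as a byproduct, the constancy of $\rank\ker\Theta=N-k$, which the paper never states. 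Both are sound.
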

\begin{proof}
    First, notice that condition $4$ of a special multicontact structure in Definition \ref{multicontactdef} implies that, for every $(m-1)$-form $\alpha$ that vanishes by the contraction of any element of $\ker \omega$, there exists a Reeb vector field $R\in\Gamma(\mathcal{D}^\mathfrak{R})$ be such that $\inn{R}\Theta=\alpha$.
    \begin{enumerate}
        \item Assume that at a point $p\in\mathcal{M}$ there exists an element $Y_p\in \T_p\mathcal{M}$ such that $Y_p\in \ker\Theta_p$ but $Y_p\notin \ker\omega_p$. Then, there exists a $(m-1)$ form $\alpha$ at $p$ that vanishes by the contraction of any element of $\ker \omega_p$ such that $\inn{Y_p}\alpha\neq 0$. Let $R\in \mathcal{D^\mathfrak{R}}$ such that $\inn{R}\Theta=\alpha$, at $p$. Then,
        \begin{align}
            0=\inn{Y_p}\inn{R_p}\Theta_p=\inn{Y_p}\alpha_p\,,
        \end{align}
        which is a contradiction.
        \item Due to the previous item and the definition of the Reeb distribution, clearly 
        \begin{align}
        \ker \Theta+D^{\mathfrak{R}}\subset\ker \omega\,.
        \end{align}
        Fix a point $p\in\mathcal{M}$ and let $Y_p\in\ker\omega_p$. Then $\inn{Y_p}\Theta_p$ vanishes by the action of any element of $\ker \omega_p$, because it is variational. Then, there exists $R\in \mathcal{D^\mathfrak{R}}$ such that $\inn{R_p}\Theta_p=\inn{Y_p}\Theta_p$.
        In particular, $\inn{(Y_p-R_p)}\Theta_p=0$. Therefore, we can decompose $Y_p=(Y_p-R_p)+R_p$ as a sum of an element of $\ker\Theta_p$ and an element of $D^{\mathfrak{R}}_p$.
        Finally, $\ker \Theta_p\cap D^{\mathfrak{R}}_p=\{0\}$ due to Lemma $3.5$ in \cite{LGMRR_23}.
    \end{enumerate}
\end{proof}

Then, taking this lemma into account, it is clear that:

\begin{proposition}
    If $(\mathcal{M},\Theta,\omega)$ is a special multicontact manifold, then $(\mathcal{M},\Theta)$ is a multicontact manifold.
\end{proposition}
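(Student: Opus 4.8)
The plan is to check, one at a time, the two requirements in the definition of a multicontact form (stated at the beginning of Section~\ref{multicontactf}): that $\ker\Theta\cap\ker\d\Theta=\{0\}$ and that $\ker\d\Theta\neq\{0\}$.

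For the first requirement I would simply invoke Lemma~\ref{lem:multicoMit}(1), which gives the pointwise inclusion $\ker\Theta\subset\ker\omega$. Since $\ker\Theta\cap\ker\d\Theta$ then already lies inside $\ker\omega$, it coincides with $\ker\omega\cap\ker\Theta\cap\ker\d\Theta$, and this distribution has rank $0$ by item~\ref{prerankcar} of Definition~\ref{multicontactdef}. Hence $\ker\Theta\cap\ker\d\Theta=\{0\}$, which is condition~(1) of a multicontact form.

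For the second requirement I would argue through the Reeb distribution. By item~\ref{prerankReeb} of Definition~\ref{multicontactdef} the distribution $\mathcal{D}^{\mathfrak{R}}$ has rank $k$, and by the very definition of $\mathfrak{R}$ one has $\inn{R}\d\Theta\in\mathcal{A}^k(\ker\omega)$ for every $R\in\mathfrak{R}$. Now $\ker\omega$ has constant corank $k$ (item~\ref{prekeromega} together with $\dim\mathcal{M}=k+N$), so $\mathcal{A}^k(\ker\omega)$ is, locally, a free $\Cinfty(\mathcal{M})$-module of rank one, generated in the adapted coordinates of Theorem~\ref{cor:coordenadas} by $\d^kx$. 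Consequently the $\Cinfty(\mathcal{M})$-linear assignment $R\mapsto\inn{R}\d\Theta$ restricts, at each point $p$, to a linear map from a $k$-dimensional space to a $1$-dimensional one, and therefore has kernel of dimension at least $k-1$; any nonzero $R_p$ in that kernel satisfies $\inn{R_p}\d\Theta=0$, i.e. $R_p\in\ker\d\Theta$, so $\ker\d\Theta\neq\{0\}$ when $k\geq2$. I would make this completely explicit with Proposition~\ref{reeblemma} and the dissipation form of Proposition~\ref{sigma}: in adapted coordinates $\inn{R_\mu}\d\Theta=\sigma_\Theta\wedge\d^{k-1}x_\mu$, and since this must be proportional to $\d^kx$ only the $\d x$-components of $\sigma_\Theta$ can survive (cf.\ \eqref{sigmacoord}), say $\sigma_\Theta=\sigma_\nu\,\d x^\nu$; then $\inn{a^\mu R_\mu}\d\Theta=(a^\mu\sigma_\mu)\,\d^kx$ vanishes for a suitable nonzero choice of $(a^\mu)$ precisely because $k\geq2$.

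The delicate point I anticipate is the borderline case $k=1$: there the single scalar equation $a^1\sigma_1=0$ has no nonzero solution when $\sigma_\Theta$ does not vanish, and one can in fact produce $k=1$ special multicontact structures for which $\d\Theta$ is locally symplectic, so that $\ker\d\Theta=\{0\}$. I would deal with this either by reading the statement in the regime $k\geq2$ (the relevant one for genuine multicontact field theories), or, for $k=1$, by observing that the conclusion still holds at every point where $\sigma_\Theta$ vanishes, since there $\mathcal{D}^{\mathfrak{R}}\subset\ker\d\Theta$ and $\mathcal{D}^{\mathfrak{R}}$ has rank one. For $k\geq2$, by contrast, the proof is nothing more than the rank count above together with Lemma~\ref{lem:multicoMit}, so no further obstacle should arise.
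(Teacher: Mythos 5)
Your proof is correct and, for condition (1) of the definition of a multicontact form, it coincides with what the paper actually does: the paper provides no written proof, asserting only that the proposition is ``clear'' from Lemma \ref{lem:multicoMit}, and its intended argument is precisely yours, namely $\ker\Theta\subset\ker\omega$ (item 1 of the lemma) together with item (3) of Definition \ref{multicontactdef}, giving $\ker\Theta\cap\ker\d\Theta\subset\ker\omega\cap\ker\Theta\cap\ker\d\Theta=\{0\}$. Where you go beyond the paper is condition (2): the paper offers no argument at all that $\ker\d\Theta\neq\{0\}$, whereas you supply one — the pointwise rank count of the map $R\mapsto\inn{R}\d\Theta$ from the rank-$k$ Reeb distribution into the rank-one module $\mathcal{A}^k(\ker\omega)$, made explicit via Proposition \ref{reeblemma} and the dissipation form of Proposition \ref{sigma} — and this is sound for $k\geq 2$. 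A small remark: you need not invoke \eqref{sigmacoord} (stated in the variational setting), since, as you yourself note, semibasicness of $\sigma_\Theta$ already follows from $\inn{R_\mu}\d\Theta\in\mathcal{A}^k(\ker\omega)$ together with $\inn{R_\mu}\Theta=\d^{k-1}x_\mu$. Finally, your caveat at $k=1$ is a genuine observation about the statement itself rather than a gap in your argument: the paper's canonical Hamiltonian form \eqref{thetacoor2} with $k=1$ and $\partial H/\partial s\neq 0$ is asserted to be a special (variational) multicontact form, yet there $\d\Theta_{\mathcal H}$ is nondegenerate on the even-dimensional phase space, so $\ker\d\Theta=\{0\}$ and the implication as literally stated fails; reading the proposition for $k\geq 2$ (or adding a hypothesis such as the vanishing of $\sigma_\Theta$ when $k=1$), exactly as you propose, is the right resolution, and it is a point the paper glosses over.
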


\subsection{Multicontact Lagrangian formalism}

Next, we describe the Lagrangian action-dependent field theories using multicontact structures.

\subsubsection{Geometry of the phase bundle}

Consider a bundle $\pi\colon E\to M$,
where $M$ is an orientable $k$-dimensional manifold with volume form $\omega_M\in\df^m(M)$, and let $J^1\pi\to E\to M$ be the corresponding first-order jet bundle.
If $\dim M=k$ and $\dim E=n+k$, then $\dim J^1\pi=nk+n+k$.
Natural coordinates in $J^1\pi$ adapted to the bundle structure
are $(x^\mu,y^a,y^a_\mu)$ ($\mu = 1,\ldots,k$; $a=1,\ldots,n$),
and are such that
$\omega_M=\d x^1\wedge\cdots\wedge\d x^k=:\d^kx$.

In the multicontact Lagrangian formalism for action-dependent field theories,
the {\sl configuration bundle} of the theory is $E\times_M\bigwedge^{k-1}(\Tan^*M)\to M$,
where $\bigwedge^{k-1}(\Tan^*M)$ denotes the bundle of $(k-1)$-forms on $M$.
The corresponding {\sl phase bundle} is
${\cal P}=J^1\pi\times_M\bigwedge^{k-1}(\Tan^*M)$.
Natural coordinates in ${\cal P}$ are $(x^\mu,y^a,y^a_\mu,s^\mu)$,
and $\dim\mathcal{P}=2k+n+nk$.
Moreover, we also have the natural projections depicted in the following diagram:
\begin{equation}\label{diagrammulticoLag}
\xymatrix{
&\ &  \  &{\cal P}=J^1\pi\times_M\bigwedge^{k-1}(\Tan^*M)  \ar[rrd]_{\tau_1}\ar[lld]^{\rho}\ar[ddd]_{\tau}\ &  \   &
\\
&J^1\pi \ar[ddrr]^{\bar{\pi}^1}\ar[d]_{\pi^1}\ & \ & \ & \ &\ar[ddll]_{\tau_o}\bigwedge^{k-1}(\Tan^*M) 
\\
&E\ar[drr]_{\pi}\ & \ & \ & \ &  
 \\
&\ & \ &M \ & \ & 
}\nonumber
\end{equation}

As $\Lambda^{k-1}(\Tan^*M)$ is a bundle of forms over $M$, it is endowed with a canonical structure, its ``tautological form''
$\theta\in \df^{k-1}(\Lambda^{k-1}(\Tan^*M))$,
which is defined as usual,
and whose local expression, in natural coordinates, is $\theta=s^\mu\,\d^{k-1}x_\mu$. Then:

\begin{definition}
The form $\overline{S}:=\tau_1^*\theta\in\df^{k-1}({\cal P})$
is called the {\sl\textbf{canonical action form}} of ${\cal P}$.
\end{definition}

Its expression in coordinates is also $\overline{S}=s^\mu\,\d^{k-1} x_\mu$.

\begin{definition}
A section $\bm{\psi}\colon M\rightarrow {\cal P}$ of the projection $\tau:\mathcal{P}\rightarrow M$ is said to be a {\sl\textbf{holonomic section}} in ${\cal P}$ if
the section $\psi:=\rho\circ\bm{\psi}\colon M\to J^1\pi$
is holonomic in $J^1\pi$; that is,
there is a section $\phi\colon M\to E$ of $\pi$ such that $\psi=j^1\phi$.
It is customary to write $\bm{\psi}=(\psi,s)=(j^1\phi,s)$, 
where $s\colon M\to\bigwedge^{k-1}(\Tan^*M)$  is a section of the projection $\tau_\circ:\bigwedge^{k-1}(\Tan^*M)\to M$;
then, we also say that $\bm{\psi}$ is the
{\sl\textbf{canonical prolongation}} of the section 
$\bm\phi:=(\phi,s)\colon M\to E\times_M\bigwedge^{k-1}(\Tan^*M)$ to ${\cal P}$.
\end{definition}

Now, consider the set $\vf^k({\cal P})$, of multivector fields in ${\cal P}$ (see the Appendix \ref{append} for details).

\begin{definition}
A $k$-multivector field
$\bm{\Gamma}\in\vf^k({\cal P})$ is a {\sl\textbf{holonomic $k$-multivector field}}
or a {\sl\textbf{second-order partial differential equation}} (\textsc{sopde}) in ${\cal P}$ if
it is $\tau$-transverse, integrable, and its integral sections are holonomic on ${\cal P}$.
\end{definition}

The local expression of a {\sc sopde} in ${\cal P}$, satisfying the transversality condition $\inn{\bfX}\omega=1$, is 
\begin{equation}
\label{localsode2}
\bfX = \bigwedge^k_{\mu=1}
\Big(\parder{}{x^\mu}+y^a_\mu\frac{\displaystyle\partial} {\displaystyle
\partial y^a}+F_{\mu\nu}^a\frac{\displaystyle\partial}{\displaystyle \partial y^a_\nu}+g^\nu_\mu\,\frac{\partial}{\partial s^\nu}\Big)\,,
\end{equation}
and its integral sections are solutions to the system of second-order partial differential equations:
\begin{equation} y^a_\mu=\parder{y^a}{x^\mu}\,,\qquad F^a_{\mu\nu}=\frac{\partial^2y^a}{\partial x^\mu \partial x^\nu}\,. \end{equation}

\begin{remark}
The first-order jet bundle $J^1\pi$ is endowed with a canonical structure
which is called the {\sl canonical endomorphism},
and is a $(1,2)$-tensor field in $J^1\pi$,
denoted ${\rm J}$. Its local expression in natural coordinates of $J^1\pi$ is
$${\rm J}=\left(\d y^a-y^a_\mu\d x^\mu\right)\otimes
\parder{}{y^a_\nu}\otimes\parder{}{x^\nu}\ ,
$$
(see \cite{EMR_96,Sau_89}).
As ${\cal P}=J^1\pi\times_M\Lambda^{k-1}(\Tan^*M)$
is a trivial bundle, this canonical structure 
can be extended to ${\cal P}$ in a natural way. This extension is denoted with the same notation ${\rm J}$, and has the same coordinate expression.
\end{remark}
Then, a direct calculation in coordinates leads to the following characterization of {\sc sopde} multivector fields:

\begin{proposition}
An integrable $k$-multivector field $\bfX\in\vf^m(\mathcal{P})$ is a {\sc sopde}
if, and only if,
\begin{equation}\label{Jsopde}
\inn{\bfX}{\rm J}=0\ ,
\end{equation}
where $\inn{\bfX}{\rm J}$ denotes the natural inner contraction between tensor fields.
\end{proposition}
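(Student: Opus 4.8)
The statement is a coordinate computation, so the plan is: put $\bfX$ into a convenient local normal form, contract with ${\rm J}$, read off that $\inn{\bfX}{\rm J}=0$ is exactly the semi-holonomy of the first-order coefficients, and then translate this into holonomy of integral sections. Throughout I take $\bfX$ to be $\tau$-transverse (this is part of the \textsc{sopde} definition, and is the standing hypothesis when one speaks of the \textsc{sopde} condition for $k$-multivector fields); since $\bfX$ is integrable it is, in particular, locally decomposable (see the Appendix \ref{append}), so in a natural chart $(x^\mu,y^a,y^a_\mu,s^\mu)$ of $\mathcal{P}$ we may write $\bfX = X_1\wedge\dots\wedge X_k$ with $\inn{X_\mu}\d x^\nu = \delta^\nu_\mu$, that is
$$
X_\mu = \parder{}{x^\mu} + X_\mu^a\parder{}{y^a} + X_{\mu\nu}^a\parder{}{y^a_\nu} + G_\mu^\nu\parder{}{s^\nu}\,.
$$

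First I would compute $\inn{\bfX}{\rm J}$ in these coordinates. Writing the contact $1$-forms $\theta^a = \d y^a - y^a_\mu\,\d x^\mu$, one has $\inn{X_\mu}\theta^a = X_\mu^a - y^a_\mu$ and $\inn{X_\mu}\d x^\nu = \delta_\mu^\nu$; since ${\rm J} = \theta^a\otimes(\partial/\partial y^a_\nu)\otimes(\partial/\partial x^\nu)$, the natural inner contraction of $\bfX$ with ${\rm J}$ has a coordinate expression built from the quantities $X_\mu^a - y^a_\mu$ in such a way that it vanishes identically if and only if all of them do. Hence $\inn{\bfX}{\rm J}=0$ if and only if $X_\mu^a = y^a_\mu$ for every $\mu$ and $a$; that is, if and only if $\bfX$ has the semi-holonomic local form $X_\mu = \partial/\partial x^\mu + y^a_\mu\,\partial/\partial y^a + X_{\mu\nu}^a\,\partial/\partial y^a_\nu + G_\mu^\nu\,\partial/\partial s^\nu$. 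This is the only genuinely computational step; it requires nothing deeper than index bookkeeping, once the convention for the contraction $\inn{\bfX}{\rm J}$ (fixed in the Appendix) is used carefully, and it is entirely analogous to the characterization of \textsc{sopde} $k$-vector fields in $\bigoplus^k\T Q$ via $J^\alpha(\Gamma_\alpha)=\Delta$ recalled earlier.

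It remains to match this algebraic condition with the \textsc{sopde} property. For the direct implication, assume $\bfX$ is a \textsc{sopde} and let $\bm\psi$ be an integral section, so $\bm\psi=(j^1\phi,s)$ by holonomy and in coordinates $\bm\psi(x)=(x^\mu,\phi^a(x),\partial\phi^a/\partial x^\mu,s^\mu(x))$. The integral-section condition $\Tan\bm\psi\circ(\partial/\partial x^\mu) = X_\mu\circ\bm\psi$ gives in particular $X_\mu^a\circ\bm\psi = \partial\phi^a/\partial x^\mu = y^a_\mu\circ\bm\psi$; since $\bfX$ is integrable, every point of $\mathcal{P}$ lies on some integral section, hence $X_\mu^a = y^a_\mu$ identically and $\inn{\bfX}{\rm J}=0$ by the previous paragraph. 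Conversely, if $\inn{\bfX}{\rm J}=0$ then $\bfX$ is semi-holonomic as above; being also $\tau$-transverse and integrable, for any integral section $\bm\psi(x)=(x^\mu,y^a(x),y^a_\mu(x),s^\mu(x))$ the first two groups of the integral-section equations read $\partial y^a/\partial x^\mu = X_\mu^a\circ\bm\psi = y^a_\mu(x)$ and $\partial y^a_\nu/\partial x^\mu = X_{\mu\nu}^a\circ\bm\psi$. The first equation gives $y^a_\mu(x)=\partial y^a/\partial x^\mu$, so $\rho\circ\bm\psi$ is the first prolongation of $x\mapsto(x^\mu,y^a(x))$; thus $\bm\psi$ is holonomic on $\mathcal{P}$ and $\bfX$ is a \textsc{sopde}.

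The main obstacle, such as it is, is purely notational: pinning down the precise meaning of the contraction $\inn{\bfX}{\rm J}$ between a $k$-multivector field and the $(1,2)$-tensor ${\rm J}$, and carrying out the resulting index computation. Once that convention is in place, the vanishing criterion $\inn{\bfX}{\rm J}=0 \iff X_\mu^a = y^a_\mu$ is immediate, and its translation into holonomy of integral sections is routine, exactly paralleling the \textsc{sopde} characterizations in the $k$-tangent-bundle setting.
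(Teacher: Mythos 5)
Your proof is correct and follows the same route the paper intends: the paper leaves this as ``a direct calculation in coordinates,'' and your computation showing $\inn{\bfX}{\rm J}=0\iff X_\mu^a=y^a_\mu$ for a normalized, locally decomposable, $\tau$-transverse representative, followed by the translation into holonomy of integral sections, is exactly that calculation carried out. Your explicit remark that $\tau$-transversality must be taken as a standing hypothesis (otherwise, e.g., multivector fields tangent to the $s^\mu$-directions satisfy \eqref{Jsopde} without being \textsc{sopde}s) is a careful touch consistent with the paper's subsequent remark on semi-holonomic multivector fields.
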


\begin{remark}
The $\tau$-transverse decomposable $k$-multivector fields satisfying condition \eqref{Jsopde}, whose local expression is \eqref{localsode2},
are usually referred to as {\sl\textbf{semi-holonomic multivector fields}}.
\end{remark}

\subsubsection{Multicontact Lagrangian systems}

Physical information in field theories is introduced by means of {\sl Lagrangian densities}.
A {\sl\textbf{Lagrangian density}} is a $k$-form $\mathcal{L}\in\df^k({\cal P})$;
hence $\mathcal{L}=L\,\d^kx$, where $L\in\Cinfty({\cal P})$ is the
{\sl\textbf{Lagrangian function}} and $\d^kx$ is also the local expression of the form $\omega:=\tau^*\eta$.

\begin{definition}
\label{lagrangean2}
The {\sl\textbf{Lagrangian form}} associated with $\L$ is the form
\[
\Theta_{\L}=-\inn{\rm J}\d\mathcal{L}-\mathcal{L}+\d \overline{S}\in\df^k({\cal P}) \,.
\]
\end{definition}

In natural coordinates, the coordinate expression of this form reads,
\begin{equation}
\label{thetacoor1}
    \Theta_{\mathcal{L}} =-\frac{\partial L}{\partial y^a_\mu}\d y^a\wedge\d^{k-1}x_\mu +\left(\frac{\partial L}{\partial y^a_\mu}y^a_\mu-L\right)\d^k x+\d s^\mu\wedge \d^{k-1}x_\mu\,,
\end{equation}
where the local function
$\displaystyle E_\L:=\frac{\partial L}{\partial y^a_\mu}y^a_\mu-L$
is called the {\sl\textbf{Lagrangian energy}} associated with $L$.

Then, the following property holds \cite{LGMRR_23}:

\begin{proposition}
\label{Prop-regLag2}
For a Lagrangian function $L\in\Cinfty({\cal P})$, the Lagrangian form $\Theta_\L$ is a special (variational) multicontact form in ${\cal P}$
(and hence $(\Theta_\L,\omega)$ is a special (variational) multicontact structure)
if, and only if, the Hessian matrix
$\displaystyle (W_{ij}^{\mu\nu})= 
\bigg(\frac{\partial^2L}{\partial y^a_\mu\partial y^b_\nu}\bigg)$
is regular everywhere.
\end{proposition}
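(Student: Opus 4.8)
The plan is to verify, in the natural coordinates $(x^\mu,y^a,y^a_\mu,s^\mu)$ and directly from the expression \eqref{thetacoor1}, the conditions (1)--(4) of Definition \ref{multicontactdef} together with the variational condition of Definition \ref{def:variational}, and to pin down which of them carry content. First I would clear away the facts that hold for every $L$: the closed $k$-form $\omega$ has local expression $\d^kx$, so $\ker\omega=\langle\partial/\partial y^a,\partial/\partial y^a_\mu,\partial/\partial s^\mu\rangle$ is a regular distribution of rank $N=\dim{\cal P}-k$ — hence condition (1) is automatic and, $\omega$ being closed of constant rank, $\mathfrak{R}$ is involutive — and its annihilator powers are $\mathcal{A}^{k-1}(\ker\omega)=\langle\d^{k-1}x_\mu\rangle$ (rank $k$) and $\mathcal{A}^{k}(\ker\omega)=\langle\d^kx\rangle$ (rank $1$). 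Contracting \eqref{thetacoor1} once with a generator of $\ker\omega$ gives $-p^\mu_a\,\d^{k-1}x_\mu$ (for $\partial/\partial y^a$, with $p^\mu_a:=\partial L/\partial y^a_\mu$), $0$ (for $\partial/\partial y^a_\mu$), or $\d^{k-1}x_\mu$ (for $\partial/\partial s^\mu$) — always a wedge of $\d x$'s, so a second contraction with a $\ker\omega$-vector vanishes; hence $\inn{X}\inn{Y}\Theta_\L=0$ for $X,Y\in\Gamma(\ker\omega)$ regardless of $L$, so the word ``variational'' in the statement costs nothing extra. The same computation gives $\inn{Z}\Theta_\L=(Z^\mu-p^\mu_aZ^a)\,\d^{k-1}x_\mu$ for $Z=Z^a\partial/\partial y^a+Z^a_\mu\partial/\partial y^a_\mu+Z^\mu\partial/\partial s^\mu\in\ker\omega$, so $\ker\Theta_\L\cap\ker\omega$ is cut out by $Z^\mu=p^\mu_aZ^a$.

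The computational heart will be $\d\Theta_\L$. Since $\d^kx$ and all $\d^{k-1}x_\mu$ are closed, \eqref{thetacoor1} yields $\d\Theta_\L=-\d p^\mu_a\wedge\d y^a\wedge\d^{k-1}x_\mu+\d E_L\wedge\d^kx$. Expanding $\d p^\mu_a$ and $\d E_L$ in the coordinate basis and simplifying, and writing $W^{\mu\nu}_{ab}=\partial^2L/\partial y^a_\mu\partial y^b_\nu$ for the Hessian of the statement (so $\partial p^\mu_a/\partial y^b_\rho=W^{\mu\rho}_{ab}$ and $\partial E_L/\partial y^b_\rho=W^{\rho\sigma}_{ba}y^a_\sigma$), I would observe that the only terms of $\d\Theta_\L$ carrying a $\d y^b_\rho$ factor are $-W^{\mu\rho}_{ab}\,\d y^b_\rho\wedge\d y^a\wedge\d^{k-1}x_\mu$ and $W^{\rho\sigma}_{ba}y^a_\sigma\,\d y^b_\rho\wedge\d^kx$. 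In other words the Hessian is the ``symbol'' of $\d\Theta_\L$ in the velocity directions, and this single fact drives both implications.

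\emph{If the Hessian is regular everywhere.} I would contract $\d\Theta_\L$ with a general $R=R^a\partial/\partial y^a+R^a_\nu\partial/\partial y^a_\nu+R^\nu\partial/\partial s^\nu\in\ker\omega$: the coefficient of $\d y^b_\rho\wedge\d^{k-1}x_\mu$ in $\inn{R}\d\Theta_\L$ is $W^{\mu\rho}_{ab}R^a$, so demanding $\inn{R}\d\Theta_\L\in\mathcal{A}^k(\ker\omega)=\langle\d^kx\rangle$ forces $R^a=0$ by invertibility of $W$; with $R^a=0$, the coefficient of $\d y^a\wedge\d^{k-1}x_\mu$ is $-W^{\mu\rho}_{ab}R^b_\rho-R^\rho\,\partial^2L/\partial s^\rho\partial y^a_\mu$, which determines the $R^b_\rho$ uniquely in terms of the free parameters $R^\nu$, and the remaining coefficients then vanish automatically. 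Hence $\mathcal{D}^{\mathfrak{R}}$ is a regular distribution of rank $k$ (condition (2)), locally spanned by vector fields $R_\mu$ equal to $\partial/\partial s^\mu$ plus a correction along the $\partial/\partial y^a_\nu$ built from the inverse Hessian and the mixed derivatives $\partial^2L/\partial s^\mu\partial y^a_\nu$; for these $\inn{R_\mu}\Theta_\L=\d^{k-1}x_\mu$ by the identity of the first paragraph, so $\{\inn{R}\Theta_\L\mid R\in\mathfrak{R}\}$ contains — and, again by that identity, coincides with — $\mathcal{A}^{k-1}(\ker\omega)$, which is condition (4). For condition (3), I would take $Z\in\ker\omega\cap\ker\Theta_\L\cap\ker\d\Theta_\L$: then $Z^\mu=p^\mu_aZ^a$; the $\d y^b_\rho\wedge\d^{k-1}x_\mu$-part of $\inn{Z}\d\Theta_\L=0$ gives $W^{\mu\rho}_{ab}Z^a=0$, hence $Z^a=0$, and then the $\d y^a\wedge\d^{k-1}x_\mu$-part gives $W^{\mu\rho}_{ab}Z^b_\rho=0$, hence $Z^b_\rho=0$; so $Z=0$. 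Thus all the conditions of Definition \ref{multicontactdef} hold and $(\Theta_\L,\omega)$ is a special (and, by the first paragraph, variational) multicontact structure.

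\emph{Conversely}, suppose the Hessian degenerates at a point ${\rm p}\in{\cal P}$; I would choose $(v^b_\nu)\neq 0$ with $W^{\mu\nu}_{ab}v^b_\nu=0$ for all $a,\mu$ and set $Z:=v^a_\mu(\partial/\partial y^a_\mu)\vert_{\rm p}\in\ker\omega_{\rm p}$, $Z\neq 0$. Then $\inn{Z}\Theta_\L=0$ (since $Z$ has no $\partial/\partial y^a$ or $\partial/\partial s^\mu$ component), and because the only $\d y^b_\rho$ factors of $\d\Theta_\L$ carry the Hessian, the relations $W^{\mu\nu}_{ab}v^b_\nu=0$ together with the symmetry $W^{\mu\nu}_{ab}=W^{\nu\mu}_{ba}$ give $\inn{Z}\d\Theta_\L=0$ as well. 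Hence $0\neq Z\in\ker\omega\cap\ker\Theta_\L\cap\ker\d\Theta_\L$, contradicting condition (3) of Definition \ref{multicontactdef}; so $\Theta_\L$ being a special multicontact form forces the Hessian to be regular everywhere. The only real obstacle in making this rigorous is the coordinate bookkeeping in the two contraction computations — the signs, the index placements, and checking that the residual pieces with $\d x^\rho$ or $\d s^\rho$ legs really do drop out — since conceptually the whole proposition collapses to the observation that the Hessian is the symbol of $\d\Theta_\L$ in the $\d y^\bullet_\bullet$ directions.
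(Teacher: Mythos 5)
Your proof is correct: the verification of conditions (1)--(4) of Definition \ref{multicontactdef} and of the variational condition directly from the coordinate expression \eqref{thetacoor1}, with the Hessian acting as the ``symbol'' of $\d\Theta_\L$ in the $\d y^a_\mu$ directions, is exactly the standard argument — note that the paper itself offers no proof here but defers to \cite{LGMRR_23}, and your computation of $\mathcal{D}^{\mathfrak{R}}$ reproduces the expression for $(R_\L)_\mu$ that the paper states right after the proposition, which corroborates your bookkeeping. The only points worth making explicit in a final write-up are the small linear-algebra step that $W^{\mu\rho}_{ab}R^a=0$ for every fixed $\mu$ forces $R^a=0$ via invertibility of the full $nk\times nk$ Hessian, and that in the converse the nonzero vector you build at a single degeneracy point violates the pointwise rank-zero requirement in condition (3).
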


Thus, we define:

\begin{definition}
A Lagrangian function $L\in\Cinfty({\cal P})$ is said to be {\sl\textbf{regular}} if the equivalent
conditions in Proposition \ref{Prop-regLag2} hold.
Otherwise, $L$ is a {\sl\textbf{singular}} Lagrangian.
\end{definition}

\begin{definition}
If $L\in\Cinfty({\cal P})$ is a regular Lagrangian function,
the triad $({\cal P},\Theta_\L,\omega)$ is called a {\sl\textbf{multicontact Lagrangian system}}.
\end{definition}

For a multicontact Lagrangian system $({\cal P},\Theta_\L,\omega)$,
as $L$ is regular, there exists the inverse 
$(W^{ab}_{\mu\nu})$ of the Hessian matrix,
namely $\displaystyle W^{ab}_{\mu\nu}\frac{\partial^2L}{\partial y^b_\nu \partial y^c_\gamma}=\delta^a_c\delta^\gamma_\mu$.
Then, from Lemma \ref{reeblemma}, a simple calculation in coordinates leads to
the following expression for the Reeb vector fields 
$(R_\L)_\mu\in\mathfrak{R}_\L$,
$$
(R_\L)_\mu=\frac{\partial}{\partial s^\mu}-W^{ba}_{\gamma\nu}\frac{\partial^2\L}{\partial s^\mu\partial y^b_\gamma}\,\frac{\partial}{\partial y^a_\nu} \,.
$$
Furthermore, bearing in mind Proposition \ref{sigma} and equation \eqref{thetacoor1}, we obtain that
\begin{equation}
\displaystyle \sigma_{\Theta_\L}=-\parder{L}{s^\mu}\,\d x^\mu\,.
\label{sigmaL}
\end{equation}

Finally, we construct the form,
$$
\overline\d\Theta_\mathcal{L} =\d\Theta_\mathcal{L}+\sigma_{\Theta_\mathcal{L}}\wedge\Theta_\mathcal{L}
 = \d\Theta_\mathcal{L}-\parder{L}{s^\nu}\d x^\nu\wedge\Theta_\mathcal{L}\,.
$$

For a multicontact Lagrangian system $({\cal P},\Theta_\mathcal{L},\omega)$
the Lagrangian field equations are derived from the {\sl generalized Herglotz Variational Principle} \cite{GLMR_24},
and are stated alternatively as:

\begin{definition}
\label{mconteqH}
Let $({\cal P},\Theta_\mathcal{L},\omega)$ be a multicontact Lagrangian system.
\begin{enumerate}[{\rm(1)}]
\item  
The {\sl\textbf{multicontact Lagrangian equations for holonomic sections}}
$\bm{\psi}\colon M\to{\cal P}$ are
\begin{equation}
\label{sect1H}
\bm{\psi}^*\Theta_{\cal L}= 0  \,,\qquad
\bm{\psi}^*\inn{X}\bd\Theta_{\cal L}= 0 \,, \qquad \text{\rm for every }\ X\in\vf({\cal P}) \,.
\end{equation}
or, equivalently, for the canonical prolongation $\bm{\psi}^{(k)}$ (see the Appendix \ref{append})
\begin{equation}
\label{sect2H}
\inn{\bm{\psi}^{(k)}}(\Theta_\L\circ\bm{\psi})=0 \,,\qquad
\inn{\bm{\psi}^{(k)}}(\bd\Theta_\L\circ\bm{\psi}) = 0\,.
\end{equation}
\item 
The {\sl\textbf{multicontact Lagrangian equations for 
holonomic multivector fields}} $\bfX_\L \in\vf^k({\cal P})$ are
\begin{equation}\label{vfH}
\inn{\mathbf{X}_\L}\Theta_{\cal L}=0 \,,\qquad \inn{\bfX_\L}\bd\Theta_{\cal L}=0 \,.
\end{equation}
These holonomic multivector fields are called the {\sl\textbf{Euler--Lagrange multivector fields}} associated with $\mathcal{L}$.

Recall that holonomic multivector fields are $\tau$-transverse.
Note also that equations \eqref{vfH} and the $\tau$-transversality condition, $\inn{\bfX_\L}\omega\not=0$, hold for every multivector field of the equivalence class $\{\bfX_\L\}$ (that is, for every $\bfX_\L'=f\bfX_\L$, with $f$ nonvanishing; see the Appendix \ref{append}).
Then, the condition of $\tau$-transversality can be imposed simply by asking $\inn{\bfX_\L}\omega=1$.
\end{enumerate}
\end{definition}

\begin{theorem}
\label{solutions}
Let $({\cal P},\Theta_\mathcal{L},\omega)$ be a multicontact Lagrangian system.
\begin{enumerate}[\rm (1)]
\item
The multicontact Lagrangian field equations
for multivector fields \eqref{vfH} 
have solutions on ${\cal P}$.
(The solutions are not unique if $k>1$).
\item
Every $k$-multivector field $\bfX_\L$ that is
solution to equations \eqref{vfH} is semi-holonomic.
\item If $\bfX_\L$ is a holonomic multivector field (a {\sc sopde}) solution to the Lagrangian equations \eqref{vfH}, then its integral sections are the solutions to the multicontact Euler--Lagrange field equations for holonomic sections \eqref{sect1H} or \eqref{sect2H}. 
\end{enumerate}
\end{theorem}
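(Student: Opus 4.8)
The plan is to prove all three statements by one coordinate computation in a natural chart $(x^\mu,y^a,y^a_\mu,s^\mu)$ of $\mathcal{P}$, using the explicit expression \eqref{thetacoor1} of $\Theta_{\cal L}$ together with $\bd\Theta_{\cal L}=\d\Theta_{\cal L}+\sigma_{\Theta_{\cal L}}\wedge\Theta_{\cal L}$ and $\sigma_{\Theta_{\cal L}}=-\parder{L}{s^\mu}\d x^\mu$ from \eqref{sigmaL}. Equations \eqref{vfH} and the $\tau$-transversality condition are stable under $\bfX_\L\mapsto f\bfX_\L$, so I may work with a locally decomposable, normalized representative $\bfX_\L=\bigwedge_{\mu=1}^{k}X_\mu$ with $\inn{\bfX_\L}\omega=1$, that is, $X_\mu=\parder{}{x^\mu}+F_\mu^a\parder{}{y^a}+G_{\mu\nu}^a\parder{}{y^a_\nu}+H_\mu^\nu\parder{}{s^\nu}$, and recast \eqref{vfH} as conditions on $F,G,H$; this is the multicontact analogue of equations \eqref{A-E-L-eqs40}--\eqref{A-E-L-eqs30}.

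First I would expand the scalar identity $\inn{\bfX_\L}\Theta_{\cal L}=0$; a short contraction gives $H_\mu^\mu=\parder{L}{y^a_\mu}\bigl(F_\mu^a-y^a_\mu\bigr)+L$, which fixes the trace of $H$ and reduces to $H_\mu^\mu=L$ once $F_\mu^a=y^a_\mu$. Then I would expand the $1$-form equation $\inn{\bfX_\L}\bd\Theta_{\cal L}=0$ and separate its components in the cotangent basis. The components along $\d s^\mu$, $\d x^\mu$ and $\d y^a_\nu$ all take the form $\bigl(F_\mu^a-y^a_\mu\bigr)$ times a second-derivative entry of $L$; in particular the $\d y^a_\nu$-component reads $\bigl(F_\mu^a-y^a_\mu\bigr)\parderr{L}{y^b_\nu}{y^a_\mu}=0$, so regularity of the Hessian $\bigl(\parderr{L}{y^a_\mu}{y^b_\nu}\bigr)$ forces $F_\mu^a=y^a_\mu$, which is precisely the semi-holonomy condition $\inn{\bfX_\L}{\rm J}=0$. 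This proves statement (2), and it makes the $\d s^\mu$- and $\d x^\mu$-components vanish identically.

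Substituting $F_\mu^a=y^a_\mu$, the $\d y^a$-components of $\inn{\bfX_\L}\bd\Theta_{\cal L}=0$ become a linear inhomogeneous system of the schematic shape $\parderr{L}{y^b_\nu}{y^a_\mu}G_{\mu\nu}^b=(\text{known})-\parderr{L}{s^\nu}{y^a_\mu}H_\mu^\nu$, while the scalar equation now only imposes $H_\mu^\mu=L$. Regularity of the Hessian again guarantees this system is solvable for the $G_{\mu\nu}^b$ for any admissible choice of the remaining free coefficients, and one may arrange $G_{\mu\nu}^a=G_{\nu\mu}^a$ so that the resulting $\bfX_\L$ is locally decomposable; a global solution on $\mathcal{P}$ then follows by a partition of unity, since \eqref{vfH} is affine in the unknowns. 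For $k>1$ the off-trace parts of $G$ and $H$ are left unconstrained, giving the asserted non-uniqueness. This is statement (1). I expect the bookkeeping here --- tracking which contractions of $\bd\Theta_{\cal L}$ survive and verifying that the leftover system is governed by the (regular) Hessian --- to be the main obstacle; everything else is routine once the coordinate form of $\bd\Theta_{\cal L}$ is in hand.

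For statement (3), I would use the standard equivalence between the multivector-field and the section formulations for an integrable, $\tau$-transverse, locally decomposable $k$-multivector field, as recalled in Appendix \ref{append}: if $\bm\psi$ is an integral section of $\bfX_\L$ normalized by $\inn{\bfX_\L}\omega=1$, then $\bm\psi^*\Theta_{\cal L}=\bigl((\inn{\bfX_\L}\Theta_{\cal L})\circ\bm\psi\bigr)\,\d^kx$ and $\bm\psi^*(\inn{X}\bd\Theta_{\cal L})$ equals, up to sign, $\bigl((\inn{X}\inn{\bfX_\L}\bd\Theta_{\cal L})\circ\bm\psi\bigr)\,\d^kx$ for every $X\in\vf(\mathcal{P})$. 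Hence \eqref{vfH} yields $\bm\psi^*\Theta_{\cal L}=0$ and $\bm\psi^*\inn{X}\bd\Theta_{\cal L}=0$, i.e.\ \eqref{sect1H}; and since $\bfX_\L$ is a \textsc{sopde}, its integral sections are holonomic, $\bm\psi=(j^1\phi,s)$, so writing these equations in coordinates reproduces the Herglotz--Euler--Lagrange equations. The reformulation \eqref{sect2H} in terms of the canonical prolongation $\bm\psi^{(k)}$ is then the corresponding restatement from the Appendix.
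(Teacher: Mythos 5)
Your proposal is correct and follows essentially the same route as the paper: the same normalized decomposable coordinate ansatz for $\bfX_\L$, expansion of $\inn{\bfX_\L}\Theta_\L=0$ and $\inn{\bfX_\L}\bd\Theta_\L=0$ into the component equations, regularity of the Hessian forcing the semi-holonomy condition $F^a_\mu=y^a_\mu$ and guaranteeing (non-unique, for $k>1$) solvability of the remaining linear system, and passage to integral sections of a \textsc{sopde} solution to recover the Herglotz--Euler--Lagrange equations, i.e.\ \eqref{sect1H}--\eqref{sect2H}. The only cosmetic differences (your intrinsic pullback phrasing in item (3) and the partition-of-unity remark for globality) do not change the argument.
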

\begin{proof}
In a natural chart of coordinates of ${\cal P}$, 
a $\tau$-transverse and locally decomposable $k$-multivector field
satisfying $\inn{\bfX_\L}\omega=1$, has the local expression
$$\displaystyle
{\bf X}_\L= \bigwedge_{\mu=1}^k
\bigg(\parder{}{x^\mu}+(X_\L)_\mu^a\frac{\displaystyle\partial}{\displaystyle
\partial y^a}+(X_\L)_{\mu\nu}^a\frac{\displaystyle\partial}{\displaystyle\partial y^a_\nu}+(X_\L)_\mu^\nu\,\frac{\partial}{\partial s^\nu}\bigg) \,;
$$
and, bearing in mind the local expressions \eqref{thetacoor1} and \eqref{sigmaL}, we have that
$$
\bd\Theta_\L=
\d\left(-\frac{\partial L}{\partial y^a_\mu}\d y^a\wedge\d^{k-1}x_\mu +\Big(\frac{\partial L}{\partial y^a_\mu}y^a_\mu-L\Big)\d^k x\right)
-\left(\parder{L}{s^\mu}\frac{\partial L}{\partial y^a_\mu}\d y^a
-\parder{L}{s^\mu}\d s^\mu\right)\wedge\d^kx
\,.
$$
Then, equations \eqref{vfH} lead to
\begin{align}
0 &=
\displaystyle L + 
\frac{\partial L}{\partial y^a_\mu}\Big((X_\L)_\mu^a-y^a_\mu\Big)-(X_\L)_\mu^\mu\,,
\label{A-E-L-eqs4}
\\
0 &=
\displaystyle \Big((X_\L)_\mu^a-y^a_\mu\Big)
\frac{\partial^2L}{\partial y^a_\mu\partial s^\nu} \,,
\label{A-E-L-eqs2}
\\
\displaystyle 0&=\Big((X_\L)_\mu^a-y^a_\mu\Big)
\frac{\partial^2L}{\partial y^a_\mu\partial x^\nu} \,,
\label{A-E-L-eqs0}
\\
0 &=
\displaystyle \Big((X_\L)_\mu^a-y^a_\mu\Big)
\frac{\partial^2L}{\partial y^b_\nu\partial y^a_\mu}
\label{A-E-L-eqs1} \,,
\\
0 &=
\displaystyle
\Big((X_\L)_\mu^a-y^a_\mu\Big)
\frac{\partial^2 L}{\partial y^b\partial y^a_\mu}
+\frac{\partial L}{\partial y^b}- \parderr{L}{x^\mu}{y_\mu^b}
-\frac{\partial^2L}{\partial s^\nu\partial y^b_\mu}(X_\L)_\mu^\nu
\nonumber
\\ &\quad
-\frac{\partial^2L}{\partial y^a \partial y^b_\mu}(X_\L)_\mu^a
-\frac{\partial^2L}{\partial y^a_\nu\partial y^b_\mu}(X_\L)_{\mu\nu}^a
+\frac{\partial L}{\partial s^\mu}
\frac{\partial L}{\partial y^b_\mu}\,,
\label{A-E-L-eqs3}
\end{align}
and a last group of equations which are identities when they are combined with the above ones.
If $\bfX_\L$ is a {\sc sopde}, then it is semi-holonomic and,
\begin{equation}
\label{semihol}
y^a_\mu=(X_\L)_\mu^a \,;
\end{equation}
therefore, \eqref{A-E-L-eqs2}, \eqref{A-E-L-eqs0} and \eqref{A-E-L-eqs1} hold identically, 
and \eqref{A-E-L-eqs4} and \eqref{A-E-L-eqs3} give 
\begin{align}
(X_\L)_\mu^\mu&= L \,, \nonumber
\\
\frac{\partial L}{\partial y^b}- \parderr{L}{x^\mu}{y_\mu^b}
-\frac{\partial^2L}{\partial y^a \partial y^b_\mu}y_\mu^a
-\frac{\partial^2L}{\partial s^\nu\partial y^b_\mu}(X_\L)_\mu^\nu
-\frac{\partial^2L}{\partial y^a_\nu\partial y^b_\mu}(X_\L)_{\mu\nu}^a
&=-\frac{\partial L}{\partial s^\mu}
\frac{\partial L}{\partial y^a_\mu} \,.
\label{ELeqmvf}
\end{align}
Finally, for the holonomic integral sections $\displaystyle\bm{\psi}(x^\nu)=\Big(x^\mu,y^a(x^\nu),\parder{y^a}{x^\mu}(x^\nu),s^\mu(x^\nu)\Big)$ of $\bfX_\L$, the last equations transform into
\begin{align}
 \parder{s^\mu}{x^\mu}&=L\circ{\bm{\psi}} \,, \nonumber
 \\
\label{ELeqs2}
\frac{\partial}{\partial x^\mu}
\left(\frac{\displaystyle\partial L}{\partial
y^b_\mu}\circ{\bm{\psi}}\right)&=
\left(\frac{\partial L}{\partial y^b}+
\displaystyle\frac{\partial L}{\partial s^\mu}\displaystyle\frac{\partial L}{\partial y^b_\mu}\right)\circ{\bm{\psi}} \,,
\end{align}
which are the coordinate expression of the Lagrangian equations
\eqref{sect1H} or \eqref{sect2H} for holonomic sections.
Equations \eqref{ELeqs2}
are called the {\sl\textbf{Herglotz--Euler--Lagrange field equations}}.

If $L$ is a regular Lagrangian, equations \eqref{A-E-L-eqs1} lead to the semi-holonomy condition \eqref{semihol} and,
in addition, equations \eqref{ELeqmvf} have always solution since the Hessian matrix 
$\displaystyle\bigg(\frac{\partial^2L}{\partial y^b_\nu\partial y^a_\mu}\bigg)$ is regular everywhere.
The solution is not unique unless $k=1$.
\end{proof}

Observe that all these equations are the same as those obtained in the $k$-cocontact Lagrangian formulation of non-conservative field theories \cite{Riv_23}
and also match those of the $k$-contact Lagrangian formalism when the Lagrangian function does not depend on the spacetime variables $x^\mu$ 
\cite{GGMRR_20,GGMRR_21}.

%%%%%%%%%%%%%%%%%%%%%%%%%%%%%%%%%%%%%%%%%%%%%%%%%%%%%%%%%%%%
\subsection{Multicontact Hamiltonian formalism}

The Hamiltonian formulation of action-dependent first-order field theories is based on the Hamiltonian multisymplectic formalism of standard classical field theories.

%%%%%%%%%%%%%%%%%%%%%%%%%%%%%%%%%%%%%%%%%%%%%%%%%%%%%%%%%
\subsubsection{Geometry of the phase bundle}

Consider a bundle $\pi\colon E\to M$,
where $M$ is an orientable $k$-dimensional manifold with volume form $\eta\in\df^m(M)$. Let ${\cal M}\pi\equiv\Lambda_2^k\Tan^*E$ denote the bundle of $k$-forms on
$E$ vanishing by contraction with two $\pi$-vertical vector fields which, in field theories, is called the {\sl extended multimomentum bundle}.
It is endowed with natural coordinates $(x^\nu,y^a,p^\nu_a,p)$
adapted to the bundle structure ${\cal M}\pi\to E\to M$, and such that
$\eta=\d^kx$; so $\dim\, {\cal M}\pi=nk+n+k+1$.
Consider also the quotient manifold
$J^{1*}\pi={\cal M}\pi/\pi^*\Lambda^k\Tan^*M$
($\pi^*\Lambda^k\Tan^*M$ is the bundle of $\pi$-basic $k$-forms on $E$),
which is called the {\sl restricted multimomentum bundle}.
Its natural coordinates are $(x^\mu,y^a,p_a^\mu)$, and so 
$\dim\,J^{1*}\pi=nk+n+k$.

Then, for the Hamiltonian formalism of action-dependent field theories,
in the regular case, consider the bundles 
$$
\widetilde{\cal P}={\cal M}\pi\times_M\Lambda^{k-1}(\Tan^* M)
\ , \qquad
{\cal P}^* =J^{1*}\pi\times_M\Lambda^{k-1}(\Tan^*M) \,,
$$
which have natural coordinates $(x^\mu, y^a,p_a^\mu,p,s^\mu)$ and$(x^\mu,y^a,p_a^\mu,s^\mu)$, respectively. 
We have the natural projections depicted in the following diagram:
$$
\xymatrix{
& &\widetilde{\cal P}={\cal M}\pi\times_M\Lambda^{k-1}(\Tan^* M)  
\ar[rddd]^{\widetilde\tau_1} \ar[llddd]_{\widetilde{\varrho}} \ar[dd]_{\widetilde{\mathfrak{p}}} 
 \ar[lldd]_{\widetilde{\varrho}_1} 
\\ & &
\\
{\cal M}\pi\ 
\ar[d]_{\mathfrak{p}}& &{\cal P}^*=J^1\pi^*\times_M\Lambda^{k-1}(\Tan^* M)  
\ar@/_1pc/[uu]_{{\bf h}}
\ar[rd]_{\overline\tau_1}\ar[lld]^{\varrho}\ar[ddd]_{\overline\tau}
\\
J^1\pi^*
\ar[ddrr]^{\bar{\kappa}^1}\ar[d]_{\kappa^1}& & &\ar[ddl]_{\tau_\circ}\Lambda^{k-1}(\Tan^* M) 
\\
E\ar[drr]_{\pi}
 \\
& \ &M \ & \ & 
}
$$

Since ${\cal M}\pi$ and  $\Lambda^{k-1}(\Tan^* M)$ are bundles of forms, they have canonical structures, their ``tautological forms''
$\Theta\in\df^k({\cal M}\pi)$, 
called the {\sl\textbf{Liouville form}} of ${\cal M}\pi$
(see, for instance, \cite{CCI_91,EMR_2000} for its definition),
and $\theta\in \df^{k-1}(\Lambda^{k-1}(\Tan^* M))$
whose local expressions are
\begin{equation}
\label{canformmulticot}
\Theta=p_a^\mu\d y^a\wedge\d^{k-1}x_\mu+p\,\d^k x
\,,\qquad
\theta=s^\mu\,\d^{k-1}x_\mu \,.
\end{equation}

\begin{definition}
The {\sl\textbf{canonical (special) multicontact form}} of $\widetilde{\cal P}$ is
\begin{equation}
\label{canoncontact}
\widetilde\Theta:= -\widetilde\varrho_1^{\,*}\Theta+\d(\widetilde\tau_1^{\,*}\theta) \,. 
\end{equation}
\end{definition}

In natural coordinates of $\widetilde{\mathcal{P}}$, the expression of this form is
$$
\widetilde\Theta=
-p_a^\mu\d y^a\wedge\d^{k-1}x_\mu-p\,\d^k x+\d s^\mu\wedge \d^{k-1}x_\mu\ . 
$$

%%%%%%%%%%%%%%%%%%%%%%%%%%%%%%%%%%%%%%%%%%%%%%%%%%%%%%%%%
\subsubsection{Multicontact Hamiltonian systems}

\begin{definition}
Let ${\bf h}\colon {\cal P}^*\to\widetilde{\cal P}$ be a section of $\widetilde{\mathfrak{p}}$.
It is locally determined by a function $H\in\Cinfty(U)$, $U\subset{\cal P}^*$,
such that ${\bf h}(x^\mu,y^a,p^\mu_a,s^\mu)=(x^\mu,y^a,p^\mu_a,p=-H(x^\nu,y^b,p^\nu_b,s^\nu),s^\mu)$.
The elements ${\bf h}$ and $H$ are called
a {\sl\textbf{Hamiltonian section}} and its associated {\sl\textbf{Hamiltonian function}}.

\noindent Then, the {\sl\textbf{Hamiltonian form}} associated with ${\bf h}$ is defined by
\begin{equation}
\label{thetaHcoor}
\Theta_{\cal H}:=\mathbf{h}^*\tilde{\Theta}=
-(\widetilde\varrho_1\circ{\bf h})^*\Theta+\d(\overline\tau_1^*\theta) \,. 
\end{equation}
It is a variational multicontact form and the triad $({\cal P}^*,\Theta_{\cal H},\omega=(\bar{\tau}\circ\widetilde{\mathfrak{p}})^*\eta)$
is called a {\sl\textbf{multicontact Hamiltonian system}}.
\end{definition}

Obviously, $\Theta_{\cal H}$ is a special (variational) multicontact form.
In natural coordinates of $\mathcal{P}$, the expression of this form is
\begin{equation}
\label{thetacoor2}
\Theta_{\cal H}=
-p_a^\mu\d y^a\wedge\d^{k-1}x_\mu+H\,\d^k x+\d s^\mu\wedge \d^{k-1}x_\mu\ ,
\end{equation}
and the dissipation form is expressed as
\begin{equation}
\label{sigmaH}
\sigma_{\cal H}=\parder{H}{s^\mu}\,\d x^\mu\,.
\end{equation}

\begin{definition}
Given a multicontact Hamiltonian system $({\cal P}^*,\Theta_{\cal H},\omega)$, the field equations
can be stated alternatively as:
\begin{enumerate}[{\rm(1)}]
\item  
The {\sl\textbf{multicontact Hamilton--de Donder--Weyl equations for sections}}
$\bm{\psi}\colon M\to{\cal P}^*$:
\begin{equation}
\label{sect1H0}
\bm{\psi}^*\Theta_{\cal H}= 0  \,, \qquad
\bm{\psi}^*\inn{Y}\bd\Theta_{\cal H}= 0 \,, \qquad \text{for every }\ Y\in\vf({\cal P}^*) \, ,
\end{equation}
or, equivalently,
\begin{equation}
\label{sect2H0}
\inn{\bm{\psi}^{(k)}}(\Theta_{\cal H}\circ\bm{\psi})=0 \,,\qquad
\inn{\bm{\psi}^{(k)}}(\bd\Theta_{\cal H}\circ\bm{\psi}) = 0\,.
\end{equation}
\item 
The {\sl\textbf{multicontact Hamilton--de Donder--Weyl equations for $\overline\tau$-transverse and integrable multivector fields}} ${\bf X}_{\cal H}\in\vf^k({\cal P}^*)$:
\begin{equation}
\label{vfH2}
\inn{{\bf X}_{\cal H}}\Theta_{\cal H}=0 \,, \qquad \inn{\bfX_{\cal H}}\bd\Theta_{\cal H}=0 \,.
\end{equation}
Equations \eqref{vfH2} and the $\overline\tau$-transversality condition hold for every multivector field of the equivalence class $\{\bfX_\mathcal{H}\}$,
and the transversality condition can be imposed by asking $\inn{\bfX_\mathcal{H}}\omega=1$.
\end{enumerate}
\end{definition}

In natural coordinates, for a $\overline\tau$-transverse, locally decomposable multivector field ${\bf X}_{\cal H}\in~\vf^k({\cal P}^*)$, 
\begin{equation}
\label{Hammv}
{\bf X}_{\cal H}=
\bigwedge_{\mu=0}^{k-1}\Big(\parder{}{x^\mu}+ (X_{\cal H})^a_\mu\frac{\partial}{\partial y^a}+
(X_{\cal H})_{\mu a}^\nu\frac{\partial}{\partial p_a^\nu}+(X_{\cal H})_\mu^\nu\parder{}{s^\nu}\Big) \ ;
\end{equation} 
if it is a solution to equations \eqref{vfH2},
bearing in mind the local expression \eqref{thetacoor2},
these field equations lead to
\begin{equation*}
(X_{\cal H})_\mu^\mu = 
p_a^\mu\,\frac{\partial H}{\partial p^\mu_a}-H \,,\qquad
(X_{\cal H})^a_\mu=\frac{\partial H}{\partial p^\mu_a} \,,\qquad
(X_{\cal H})_{\mu a}^\mu= 
-\left(\frac{\partial H}{\partial y^a}+ p_a^\mu\,\frac{\partial H}{\partial s^\mu}\right) \,,   
\end{equation*}
together with a last group of equations which are identities when the above ones are taken into account.
Then, the integral sections $\bm{\psi}(x^\nu)=(x^\mu,y^a(x^\nu),p^\mu_a(x^\nu),s^\mu(x^\nu))$
of the integrable solutions ${\bf X}_{\cal H}$ of \eqref{vfH2} 
are the solutions to the equations \eqref{sect1H0} or \eqref{sect2H0} which read as
\begin{equation*}
\frac{\partial s^\mu}{\partial x^\mu} = \left(p_a^\mu\,\frac{\partial H}{\partial p^\mu_a}-H\right)\circ\bm{\psi}\,, \qquad 
\frac{\partial y^a}{\partial x^\mu}= \frac{\partial H}{\partial p^\mu_a}\circ\bm{\psi} \,,\qquad
\frac{\partial p^\mu_a}{\partial x^\mu} = 
-\left(\frac{\partial H}{\partial y^a}+ p_a^\mu\,\frac{\partial H}{\partial s^\mu}\right)\circ\bm{\psi} \,,
\end{equation*}
and are called the {\sl\textbf{Herglotz--Hamilton--de Donder--Weyl equations}}
for action-dependent field theories.
These equations are compatible in ${\cal P}^*$.

Observe that these equations are the same as those obtained in the $k$-cocontact Hamiltonian formulation of non-conservative field theories
and also lead to those of the $k$-contact Hamiltonian formalism when the Hamiltonian function does not depend on the spacetime variables $x^\mu$ 
\cite{GGMRR_20,GGMRR_21}.

\begin{remark}[{\sl The canonical multicontact Hamiltonian system associated with a multicontact Lagrangian system\/}]

Let $\L\in\df^k({\cal P})$ be a Lagrangian density with $\L=L\,\omega$. 

First, denote $FL_s\colon J^1\pi\to J^{1*}\pi$ the Legendre map associated with the restriction of the Lagrangian function $L\in\Cinfty({\cal P})$
to the fibers of the projection $\tau_1$ (recall diagram  \ref{diagrammulticoLag}). Informally, it is obtained considering $L$ with $s^\mu$ ``freezed'', which is denoted $L_s\in\Cinfty(J^1\pi)$.
Then, the {\sl\textbf{restricted Legendre map}} associated with the Lagrangian function $L\in\Cinfty({\cal P})$
is the map
${\cal FL}\colon {\cal P}\to {\cal P}^*$
given by ${\cal FL}:=(FL_s,{\rm Id}_{\Lambda^{k-1}(\Tan^*M)})$.
It is locally given by
$$\displaystyle
{\cal FL}(x^\mu,y^a,y^a_\mu,s^\mu)=\Big(x^\mu,y^a,\frac{\partial L}{\partial y^a_\mu},s^\mu\Big) \ .$$
Similarly, the {\sl\textbf{extended Legendre map}} associated with $L$ is the map 
$\widetilde{\mathcal{FL}}\colon{\cal P}\to\widetilde{\cal P}$ given by $\widetilde{\mathcal{FL}}:=(\widetilde{FL}_s,{\rm Id}_{\Lambda^{k-1}(\Tan^*M)})$.
Its local expression is
$$\widetilde{\mathcal{FL}}(x^\mu,y^a,y^a_\mu,s^\mu)=\Big(x^\mu,y^a,\frac{\partial L}{\partial y^a_\mu},L-y^a_\mu\parder{L}{y^a_\mu},s^\mu\Big)\ . $$
It is not difficult to prove that $L$ is a regular Lagrangian function if, and only if,
the Legendre map ${\cal FL}$ is a local diffeomorphism.
In particular, 
$L$ is said to be {\sl\textbf{hyperregular}}
if ${\cal FL}$ is a global diffeomorphism.

Therefore, if $L$ is a hyperregular Lagrangian function, we can define the Hamiltonian section ${\bf h}:=\widetilde{\mathcal{FL}}\circ\mathcal{FL}^{-1}$,
and construct the corresponding multicontact Hamiltonian system
$({\cal P}^*,\Theta_{\cal H},\omega)$,
which is called the {\sl\textbf{canonical multicontact Hamiltonian system}} associated with the multicontact Lagrangian system
$({\cal P},\Theta_\mathcal{L},\omega)$.
If $\L$ is regular, this construction is local.
Bearing in mind the coordinate expressions \eqref{thetacoor1}, \eqref{sigmaL}, \eqref{thetacoor2}, and \eqref{sigmaH},
we obtain that $\mathcal{FL}^*\Theta_\mathcal{H}=\Theta_\L$ and
$\mathcal{FL}^*\bd\Theta_\mathcal{H}=\bd\Theta_\L$.
\end{remark}

%%%%%%%%%%%%%%%%%%%%%%%%%%%%%%%%%%%%%%%%%%%%%%%%%%%%%%%%
\section{Relationship between multicontact, \texorpdfstring{$k$}{}-cocontact, and \texorpdfstring{$k$}{}-contact structures}
\label{relation}

For the relation among multisymplectic, $k$-cosymplectic and $k$-symplectic structures in classical field theories, see \cite{LMNRS-2002,RRSV_11} (see also \cite{LMORS_98, LMOS_97})

The relation among the multicontact, the $k$-cocontact, and the $k$-contact structures
for action-dependent field theories is done for the particular situation when
$\pi\colon E\to M$ is the trivial bundle
$\Rk\times Q\rightarrow \Rk$.

\subsection{The Hamiltonian case}

A previous result needed to establish this relation is as follows:

\begin{proposition}
\label{lemaaux}
\begin{enumerate}
\item
The manifold ${\mathcal M}\pi\equiv\Lambda_2^k\T^*(\Rk\times Q)$
 is diffeomorphic to $  \Rk\times\R\times\oplus^k\Tan^*Q$.
\item
As a consequence,
$J^1\pi^*$ is diffeomorphic to $  \Rk\times\oplus^k\Tan^*Q$.
\end{enumerate}
\end{proposition}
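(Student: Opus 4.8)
The plan is to exhibit the diffeomorphism fibrewise over $E=\Rk\times Q$, exploiting that $\Rk$ is parallelizable and carries the canonical volume form $\d^kx$. First I would record the splitting induced by the product structure: writing $\pr_1\colon\Rk\times Q\to\Rk$ and $\pr_2\colon\Rk\times Q\to Q$, one has $\Tan E\cong\pr_1^*\Tan\Rk\oplus\pr_2^*\Tan Q$ as vector bundles over $E$, hence dually $\Tan^*E\cong\pr_1^*\Tan^*\Rk\oplus\pr_2^*\Tan^*Q$, and the $\pi$-vertical subbundle $\ker\Tan\pi$ is exactly $\pr_2^*\Tan Q$. Taking exterior powers and grading by vertical degree gives, at a point $(x,q)\in E$,
\[
\Lambda^k\Tan^*_{(x,q)}E=\bigoplus_{j=0}^{k}\Lambda^{j}\Tan^*_x\Rk\otimes\Lambda^{k-j}\Tan^*_qQ\,.
\]
A $k$-form annihilates every pair of $\pi$-vertical vectors precisely when all its components of vertical degree $\geq 2$ vanish, so
\[
\bigl(\Lambda_2^k\Tan^*E\bigr)_{(x,q)}=\Lambda^{k}\Tan^*_x\Rk\ \oplus\ \bigl(\Lambda^{k-1}\Tan^*_x\Rk\otimes\Tan^*_qQ\bigr)\,.
\]

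Next I would trivialize the two $\Rk$-factors by means of $\d^kx$. Since $\dim\Rk=k$, the line $\Lambda^{k}\Tan^*_x\Rk$ is canonically identified with $\R$ through $c\,\d^kx\mapsto c$, and $\Lambda^{k-1}\Tan^*_x\Rk$ is canonically identified with $\Rk$ through the basis $\{\d^{k-1}x_\mu=\inn{\partial/\partial x^\mu}\d^kx\}$ (equivalently, through $v\mapsto\inn{v}\d^kx$ together with the standard identification $\Tan_x\Rk\cong\Rk$). Consequently $\Lambda^{k-1}\Tan^*_x\Rk\otimes\Tan^*_qQ\cong\Rk\otimes\Tan^*_qQ=\oplus^k\Tan^*_qQ$, and the fibre above becomes $\R\oplus\oplus^k\Tan^*_qQ$. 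These identifications are smooth and compatible with the projection onto $E=\Rk\times Q$, so assembling them over all $(x,q)$ produces a vector-bundle isomorphism ${\mathcal M}\pi\cong\Rk\times\R\times\oplus^k\Tan^*Q$. In the natural coordinates $(x^\mu,y^a,p_a^\mu,p)$ of ${\mathcal M}\pi$ — in which the corresponding element is the $k$-form $p\,\d^kx+p_a^\mu\,\d y^a\wedge\d^{k-1}x_\mu$ — this map is the obvious reshuffling $(x^\mu,y^a,p_a^\mu,p)\mapsto\bigl(x^\mu;(y^a,p_a^\mu);p\bigr)$, so it is plainly a diffeomorphism; the content of the argument is exactly the claim that this coordinate identification is intrinsic, which the preceding paragraph establishes.

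For part (2), recall $J^1\pi^*={\mathcal M}\pi/\pi^*\Lambda^k\Tan^*M$ with $M=\Rk$. Under the decomposition above, $\pi^*\Lambda^k\Tan^*M$ is precisely the subbundle $\Lambda^{k}\Tan^*\Rk\subset\Lambda_2^k\Tan^*E$, i.e.\ the line spanned by $\d^kx$, which corresponds to the $p$-direction. Quotienting by it therefore removes exactly the $\R$ factor of $\Rk\times\R\times\oplus^k\Tan^*Q$, yielding $J^1\pi^*\cong\Rk\times\oplus^k\Tan^*Q$.

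No step here is genuinely hard: the only point requiring care is that the fibrewise linear-algebra identifications are natural enough — in the $Q$-variable, and with the fixed trivialization of $\Rk$ — to glue into a global bundle map. This is where parallelizability of $\Rk$ enters, since both $\Lambda^k\Tan^*\Rk$ and $\Lambda^{k-1}\Tan^*\Rk$ become canonically trivial once the standard volume form is chosen; over a merely oriented base $M$ one would obtain only $\Tan^*M$-twisted analogues, which is why the statement is restricted to the trivial bundle over $\Rk$.
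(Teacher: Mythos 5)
Your proof is correct and follows essentially the same route as the paper: both exploit the product structure of $E=\Rk\times Q$ together with the volume form $\d^kx$ to split the forms of vertical degree $\leq 1$ into a $\Lambda^k\Tan^*\Rk\cong\R$ component and a $\Lambda^{k-1}\Tan^*\Rk\otimes\Tan^*Q\cong\oplus^k\Tan^*Q$ component (the paper realizes this fibrewise by explicit contractions with the $\partial/\partial x^\mu$ and the embedding $i_t$, with the same inverse formula $p\,\d^kx+\xi^\alpha\wedge\d^{k-1}x_\alpha$), and both obtain part (2) by observing that the quotient by $\pi^*\Lambda^k\Tan^*M$ removes exactly the $\R$ (i.e.\ $p$) factor.
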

\begin{proof}
\begin{enumerate}
\item
For $t\in\mathbb{R}^k$, let $i_t\colon Q\hookrightarrow
  \Rk\times Q$ be the canonical embedding given by $i_t(q)=(t,q)$, 
and 
$\rho_{_Q}\colon \Rk\times Q\rightarrow Q$ the canonical submersion.
Then, we can define the map
$$
\begin{array}{cccc}
\bar\Psi\colon& \Lambda_2^k\T^*(  \Rk\times Q) & \longrightarrow &   \Rk\times\R\times \oplus^k\Tan^*Q \\
 & \xi_{(t,q)} & \longmapsto & (t,p,\xi^1_q, \dots ,\xi^k_q)
\end{array},
$$
where, for $X\in\vf(Q)$,
\begin{align*}
\xi^\alpha_q(X) &=
\xi_{(t,q)}\left(\parder{}{x^1}\Big\vert_{(t,q)},\dots,
\parder{}{x^{\alpha-1}}\Big\vert_{(t,q)},(i_t)_*X_q,
 \parder{}{x^{\alpha+1}}\Big\vert_{(t,q)},\ldots,
\parder{}{x^k}\Big\vert_{(t,q)}\right)\, , \\
p&= \xi_{(t,q)}\left(\parder{}{x^1}\Big\vert_{(t,q)},\dots,\parder{}{x^k}
\Big\vert_{(t,q)}\right) \ ,
\end{align*}
(observe that $x^\alpha$ and $p$ are now global coordinates in the corresponding fibres and, then, 
the global coordinate $p$
can be identified with the natural projection
 $p\colon \Rk\times\R\times \oplus^k\Tan^*Q\to\R$). The inverse of $\bar\Psi$, at a point $(t,p,\xi^\alpha_q)\in \Rk\times\R\times\oplus^k\Tan^*Q$ is given by
$$
\xi_{(t,q)}= p \, \d^k
x\vert_{(t,q)}+(\rho_{_Q})_{(t,q)}^*\xi^\alpha_q\wedge\d^{k-1}x_\alpha\vert_{(t,q)}
\ ,
$$
Thus, $\bar\Psi$ is a diffeomorphism (locally $\bar\Psi$ is written as the identity).
\item
Bearing in mind the identification of $\Lambda^k(\Tan^*M)$ with $\R$,
this is a straightforward consequence of the above item since
$$
J^1\pi^*=\mathcal{M}\pi/\pi^*\Lambda^k\T^*M\simeq
(\Rk\times\R\times\oplus^k\Tan^*Q)/\R\simeq\Rk\times\oplus^k\Tan^*Q \,.
$$
\end{enumerate}
\end{proof}

Notice that $\Lambda^{k-1}(\Tan^*M)$ can be identified with $\R^k$, and therefore,
taking into account the above proposition, we can write
\begin{align}
    \widetilde{\cal P}&={\cal M}\pi\times_M\Lambda^{k-1}(\Tan^* M)\simeq\Rk\times \R\times\oplus^k\Tan^*Q\times\Rk\,,
    \\ 
    {\cal P}^*&=J^1\pi^*\times_M\Lambda^{k-1}(\Tan^* M)  \simeq\Rk\times \oplus^k\Tan^*Q\times\Rk  =\mathbf{P}^*\,.
\end{align}
The following diagram contains the projections and embeddings that we will use next.
\begin{align}
\xymatrix{
& \widetilde{\cal P}\simeq\Rk\times \R\times\oplus^k\Tan^*Q\times\Rk 
\ar@/^1pc/[dd]^{{\mu}} \ar@/_2pc/[dl]_{\widetilde{\nu}} 
\\
{P}^*=\oplus^k\Tan^*Q\times\Rk  \ar@{^{(}->}[dr]^\jmath \ar@{^{(}->}[ur]_{\widetilde\jmath} & 
\\
& {\cal P}^* \simeq\Rk\times \oplus^k\Tan^*Q\times\Rk  =\mathbf{P}^*
\ar@{^{(}->}[uu]^{\widetilde\iota}
\ar@/^2pc/[lu]^{\nu} 
}
\end{align}

The embeddings $\widetilde\iota,\jmath$ and $\widetilde\jmath$ are given by the zero-sections:
\begin{align}
\widetilde\iota(x^\alpha,y^a,p^\alpha_a,s^\alpha)&=(x^\alpha,p=0,y^a,p^\alpha_a,s^\alpha)
\\
 \jmath\,(y^a,p^\alpha_a,s^\alpha)&=(x^\alpha=0,y^a,p^\alpha_a,s^\alpha)
\\
    \widetilde\jmath\,(y^a,p^\alpha_a,s^\alpha)&=(x^\alpha=0,p=0,y^a,p^\alpha_a,s^\alpha)
\end{align}

\subsubsection{Relationship between the canonical structures of
\texorpdfstring{$\widetilde{\mathcal{P}}={\cal M}\pi\times_M\Lambda^{k-1}(\Tan^*M)$}{} and \texorpdfstring{$P^*=\oplus^k\Tan^*Q\times\R^k$}{}}

If the manifold $\widetilde{\cal P}={\cal M}\pi\times_M\Lambda^{k-1}(\Tan^*M)$ is diffeomorphic to $\R^k\times \R\times\oplus^k\Tan^*Q\times\Rk$, then it is a trivial bundle over $\Rk$. Therefore, the canonical vector fields $\displaystyle\parder{}{x^\alpha}\in\vf(\Rk)$
can be extended to vector fields in $\widetilde{\mathcal{P}}$,
which have the same coordinate expressions.

Then, following the same pattern as in the proof of the item 1 of Proposition \ref{lemaaux} and
starting from the canonical special multicontact form $\widetilde\Theta\in\df^k(\widetilde{\mathcal{P}})$
given in \eqref{canoncontact},
we can define the forms $\eta^\alpha\in\df^1 (\oplus^k\Tan^*Q\times\Rk)$ by
\begin{align}
\eta^\alpha&=(-1)^{\alpha-1}\widetilde\jmath^{\,*}\Big(\inn{\big(\parder{}{x^k}\big)}\ldots\inn{\big(\parder{}{x^{\alpha+1}}\big)}\inn{\big(\parder{}{x^{\alpha-1}}\big)}\ldots
 \inn{\big(\parder{}{x^1}\big)}\widetilde\Theta\Big)
 \\
 &=
-\widetilde\jmath^*\Big(\inn{\big(\parder{}{x^k}\big)}\ldots
 \inn{\big(\parder{}{x^1}\big)}(\widetilde\Theta\wedge\d x^\alpha)\Big) \ .
 \label{relcan11}
\end{align}
In coordinates, we have,
\begin{align}
\widetilde\Theta\wedge\d x^\alpha=(-1)^{k-1}\big(-p_a^\alpha\d y^a\wedge\d^{k}x+d s^\alpha\wedge\d^{k}x\big)\,.
\end{align}
Then
\begin{align*}
-\widetilde\jmath^*\Big(\inn{\big(\parder{}{x^k}\big)}\ldots
 \inn{\big(\parder{}{x^1}\big)}(\widetilde\Theta\wedge\d x^\alpha)\Big)&=(-1)^k\widetilde\jmath^*\Big(\inn{\big(\parder{}{x^k}\big)}\ldots
 \inn{\big(\parder{}{x^1}\big)} \big(-p_a^\alpha\d y^a\wedge\d^{k}x+d s^\alpha\wedge\d^{k}x\big)\Big) 
 \\
 &=-p_a^\alpha\d y^a+\d s^\alpha \ .
\end{align*}
Therefore, $\{\eta^\alpha\}$
is the canonical $k$-contact structure in $P^*=\oplus^k\Tan^*Q\times\Rk$.

Conversely, starting from the $k$-contact structure $\{\eta^\alpha\}$ in $P^*=\oplus^k\Tan^*Q\times\Rk$
we can obtain the canonical special multicontact form in $\widetilde{\mathcal{P}}\simeq(\R^k\times\R\times \oplus^k\Tan^*Q)\times\Rk$,
by doing
\begin{equation}
 \label{relcan12}
\widetilde\Theta= 
p\,\omega+\widetilde\nu^*\eta^\alpha\wedge\inn{\big(\parder{}{x^\alpha}\big)}\omega=
p \d^k x+\widetilde\nu^*\eta^\alpha\wedge\d^{k-1}x_\alpha \,.
\end{equation}
The Reeb vector fields are the same for both structures and are $\displaystyle\Big\{\parder{}{s^\alpha}\Big\}$.

In summary, we have proved the following.

\begin{theorem}
The canonical special multicontact form on
$\widetilde{\mathcal{P}}\simeq\R^k\times\R\times \oplus^k\Tan^*Q\times\Rk$ and
the contact forms of the canonical $k$-contact structure on $P^*=\oplus^k\Tan^*Q\times\Rk$
are related by \eqref{relcan11} and \eqref{relcan12}.
\end{theorem}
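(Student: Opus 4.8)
The plan is to verify that the two explicitly-given assignments $\widetilde\Theta\mapsto\{\eta^\alpha\}$ in \eqref{relcan11} and $\{\eta^\alpha\}\mapsto\widetilde\Theta$ in \eqref{relcan12} are well-defined and mutually inverse, and that they carry the canonical objects to the canonical objects. Since the statement as phrased only asserts that these two constructions relate the canonical forms, the bulk of the work is a coordinate computation that is already essentially carried out in the excerpt; what remains is to organize it and to check the reverse direction.

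First I would fix the identifications supplied by Proposition \ref{lemaaux}, namely $\widetilde{\mathcal P}\simeq\Rk\times\R\times\oplus^k\Tan^*Q\times\Rk$ with coordinates $(x^\alpha,p,y^a,p^\alpha_a,s^\alpha)$ and $P^*=\oplus^k\Tan^*Q\times\Rk$ with coordinates $(y^a,p^\alpha_a,s^\alpha)$, together with the embedding $\widetilde\jmath$ and the projection $\widetilde\nu$ from the diagram. I would then compute, exactly as indicated, that
$$
\widetilde\Theta\wedge\d x^\alpha=(-1)^{k-1}\big(-p_a^\alpha\,\d y^a\wedge\d^kx+\d s^\alpha\wedge\d^kx\big),
$$
contract with $\partial/\partial x^1,\dots,\partial/\partial x^k$ in order, pull back along $\widetilde\jmath$ (which kills $\d x^\alpha$ and $\d p$ and sets $x^\alpha=0$, $p=0$), and read off $\eta^\alpha=\d s^\alpha-p^\alpha_a\,\d y^a$. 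This is precisely \eqref{caneta}, so $\{\eta^\alpha\}$ is the canonical $k$-contact structure of Remark \ref{example-canmodel}, and by that remark its Reeb vector fields are $\partial/\partial s^\alpha$. For the converse, starting from $\{\eta^\alpha=\d s^\alpha-p^\alpha_a\,\d y^a\}$ I would substitute into \eqref{relcan12}: since $\widetilde\nu^*\eta^\alpha=\d s^\alpha-p^\alpha_a\,\d y^a$ (the projection $\widetilde\nu$ acts as the identity on the shared coordinates) and $\widetilde\nu^*\eta^\alpha\wedge\d^{k-1}x_\alpha=-p^\alpha_a\,\d y^a\wedge\d^{k-1}x_\alpha+\d s^\alpha\wedge\d^{k-1}x_\alpha$, one recovers $\widetilde\Theta=p\,\d^kx-p^\alpha_a\,\d y^a\wedge\d^{k-1}x_\alpha+\d s^\alpha\wedge\d^{k-1}x_\alpha$, which matches the coordinate expression of $\widetilde\Theta$ displayed after \eqref{canoncontact}. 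Finally I would note that applying \eqref{relcan11} to this $\widetilde\Theta$ returns the original $\eta^\alpha$ and, conversely, applying \eqref{relcan12} to the $\eta^\alpha$ produced by \eqref{relcan11} returns the original $\widetilde\Theta$ (both checks are immediate once the coordinate forms agree), so the two maps are inverse to one another; and that the Reeb vector fields $\partial/\partial s^\alpha$ are common to both, as already remarked.

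The only genuinely non-routine point — the one I would be most careful about — is the bookkeeping of signs and of the ordering of the iterated contractions $\iota_{\partial/\partial x^k}\cdots\iota_{\partial/\partial x^1}$, including the factor $(-1)^{\alpha-1}$ in \eqref{relcan11} versus the $\widetilde\Theta\wedge\d x^\alpha$ reformulation, and making sure the two displayed formulas \eqref{relcan11} really coincide. Concretely, one should verify the identity
$$
(-1)^{\alpha-1}\,\iota_{\partial/\partial x^k}\cdots\widehat{\iota_{\partial/\partial x^\alpha}}\cdots\iota_{\partial/\partial x^1}\,\beta
=-\,\iota_{\partial/\partial x^k}\cdots\iota_{\partial/\partial x^1}(\beta\wedge\d x^\alpha)
$$
for $\beta\in\df^k(\widetilde{\mathcal P})$ with $\iota_{\partial/\partial x^\mu}\iota_{\partial/\partial x^\nu}\beta$ suitably controlled, which is a standard graded Leibniz computation. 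Everything else reduces to substituting the known coordinate expressions \eqref{canoncontact}, \eqref{caneta}, and the one after \eqref{canoncontact}, and this has effectively already been done in the excerpt preceding the theorem; so the proof is essentially complete by the displayed computations, and I would simply present it as: compute \eqref{relcan11} in coordinates to get \eqref{caneta}; compute \eqref{relcan12} in coordinates to get the coordinate form of $\widetilde\Theta$; observe the two constructions are mutually inverse and share the Reeb vector fields $\partial/\partial s^\alpha$.
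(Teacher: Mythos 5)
Your proposal is correct and follows essentially the same route as the paper: the theorem is stated there as a summary of the coordinate computation immediately preceding it (contract $\widetilde\Theta$ with $\partial/\partial x^1,\dots,\partial/\partial x^k$, pull back by $\widetilde\jmath$ to recover \eqref{caneta}, and conversely reconstruct $\widetilde\Theta$ via \eqref{relcan12}), which is exactly what you do, with the mutual-inverse check and the Reeb remark as harmless additions. One small caveat on your claimed ``match'': with the formulas as printed, \eqref{relcan12} returns $+p\,\d^kx$ while the coordinate expression of $\widetilde\Theta$ displayed after \eqref{canoncontact} carries $-p\,\d^kx$, so the identification holds only up to this sign on the $p$-term (a slip already present in the source, not a defect of your method, but worth flagging rather than asserting exact agreement).
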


\subsubsection{Relationship between the structures of
\texorpdfstring{$\mathcal{P}^*=J^{1*}\pi\times_M\Lambda^{k-1}(\Tan^*M)$}{} and \texorpdfstring{$P^*=\oplus^k\Tan^*Q\times\Rk$}{}}
\label{eqsect1}

If $\mu\colon\widetilde{\mathcal{P}}\simeq\Rk\times \R\times\oplus^k\Tan^*Q\times\Rk\to\mathcal{P}^*\simeq\Rk\times\oplus^k\Tan^*Q\times\Rk$
is a trivial bundle, we can take a global Hamiltonian section
${\bf h}\colon\Rk\times\oplus^k\Tan^*Q\times\Rk\to\Rk\times \R\times\oplus^k\Tan^*Q\times\Rk$,
specified by a (global) Hamiltonian function $H\in\Cinfty(\mathcal{P}^*)$,
and then define the (non-canonical) special multicontact form $\Theta_{\mathcal{H}}\in\df^k(\mathcal{P}^*)$ given in \eqref{thetacoor2}.
Therefore, following the same pattern as in the above section,
we can obtain the forms $\eta^\alpha\in\df^1 (\oplus^k\Tan^*Q\times\Rk)$ given as follows,
\begin{align}
\label{relcan21}
\eta^\alpha&=(-1)^{\alpha-1}\jmath^*\Big(\inn{\big(\parder{}{x^k}\big)}\ldots\inn{\big(\parder{}{x^{\alpha+1}}\big)}\inn{\big(\parder{}{x^{\alpha-1}}\big)}\ldots
 \inn{\big(\parder{}{x^1}\big)}\Theta_{\mathcal{H}}\Big)
 \\&=
 -\jmath^*\Big(\inn{\big(\parder{}{x^k}\big)}\ldots
 \inn{\big(\parder{}{x^1}\big)}(\Theta_{\mathcal{H}}\wedge\d x^\alpha)\Big)\ ,
\end{align}
that define the canonical $k$-contact structure in $P^*=\oplus^k\Tan^*Q\times\Rk$.

Conversely, starting from the $k$-contact structure $\{\eta^\alpha\}$ in $P^*=\oplus^k\Tan^*Q\times\Rk$
we can obtain the canonical special multicontact form in $\widetilde{\mathcal{P}}\simeq(\R^k\times\R\times \oplus^k\Tan^*Q)\times\Rk$,
by doing
\begin{equation}
 \label{relcan22}
\Theta_{\mathcal{P}}= 
-H\,\omega+\nu^*\eta^\alpha\wedge\inn{\big(\parder{}{x^\alpha}\big)}\omega=
-H\,\d^k x+\nu^*\eta^\alpha\wedge\d^{k-1}x_\alpha \,.
\end{equation}
The Reeb vector fields are the same for both structures and are $\displaystyle\Big\{\parder{}{s^\alpha}\Big\}$.

Thus, we have proved the following result.

\begin{theorem}
The special multicontact form on
$\mathcal{P}^*\simeq\R^k\times \oplus^k\Tan^*Q\times\Rk$ and
the contact forms of the canonical $k$-contact structure on $P^*=\oplus^k\Tan^*Q\times\Rk$
are related by \eqref{relcan21} and \eqref{relcan22}.
\end{theorem}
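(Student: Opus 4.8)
The plan is to verify the two formulas \eqref{relcan21} and \eqref{relcan22} by a direct computation in the natural coordinates $(x^\alpha,y^a,p_a^\alpha,s^\alpha)$ of $\mathcal{P}^*$, following verbatim the pattern already used for the canonical form $\widetilde\Theta$ on $\widetilde{\mathcal{P}}$ in the preceding subsection, and then to observe that the two constructions are mutually inverse.

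Formula \eqref{relcan22} is essentially a bookkeeping identity. Since $\nu\colon\mathcal{P}^*\simeq\Rk\times\oplus^k\Tan^*Q\times\Rk\to P^*=\oplus^k\Tan^*Q\times\Rk$ is the projection forgetting the spacetime factor, one has $\nu^*\eta^\alpha=\d s^\alpha-p_a^\alpha\,\d y^a$ with the same coordinate expression, now read on $\mathcal{P}^*$. Wedging with $\d^{k-1}x_\alpha=\inn{\partial/\partial x^\alpha}\d^kx$ and adding $-H\,\d^kx$ reproduces, term by term, the coordinate expression \eqref{thetacoor2} of $\Theta_{\mathcal{H}}$. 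Here one also recalls that, as noted after \eqref{thetacoor2}, such a $\Theta_{\mathcal{H}}$ is a special (variational) multicontact form on $\mathcal{P}^*$ and that $\{\eta^\alpha\}$ is the canonical $k$-contact structure of Remark \ref{example-canmodel}, so that both ends of the asserted correspondence are objects of the claimed type.

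For \eqref{relcan21} I would contract $\Theta_{\mathcal{H}}$ successively by the $k-1$ vector fields $\partial/\partial x^\beta$ with $\beta\neq\alpha$ (in decreasing order of the index). In each of the two terms of the form $\d z\wedge\d^{k-1}x_\mu$, with $z$ equal to $y^a$ or $s^\mu$, the contractions pass through $\d z$ — producing a factor $(-1)^{k-1}$ — and leave $\inn{\partial/\partial x^k}\!\dotsb\inn{\partial/\partial x^1}\d^{k-1}x_\alpha$ (with the index $\alpha$ omitted from the string of contractions), which is a nonzero constant only when $\mu=\alpha$; the term $H\,\d^kx$ contracts to a multiple of $H\,\d x^\alpha$, which is killed by $\jmath^*$ because $\jmath$ is constant in the spacetime coordinates, so $\jmath^*\d x^\alpha=0$. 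This last point is the conceptual content of the statement: the Hamiltonian-dependent part of $\Theta_{\mathcal{H}}$ disappears, so the resulting $k$-contact structure on $P^*$ is the canonical, hence $H$-independent, one. Collecting the prefactor $(-1)^{\alpha-1}$, the sign $(-1)^{k-1}$, and the constant coming from $\d^{k-1}x_\alpha$ — exactly as in the $\widetilde\Theta$ computation — yields $\eta^\alpha=\d s^\alpha-p_a^\alpha\,\d y^a$; and the equality of the two displayed forms of $\eta^\alpha$ in \eqref{relcan21} follows from the antiderivation property of the inner contractions together with $\d x^\alpha\wedge\d^{k-1}x_\alpha=\d^kx$.

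No genuine obstacle is expected here: the whole argument is sign bookkeeping entirely parallel to the already-performed case of $\widetilde\Theta$, and the two formulas \eqref{relcan21}–\eqref{relcan22} are seen to be mutually inverse by composing the two computations above. The one point that deserves an explicit word is the identity $\jmath^*\d x^\alpha=0$, which is what makes the correspondence well defined independently of the chosen Hamiltonian section $\mathbf{h}$ (equivalently, of $H$).
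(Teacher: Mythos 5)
Your proposal is correct and takes essentially the same route as the paper: the paper's own proof is precisely the coordinate computation ``following the same pattern'' as the $\widetilde\Theta$ case of the previous subsection, and your key observation that the $H\,\d^kx$ term is annihilated (by $\jmath^*\d x^\alpha=0$ in the contraction form, equivalently by $\d^kx\wedge\d x^\alpha=0$ in the wedge form) is exactly the point that makes the resulting $k$-contact structure the canonical, $H$-independent one. The only caveat is a sign-bookkeeping issue inherited from the paper's own formulas rather than from your argument, namely the sign of the $H\,\omega$ term in \eqref{relcan22} compared with the $+H\,\d^kx$ term in \eqref{thetacoor2}, which traces back to the convention $p=-H$ for the Hamiltonian section.
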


\subsubsection{Relationship between the canonical structures of
\texorpdfstring{$\widetilde{\mathcal{P}}={\cal M}\pi\times_M\Lambda^{k-1}(\Tan^*M)$}{} and \texorpdfstring{$\mathbf{P}^*=\Rk\times\oplus^k\Tan^*Q\times\R^k$}{}}

From the canonical special multicontact form $\widetilde\Theta\in\df^1(\widetilde{\mathcal{P}})$ we define the forms $\eta^\alpha\in\df^1(\mathbf{P}^*)$ as in \eqref{relcan11}; that is,
\begin{equation}
\label{relcan31}
\eta^\alpha=(-1)^{\alpha-1}{\widetilde{\imath}}^{\,*}\Big(\inn{\big(\parder{}{x^k}\big)}\ldots\inn{\big(\parder{}{x^{\alpha+1}}\big)}\inn{\big(\parder{}{x^{\alpha-1}}\big)}\ldots
 \inn{\big(\parder{}{x^1}\big)}\widetilde\Theta\Big) \ .
\end{equation}
whose coordinate expressions are $\eta^\alpha=\d s^\alpha-p^\alpha_a\d y^a$.
In addition,
we also define the forms,
\begin{equation}
\label{taus}
\tau^\alpha=(-1)^{k-\alpha}\,{\widetilde{\imath}}^{\,*}\Big(\inn{\big(\parder{}{x^k}\big)}\ldots\inn{\big(\parder{}{x^{\alpha+1}}\big)}\inn{\big(\parder{}{x^{\alpha-1}}\big)}\ldots
 \inn{\big(\parder{}{x^1}\big)}\inn{\big(\parder{}{p}\big)}\d\widetilde\Theta\Big) \ .
\end{equation}
(recall that $p$ denotes the global canonical coordinate in $\R$).
Observe also that, since $p$ denotes the global canonical coordinate in $\R$, the $1$-forms $\tau^\alpha=\d x^\alpha$
are canonically defined on $\Rk\times\oplus^k\Tan^*Q\times\Rk$.

Conversely, from the forms $\{\eta^\alpha\}$ of the $k$-cocontact structure in $\mathbf{P}^*=\Rk\times\oplus^k\Tan^*Q\times\Rk$
we can obtain the canonical special multicontact form in $\widetilde{\mathcal{P}}\simeq(\R^k\times\R\times \oplus^k\Tan^*Q)\times\Rk$
similarly as in \eqref{relcan12}.

In this way, we have proved the following,

\begin{theorem}
The canonical special multicontact form on
$\widetilde{\mathcal{P}}\simeq\R^k\times\R\times \oplus^k\Tan^*Q\times\Rk$ and
the forms of the canonical $k$-cocontact structure on $\mathbf{P}^*=\R^k\oplus^k\Tan^*Q\times\Rk$
are related by \eqref{relcan12}, \eqref{relcan31}, and \eqref{taus}.
\end{theorem}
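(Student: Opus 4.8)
The plan is to show that the construction $\widetilde\Theta\mapsto(\tau^\alpha,\eta^\alpha)$ given by the contraction-and-pullback formulas \eqref{relcan12}, \eqref{relcan31}, and \eqref{taus} produces exactly the canonical $k$-cocontact structure of Example \ref{ex:canonical-k-cocontact-structure} on $\mathbf{P}^*=\Rk\times\oplus^k\Tan^*Q\times\Rk$, and that the two assignments are mutually inverse. First I would fix natural coordinates $(x^\mu,y^a,p_a^\mu,p,s^\mu)$ on $\widetilde{\mathcal P}$ in which $\widetilde\Theta=-p_a^\mu\,\d y^a\wedge\d^{k-1}x_\mu-p\,\d^kx+\d s^\mu\wedge\d^{k-1}x_\mu$, as recorded just after \eqref{canoncontact}, and compute $\d\widetilde\Theta$. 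The form $\widetilde\imath$ is the zero-section $p=0$, so on the image of $\widetilde\imath$ the coordinate $p$ drops out; this is the only subtlety to keep track of when pulling back. A direct contraction against $\partial/\partial x^1,\dots,\widehat{\partial/\partial x^\alpha},\dots,\partial/\partial x^k$ of the term $\d s^\mu\wedge\d^{k-1}x_\mu$ and of $-p_a^\mu\,\d y^a\wedge\d^{k-1}x_\mu$ picks out precisely the $\mu=\alpha$ summand up to the sign $(-1)^{\alpha-1}$, and the $p\,\d^kx$ term contracts to $\pm p$, which dies under $\widetilde\imath^{\,*}$; hence $\eta^\alpha=\d s^\alpha-p_a^\alpha\,\d y^a$, matching Example \ref{ex:canonical-k-cocontact-structure}. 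Likewise $\d\widetilde\Theta$ contains the term $-\d p\wedge\d^kx$, and $\iota_{\partial/\partial p}$ of it is $-\d^kx$; contracting the remaining $k-1$ coordinate directions and applying $\widetilde\imath^{\,*}$ yields $\tau^\alpha=\d x^\alpha$ with the stated sign $(-1)^{k-\alpha}$. The other contributions to $\d\widetilde\Theta$ (the $\d p_a^\mu\wedge\d y^a\wedge\d^{k-1}x_\mu$ terms) are annihilated by $\iota_{\partial/\partial p}$, so they do not interfere.

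Next I would verify that the pair $(\tau^\alpha=\d x^\alpha,\eta^\alpha=\d s^\alpha-p_a^\alpha\,\d y^a)$ really is a $k$-cocontact structure on $\mathbf P^*$: this is immediate since it is literally the structure of Example \ref{ex:canonical-k-cocontact-structure}, so conditions (1)--(5) of Definition \ref{dfn:k-cocontact-manifold} hold, and the Reeb vector fields are $R^x_\alpha=\partial/\partial x^\alpha$, $R^s_\alpha=\partial/\partial s^\alpha$. For the converse direction, starting from $\{\tau^\alpha,\eta^\alpha\}$ I would form $\widetilde\Theta:=p\,\omega+\widetilde\nu^*\eta^\alpha\wedge\iota_{\partial/\partial x^\alpha}\omega$ exactly as in \eqref{relcan12} (with $\omega=\d^kx$), check by a one-line coordinate computation that this reproduces $-p_a^\mu\,\d y^a\wedge\d^{k-1}x_\mu-p\,\d^kx+\d s^\mu\wedge\d^{k-1}x_\mu$, hence equals $\widetilde\Theta$, and appeal to the canonical-model discussion (Proposition \ref{lemaaux} and the surrounding text) identifying $\widetilde{\mathcal P}$ with $\R^k\times\R\times\oplus^k\Tan^*Q\times\Rk$. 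Finally I would observe that substituting one formula into the other returns the original object — pulling back by $\widetilde\imath$ after reconstituting with $p\,\omega$ kills the $p$-term and recovers $\eta^\alpha$, while $\iota_{\partial/\partial p}\d$ of the reconstituted form isolates $\d x^\alpha=\tau^\alpha$ — so the two maps are inverse bijections. That both structures have the same Reeb vector fields $\{\partial/\partial s^\alpha\}$ (together with $\{\partial/\partial x^\alpha\}$ on the cocontact side) is then a sanity check rather than a separate argument.

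The only genuine obstacle is bookkeeping of signs in the iterated interior products $\iota_{\partial/\partial x^k}\cdots\iota_{\partial/\partial x^1}$ applied to $k$-forms built from $\d^{k-1}x_\mu=\iota_{\partial/\partial x^\mu}\d^kx$: one must be careful that $\iota_{\partial/\partial x^\mu}(\d^{k-1}x_\nu)=0$ for $\mu\ne\nu$ does \emph{not} hold (in fact $\iota_{\partial/\partial x^\mu}\d^{k-1}x_\nu=\pm\,\d^{k-2}x_{\mu\nu}$), so the vanishing of cross terms comes instead from the pullback by $\widetilde\imath$ and from the fact that after contracting all of $x^1,\dots,\widehat{x^\alpha},\dots,x^k$ the surviving $\d^{k-1}x_\mu$ is nonzero only for $\mu=\alpha$. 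Once the sign conventions for $\d^{k-1}x_\mu$ and for the ordering of the $\iota$'s are pinned down (matching those already used in \eqref{relcan11}), the computation is mechanical. I would present it compactly, showing the single representative case and noting that the general index follows by the obvious permutation, exactly in the style of the preceding two theorems in this section.
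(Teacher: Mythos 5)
Your proposal is correct and follows essentially the same route as the paper: a direct coordinate computation showing that the contractions and pullback in \eqref{relcan31} and \eqref{taus} applied to $\widetilde\Theta$ (and $\d\widetilde\Theta$) yield $\eta^\alpha=\d s^\alpha-p_a^\alpha\,\d y^a$ and $\tau^\alpha=\d x^\alpha$, i.e.\ the canonical $k$-cocontact structure, with the converse given by \eqref{relcan12}, exactly in the style of the two preceding theorems. (Only a cosmetic slip: the $-p\,\d^kx$ term contracts to $\pm p\,\d x^\alpha$, not $\pm p$, before being killed by $\widetilde\imath^{\,*}$ since $p=0$ on the zero section.)
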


The (contact) Reeb vector fields are the same for both structures and are $\displaystyle\Big\{\parder{}{s^\alpha}\Big\}$,
and the space-time Reeb vector fields in $\mathbf{P}^*$ are $\displaystyle\Big\{\parder{}{x^\alpha}\Big\}$.

\subsubsection{Relationship between the (non-canonical) structures of
\texorpdfstring{$\mathcal{P}^*=J^{1*}\pi\times_M\Lambda^{k-1}(\Tan^*M)$}{} and \texorpdfstring{$\mathbf{P}^*=\Rk\times\oplus^k\Tan^*Q\times\Rk$}{}}
\label{eqsect2}

First, observe that, in this particular situation,
the manifolds $\mathcal{P}^*$ and $\mathbf{P}^*$ are canonically identified with $\Rk\times\oplus^k\Tan^*Q\times\Rk$.
Then, as in the above two sections,
we obtain that, starting from the (non-canonical) special multicontact form $\Theta_{\mathcal{H}}\in\df^k(\mathcal{P}^*)$ given in \eqref{thetacoor2}, we get the forms
\begin{equation}
\label{relcan41}
\eta^\alpha=-\inn{\big(\parder{}{x^k}\big)}\ldots
\inn{\big(\parder{}{x^1}\big)}(\Theta_{\mathcal{H}}\wedge\d x^\alpha) \ ,
\end{equation}
and the $1$-forms $\tau^\alpha=\d x^\alpha$ are canonically defined.

Conversely, the special multicontact form in $\mathcal{P}^*\simeq\R^k\times\oplus^k\Tan^*Q\times\Rk$ is obtained from the forms $\{\eta^\alpha\}$ of the $k$-cocontact structure in $\mathbf{P}^*=\Rk\times\oplus^k\Tan^*Q\times\Rk$
as in \eqref{relcan12}, without the pullback of $\widetilde\nu$.

So, we have:

\begin{theorem}
The special multicontact form on
$\mathcal{P}^*\simeq\R^k\times\oplus^k\Tan^*Q\times\Rk$ and
the forms of the canonical $k$-cocontact structure on $\mathbf{P}^*=\R^k\oplus^k\Tan^*Q\times\Rk$
are related by \eqref{relcan12} and \eqref{relcan41},
and the $1$-forms $\tau^\alpha$ are canonically defined.
\end{theorem}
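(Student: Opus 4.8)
The plan is to assemble the statement from the coordinate identities already established in Section~\ref{eqsect2}, from the canonical $k$-cocontact model of Example~\ref{ex:canonical-k-cocontact-structure}, and from the diffeomorphisms of Proposition~\ref{lemaaux}; this theorem is essentially the $k$-cocontact counterpart of the (non-canonical) $k$-contact correspondence treated just above, so no genuinely new computation is needed.

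First I would record the identification $\mathcal{P}^*=J^{1*}\pi\times_M\Lambda^{k-1}(\Tan^*M)\simeq\Rk\times\oplus^k\Tan^*Q\times\Rk=\mathbf{P}^*$ furnished by Proposition~\ref{lemaaux} together with $\Lambda^{k-1}(\Tan^*M)\simeq\Rk$. The key point behind the last clause of the statement is that here $M=\Rk$, so its coordinates $x^1,\dots,x^k$ are \emph{global}; hence the closed one-forms $\tau^\alpha:=\d x^\alpha$ are intrinsically defined on $\mathcal{P}^*$, with no reference to a Hamiltonian section, and $\omega=\d^k x$ is the pull-back of the canonical volume form of $M$. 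This is what makes the $k$-cocontact (rather than merely $k$-contact) structure appear on $\mathbf{P}^*$.

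Next, given a global Hamiltonian section $\mathbf{h}$ with Hamiltonian function $H\in\Cinfty(\mathcal{P}^*)$, I would write the associated special multicontact form $\Theta_{\mathcal H}$ in natural coordinates as in \eqref{thetacoor2} and apply the extraction formula \eqref{relcan41}. Computing $\Theta_{\mathcal H}\wedge\d x^\alpha$, the term $H\,\d^k x$ is annihilated and, using $\d^{k-1}x_\mu\wedge\d x^\nu=(-1)^{k-1}\delta_\mu^\nu\,\d^k x$, one gets $\Theta_{\mathcal H}\wedge\d x^\alpha=(-1)^{k-1}\big(\d s^\alpha-p^\alpha_a\d y^a\big)\wedge\d^k x$; contracting successively with $\partial/\partial x^1,\dots,\partial/\partial x^k$ (which removes the factor $\d^k x$ up to a controlled sign) and pulling back by the zero-section $\jmath$ then yields exactly $\eta^\alpha=\d s^\alpha-p^\alpha_a\d y^a$. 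This is verbatim the computation carried out for $\widetilde\Theta$ just before \eqref{relcan31}, with $H$ in the role of the fibre coordinate $p$, which is why the $\d^k x$-term drops out in both cases. Comparing with Example~\ref{ex:canonical-k-cocontact-structure}, the pair $(\tau^\alpha=\d x^\alpha,\eta^\alpha)$ is precisely the canonical $k$-cocontact structure on $\mathbf{P}^*$, whose space-time and contact Reeb vector fields are $\partial/\partial x^\alpha$ and $\partial/\partial s^\alpha$ respectively.

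For the converse direction I would run the construction backwards: from $\{\eta^\alpha\}$, $\tau^\alpha=\d x^\alpha$ and a function $H$ on $\mathbf{P}^*$, set $\Theta_{\mathcal H}:=-H\,\d^k x+\eta^\alpha\wedge\d^{k-1}x_\alpha$ as in \eqref{relcan12}, now without the $\widetilde\nu$-pull-back because $\mathcal{P}^*$ and $\mathbf{P}^*$ are directly identified. Substituting $\eta^\alpha=\d s^\alpha-p^\alpha_a\d y^a$ recovers \eqref{thetacoor2}, and such a form is already known to be a special (variational) multicontact form, being the Hamiltonian form of the multicontact Hamiltonian formalism. Finally, reading off the coordinate expressions shows the two assignments are mutually inverse, which is all that the statement asserts. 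The only delicate point in the whole argument is keeping the signs consistent in the iterated interior products $\iota_{\partial/\partial x^k}\cdots\iota_{\partial/\partial x^1}$ and in the wedge reductions $\d^{k-1}x_\mu\wedge\d x^\nu=(-1)^{k-1}\delta_\mu^\nu\,\d^k x$, in agreement with the conventions fixed in \eqref{relcan11}--\eqref{taus}; once those are pinned down everything is routine.
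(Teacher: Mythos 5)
Your proposal is correct and follows essentially the same route as the paper: identify $\mathcal{P}^*$ and $\mathbf{P}^*$ with $\Rk\times\oplus^k\Tan^*Q\times\Rk$, check in coordinates that applying \eqref{relcan41} to $\Theta_{\mathcal H}$ from \eqref{thetacoor2} kills the $H\,\d^kx$ term and yields $\eta^\alpha=\d s^\alpha-p^\alpha_a\d y^a$, note that $\tau^\alpha=\d x^\alpha$ is canonical because the $x^\alpha$ are global coordinates of $M=\Rk$, and reconstruct $\Theta_{\mathcal H}$ conversely via \eqref{relcan12} without the $\widetilde\nu$-pullback. One small correction: in the forward direction you should not pull back by the zero-section $\jmath$ (that would land the forms on $P^*$, i.e.\ the $k$-contact case of the previous subsection); formula \eqref{relcan41} deliberately carries no pullback because the identification of $\mathcal{P}^*$ with $\mathbf{P}^*$ already makes $\eta^\alpha$ a form on $\mathbf{P}^*$.
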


\subsection{The Lagrangian case}

As in the Hamiltonian case, now we have the canonical identifications
$$
\mathcal{P}=J\pi\times_M\Lambda^{k-1}(\Tan^*M)\simeq\R^k\times \oplus^k\Tan Q\times\Rk\simeq\mathbf{P}\,,
$$
and the natural embedding
\begin{eqnarray*}
\imath\colon P=\oplus^k\Tan Q\times\Rk\longhookrightarrow\Rk\times\oplus^k\Tan Q\times\Rk \,.
\end{eqnarray*}

Therefore, following the same patterns as in Sections
\ref{eqsect1} and \ref{eqsect2} we obtain:

\begin{theorem}
Let $\L$ be an ``autonomous'' Lagrangian function;
namely, $\L=\L(y^a,y^a_\alpha,s^\alpha)$
and $E_\L$ its associated Lagrangian energy.
The Lagrangian multicontact form on
$\mathcal{P}\simeq\R^k\times \oplus^k\Tan Q\times\Rk$ and
the contact forms of the Lagrangian $k$-contact structure on $P=\oplus^k\Tan Q\times\Rk$
are related as follows,
\begin{align*}
\eta_{\mathcal{L}}^\alpha &= \imath^*\Big(\inn{\big(\parder{}{x^k}\big)}\ldots\inn{\big(\parder{}{x^{\alpha+1}}\big)}\inn{\big(\parder{}{x^{\alpha-1}}\big)}\ldots
 \inn{\big(\parder{}{x^1}\big)}\Theta_{\mathcal{L}}\Big)=
 -\imath^*\Big(\inn{\big(\parder{}{x^k}\big)}\ldots
 \inn{\big(\parder{}{x^1}\big)}(\Theta_{\mathcal{L}}\wedge\d x^\alpha)\Big)\,, \\
\Theta_{\mathcal{L}} &=
-E_{\mathcal{L}}\,\omega+\kappa_2^*\theta^\alpha\wedge\inn{\big(\parder{}{x^\alpha}\big)}\omega=
-E_{\mathcal{L}}\,\d^k t+\kappa_2^*\theta^\alpha\wedge\d^{k-1}x_\alpha \ ,
\end{align*}
where $\kappa_2\colon\R^k\times\oplus^k\Tan Q\times\Rk\to\oplus^k\Tan Q\times\Rk$
is the canonical submersion.
\end{theorem}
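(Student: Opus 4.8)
The plan is to reduce both identities to a coordinate verification on the trivial product bundles, mirroring the Hamiltonian arguments of Sections~\ref{eqsect1} and~\ref{eqsect2}; the statement is essentially the Lagrangian transcription of those sections. Since $\mathcal{P}\simeq\R^k\times\oplus^k\Tan Q\times\Rk$ and $P=\oplus^k\Tan Q\times\Rk$ are trivial products, all the objects involved are globally defined and it suffices to work in the natural coordinates $(x^\mu,y^a,y^a_\mu,s^\mu)$. First I would assemble the coordinate data already at hand: the expression \eqref{thetacoor1} of $\Theta_\L$, the identity $\eta_\L^\alpha=\d s^\alpha-\theta_\L^\alpha$ with $\theta_\L^\alpha=\parder{L}{y^a_\alpha}\d y^a$, the energy $E_\L=y^a_\alpha\parder{L}{y^a_\alpha}-L$, and $\omega=\d^kx$, $\d^{k-1}x_\alpha=\inn{\partial/\partial x^\alpha}\d^kx$. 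The autonomy hypothesis $\parder{L}{x^\mu}=0$ enters exactly here: it guarantees that $\theta_\L^\alpha$, $\eta_\L^\alpha$ and $E_\L$ have coefficients depending only on $(y^a,y^a_\mu,s^\mu)$, hence are pull-backs along $\kappa_2$ of the corresponding objects on $P$, which is what makes $\kappa_2^\ast\theta^\alpha$ (that is, $\kappa_2^\ast\eta_\L^\alpha$) meaningful.

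To prove the first displayed relation I would extract $\eta_\L^\alpha$ from $\Theta_\L$ as in Section~\ref{eqsect1}. Because $\mathcal{P}$ is a trivial bundle over $\R^k$, the frame $\{\partial/\partial x^1,\dots,\partial/\partial x^k\}$ is globally defined on $\mathcal{P}$; one contracts $\Theta_\L$ by all these fields except $\partial/\partial x^\alpha$ and then pulls back along the zero-section $\imath$. In \eqref{thetacoor1} the term $E_\L\,\d^kx$ becomes a function times $\d x^\alpha$ after these $k-1$ contractions, and hence dies under $\imath^\ast$ since $\imath^\ast\d x^\alpha=0$; in the term $-\theta_\L^\mu\wedge\d^{k-1}x_\mu$ only the summand $\mu=\alpha$ survives, because for $\mu\neq\alpha$ the contracting field $\partial/\partial x^\mu$ kills that summand; and likewise $\d s^\mu\wedge\d^{k-1}x_\mu$ retains only its $\mu=\alpha$ part. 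Collecting the surviving terms gives a nonzero multiple of $\d s^\alpha-\theta_\L^\alpha=\eta_\L^\alpha$, and the factor $(-1)^{\alpha-1}$ in the statement is precisely the one normalising this multiple to $1$. The equality of this expression with $-\imath^\ast\big(\inn{\partial/\partial x^k}\cdots\inn{\partial/\partial x^1}(\Theta_\L\wedge\d x^\alpha)\big)$ then follows from the elementary identity that contracting $\beta\wedge\d x^\alpha$ by the full frame equals, up to sign, contracting $\beta$ by the same frame with $\partial/\partial x^\alpha$ removed, applied to $\beta=\Theta_\L$.

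For the second displayed relation I would perform the reverse substitution, as in \eqref{relcan22}: plug the coordinate expressions into $-E_\L\,\omega+\kappa_2^\ast\eta_\L^\alpha\wedge\d^{k-1}x_\alpha$, expand $\kappa_2^\ast\eta_\L^\alpha\wedge\d^{k-1}x_\alpha=\d s^\alpha\wedge\d^{k-1}x_\alpha-\parder{L}{y^a_\alpha}\d y^a\wedge\d^{k-1}x_\alpha$, and collect terms; the result is exactly the coordinate expression \eqref{thetacoor1} of $\Theta_\L$. This is the verbatim analogue of \eqref{relcan22}, with $-H$ replaced by $E_\L$, $p_a^\alpha$ by $\parder{L}{y^a_\alpha}$, and the canonical contact forms by $\eta_\L^\alpha$. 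I would then also note, as in the preceding theorems, that the Reeb vector fields $(R_\L)_\alpha=\parder{}{s^\alpha}-W^{ba}_{\gamma\beta}\parderr{L}{s^\alpha}{y^b_\gamma}\parder{}{y^a_\beta}$ of the $k$-contact structure $\{\eta_\L^\alpha\}$ on $P$ coincide with the Reeb vector fields of the multicontact structure $(\Theta_\L,\omega)$, both being given by the same formula.

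The main obstacle will be the sign bookkeeping in the iterated contractions, which is controlled by the ordering conventions for $\d^kx$ and $\d^{k-1}x_\mu$; there is no conceptual difficulty, since the whole argument is the trivial-bundle coordinate computation already carried out in the Hamiltonian case, with the autonomy of $\L$ playing precisely the role needed for $\eta_\L^\alpha$, $\theta_\L^\alpha$ and $E_\L$ to be $\kappa_2$-related to their counterparts on $P$.
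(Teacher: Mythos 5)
Your proposal is correct and follows essentially the same route as the paper, which proves this theorem precisely by repeating the trivial-bundle coordinate computation of the Hamiltonian Sections~\ref{eqsect1} and~\ref{eqsect2} (contract $\Theta_\L$ of \eqref{thetacoor1} by the frame $\partial/\partial x^\mu$, pull back along the zero-section, and substitute back as in \eqref{relcan22}), with autonomy ensuring $\eta_\L^\alpha$, $\theta_\L^\alpha$, $E_\L$ are $\kappa_2$-basic. The only quibble is your reference to a factor $(-1)^{\alpha-1}$ ``in the statement'': that factor appears in the Hamiltonian formulas \eqref{relcan11}--\eqref{relcan31} but not in the Lagrangian theorem, so it belongs to the sign bookkeeping you already flag rather than to a substantive gap.
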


\begin{theorem}
Let $\L$ be a ``non-autonomous'' Lagrangian function;
that is, $\L=\L(x^\alpha,y^a,y^a_\alpha,s^\alpha)$
and $E_\L$ its associated Lagrangian energy.
The Lagrangian multicontact form on
$\mathcal{P}\simeq\R^k\times \oplus^k\Tan Q\times\Rk$ and
the contact forms of the Lagrangian $k$-cocontact structure on $\mathbf{P}=\Rk\times\oplus^k\Tan Q\times\Rk$
are related as follows,
\begin{align*}
\eta_{\mathcal{L}}^\alpha&= -\inn{\big(\parder{}{x^k}\big)}\ldots
\inn{\big(\parder{}{x^1}\big)}(\Theta_{\mathcal{L}}\wedge\d x^\alpha)\ , \\
\Theta_{\mathcal{L}}&= 
-E_{\mathcal{L}}\,\omega+\kappa_2^*\theta^\alpha\wedge\inn{\big(\parder{}{x^\alpha}\big)}\omega=
-E_{\mathcal{L}}\,\d^k t+\kappa_2^*\theta^\alpha\wedge\d^{k-1}x_\alpha \ ,
\end{align*}
and the $1$-forms $\tau^\alpha$ are canonically defined.
\end{theorem}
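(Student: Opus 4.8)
The strategy is to repeat, on the tangent/jet side, the computation carried out in the Hamiltonian case in Sections~\ref{eqsect1} and \ref{eqsect2}, replacing $J^{1*}\pi$ by $J^1\pi$, $\oplus^k\Tan^*Q$ by $\oplus^k\Tan Q$, the multicontact form $\Theta_{\mathcal H}$ by $\Theta_\L$, the Hamiltonian $H$ by the Lagrangian energy $E_\L$, and the $k$-cocontact forms $\eta^\alpha$ by the Cartan one-forms $\eta_\L^\alpha$. First I would make the identification $\mathcal{P}=J^1\pi\times_M\Lambda^{k-1}(\Tan^*M)\simeq\Rk\times\oplus^k\Tan Q\times\Rk=\mathbf{P}$ explicit (cf.\ the cotangent version in Proposition~\ref{lemaaux}): for the trivial bundle $\pi\colon\Rk\times Q\to\Rk$ the $1$-jet of a section $x\mapsto(x,\varphi(x))$ is determined by its source point together with the $k$ vectors $(\d\varphi)(\partial/\partial x^\alpha)\in\Tan Q$, so $J^1\pi\simeq\Rk\times\oplus^k\Tan Q$ canonically, while $\Lambda^{k-1}(\Tan^*\Rk)\simeq\Rk\times\R^k$ via the basis $\{\d^{k-1}x_\mu\}$; under this identification the natural coordinates $(x^\mu,y^a,y^a_\mu,s^\mu)$ on the two manifolds agree, $\omega=\tau^*\eta$ reads $\d^kx$, and each closed $1$-form $\tau^\alpha=\d x^\alpha$ is the pullback to $\mathbf{P}$ of $\d x^\alpha$ on the base $M=\Rk$, hence canonically defined. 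It then remains to verify the two displayed relations, using the coordinate expression \eqref{thetacoor1} of $\Theta_\L$ and the coordinate expression $\eta_\L^\alpha=\d s^\alpha-(\partial L/\partial y^a_\alpha)\,\d y^a$ of the Cartan one-forms; recall that $(\tau^\alpha=\d x^\alpha,\eta_\L^\alpha)$ is indeed a $k$-cocontact structure when $L$ is regular, by Proposition~\ref{prop:regular-lagrangian}, and that $E_\L=y^a_\alpha(\partial L/\partial y^a_\alpha)-L$.

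For the first formula I would compute $\Theta_\L\wedge\d x^\alpha$ from \eqref{thetacoor1}: the $E_\L\,\d^kx$ summand vanishes, and by the standard identity $\d x^\alpha\wedge\d^{k-1}x_\mu=\delta^\alpha_\mu\,\d^kx$ only the $\mu=\alpha$ terms of the remaining two summands survive, giving $\Theta_\L\wedge\d x^\alpha=(-1)^{k-1}\,\eta_\L^\alpha\wedge\d^kx$. Contracting this successively with $\partial/\partial x^1,\dots,\partial/\partial x^k$ (innermost first, so as to match the order $\inn{\partial/\partial x^k}\cdots\inn{\partial/\partial x^1}$), and using that $\eta_\L^\alpha$ has no $\d x^\beta$-component — whence $\inn{\partial/\partial x^\beta}\eta_\L^\alpha=0$, so each contraction merely slides past $\eta_\L^\alpha$ with one sign, while $\inn{\partial/\partial x^k}\cdots\inn{\partial/\partial x^1}\,\d^kx=1$ — the signs collect to $\inn{\partial/\partial x^k}\cdots\inn{\partial/\partial x^1}(\Theta_\L\wedge\d x^\alpha)=-\eta_\L^\alpha$, which is precisely the first displayed relation.

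The reverse formula is then a direct regrouping: from \eqref{thetacoor1} one has $\d s^\mu\wedge\d^{k-1}x_\mu-(\partial L/\partial y^a_\mu)\,\d y^a\wedge\d^{k-1}x_\mu=\eta_\L^\mu\wedge\d^{k-1}x_\mu$ and $\d^{k-1}x_\mu=\inn{\partial/\partial x^\mu}\omega$, so $\Theta_\L$ is the sum of an $E_\L\,\omega$-term and $\eta_\L^\alpha\wedge\inn{\partial/\partial x^\alpha}\omega$, i.e.\ the second displayed expression (up to the overall sign convention used in the statement); combined with the first formula this also exhibits the two constructions as mutual inverses, and the canonicity of the $\tau^\alpha=\d x^\alpha$ was already established in the first step. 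I expect the only genuine obstacle to be the sign bookkeeping — being consistent about the orientation convention for $\d^{k-1}x_\mu$ and about the order in which the vector fields $\partial/\partial x^\beta$ are contracted — exactly as in Sections~\ref{eqsect1}--\ref{eqsect2}; it is also worth stressing that, $\pi$ being trivial, $\mathcal{P}$ and $\mathbf{P}$ coincide with no residual choices, which is why, in contrast with the autonomous theorem, no pullback along an embedding $\imath$ intervenes and the non-autonomous dependence of $\eta_\L^\alpha$ on the $x^\mu$ causes no difficulty.
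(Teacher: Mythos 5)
Your proposal is correct and follows essentially the same route as the paper, which gives no separate argument for this theorem but simply transfers the coordinate computations of the Hamiltonian case (Sections \ref{eqsect1}--\ref{eqsect2}) to the Lagrangian side; your explicit identification $\mathcal{P}\simeq\mathbf{P}$, the canonicity of $\tau^\alpha=\d x^\alpha$, and the contraction/reconstruction computations are exactly that transfer. Your parenthetical about the second displayed formula is also well taken: checking against \eqref{thetacoor1} one indeed gets $\Theta_\L=E_\L\,\omega+\eta_\L^\alpha\wedge\inn{\partial/\partial x^\alpha}\omega$, so the discrepancy in sign and in writing $\theta^\alpha$ rather than $\eta_\L^\alpha$ lies in the statement's conventions (inherited from \eqref{relcan12}), not in your computation.
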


\section{Relation with other kinds of multicontact structures}
\label{6}

For completeness, we briefly comment on the relations between the structures studied in this paper and other similar structures present in the literature.

A  higher-dimensional version of contact distributions was proposed in \cite{Vit_15}. There, a distribution $D\subset \T\mathcal{M}$ is called multicontact if it is maximally non-integrable. Namely, the only vector field $X\in\Gamma(D)$ with the property that $[X,Y]\in \Gamma (D)$ for all $Y\in\Gamma(D)$ is $X=0$.

The relation between $k$-contact geometry and maximally non-integrable distributions has been studied in \cite{LRS_24}. Given a $k$-contact  structure on $\mathcal{M}$, consider de distribution $\mathcal{D}^{\mathrm{C}}$ defined in \ref{def:kcontact}. Then, Theorem 3.6 in \cite{LRS_24} states that $\mathcal{D}^{\mathrm{C}}$ is maximally non-integrable. The converse is not true (see Theorem 3.13 in \cite{LRS_24} for a counterexample). The hypothesis needed for a maximally non-integrable distribution to be associated with a $k$-contact structure are shown in Theorem 3.14 in \cite{LRS_24}.

We proceed to study the relation between $k$-cocontact and multicontact structures with maximally non-integrable distributions.

\begin{proposition} Let $(\mathcal{M},\eta^\alpha,\tau^\alpha)$ be a $k$-cocontact manifold. Then, $\mathcal{D}^{\mathrm{S}}\cap\mathcal{D}^{\mathrm{C}}$ is maximally non-integrable.    
\end{proposition}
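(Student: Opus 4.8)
The plan is to show that any $X\in\Gamma(\D^\rmS\cap\D^\rmC)$ with $[X,Y]\in\Gamma(\D^\rmS\cap\D^\rmC)$ for every $Y\in\Gamma(\D^\rmS\cap\D^\rmC)$ must be a Reeb vector field, and then to kill it using condition~(5) of Definition~\ref{dfn:k-cocontact-manifold}. Write $D:=\D^\rmS\cap\D^\rmC$. First I would use that $X,Y\in\Gamma(\D^\rmC)$ forces $\inn{X}\eta^\alpha=\inn{Y}\eta^\alpha=0$, so that the standard formula for the exterior differential of a $1$-form, $\d\eta^\alpha(X,Y)=X(\inn{Y}\eta^\alpha)-Y(\inn{X}\eta^\alpha)-\inn{[X,Y]}\eta^\alpha$, reduces to $\d\eta^\alpha(X,Y)=-\inn{[X,Y]}\eta^\alpha$. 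Since $[X,Y]\in\Gamma(D)\subset\Gamma(\D^\rmC)$ gives $\inn{[X,Y]}\eta^\alpha=0$, we obtain $(\inn{X}\d\eta^\alpha)(Y)=0$ for all $Y\in\Gamma(D)$ and all $\alpha$; working pointwise (any $v\in D_p$ extends to a local section of the regular distribution $D$), this says that each $1$-form $\inn{X}\d\eta^\alpha$ vanishes on $D$.

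The next step is to upgrade this to $\inn{X}\d\eta^\alpha=0$. Here I would observe that $\inn{X}\d\eta^\alpha$ also vanishes on $\D^\rmR$: for any $Z\in\Gamma(\D^\rmR)$ one has $\inn{Z}\d\eta^\alpha=0$, hence $(\inn{X}\d\eta^\alpha)(Z)=\d\eta^\alpha(X,Z)=-(\inn{Z}\d\eta^\alpha)(X)=0$, and this holds for arbitrary $X$. Now conditions~(3), (4) and~(5) of Definition~\ref{dfn:k-cocontact-manifold} give the splitting $\T\mathcal{M}=\D^\rmR\oplus D$: indeed $\rk\D^\rmR=2k$ by~(3), $\rk D=\rk(\D^\rmC\cap\D^\rmS)=\dim\mathcal{M}-2k$ by~(4), and $\D^\rmR\cap D=\D^\rmC\cap\D^\rmR\cap\D^\rmS=\{0\}$ by~(5), so the ranks add up to $\dim\mathcal{M}$ and the sum is direct. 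Since $\inn{X}\d\eta^\alpha$ annihilates both summands, $\inn{X}\d\eta^\alpha=0$ for every $\alpha$, i.e.\ $X\in\Gamma(\D^\rmR)$. Combining this with $X\in\Gamma(D)=\Gamma(\D^\rmC\cap\D^\rmS)$ yields $X\in\Gamma(\D^\rmC\cap\D^\rmR\cap\D^\rmS)=\{0\}$ by~(5), so $X=0$; this is exactly the maximal non-integrability of $D$.

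I expect the delicate point to be precisely the passage from ``$\inn{X}\d\eta^\alpha$ vanishes on $D$'' to ``$\inn{X}\d\eta^\alpha=0$'', since this is where conditions~(3)--(5) are used at once, through the decomposition $\T\mathcal{M}=\D^\rmR\oplus D$; the rest is routine bookkeeping. As an independent check one may run the argument in the adapted coordinates $(x^\alpha,z^I,s^\alpha)$ of Proposition~\ref{prop:adapted-coordinates}, where $D=\big\langle E_I=\partial/\partial z^I+f_I^\alpha\,\partial/\partial s^\alpha\big\rangle$ and $[E_I,E_J]=\big(\partial f_J^\beta/\partial z^I-\partial f_I^\beta/\partial z^J\big)\,\partial/\partial s^\beta$. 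Since $D\cap\big\langle\partial/\partial s^\beta\big\rangle=\{0\}$, writing $X=X^I E_I$ the hypothesis $[X,E_J]\in\Gamma(D)$ forces $X^I\big(\partial f_J^\beta/\partial z^I-\partial f_I^\beta/\partial z^J\big)=0$ for all $J,\beta$, which is again $\inn{X}\d\eta^\beta=0$, hence $X\in\Gamma(\D^\rmR)$ and $X=0$ by~(5).
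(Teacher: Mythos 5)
Your argument is correct and is essentially the paper's proof: the same identity $0=\inn{[X,Y]}\eta^\alpha=-\inn{Y}\inn{X}\d\eta^\alpha$, the same splitting $\T\mathcal{M}=(\D^\rmS\cap\D^\rmC)\oplus\D^\rmR$ obtained by dimension counting from conditions (3)--(5), and the same conclusion that $\inn{X}\d\eta^\alpha=0$ forces $X\in\Gamma(\D^\rmR)$ and hence $X=0$. You merely make the rank count and the vanishing on $\D^\rmR$ explicit, and add an optional coordinate check; both are consistent with the paper's (terser) proof.
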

\begin{proof}
    Consider $X\in\Gamma(\mathcal{D}^{\mathrm{S}}\cap\mathcal{D}^\mathrm{C})$ such that $[X,Y]\in\Gamma(\mathcal{D}^{\mathrm{S}}\cap\mathcal{D}^\mathrm{C})$ for all $Y\in\Gamma (\mathcal{D}^{\mathrm{S}}\cap\mathcal{D}^\mathrm{C})$. Then, 
    \begin{align}
               0=\inn{[X,Y]}\eta^\alpha=-\inn{Y}\inn{X}\d\eta^\alpha\,.
    \end{align}
   By a dimensional-counting argument,
    \begin{align}\label{eq:kcocoMnI}
        (\mathcal{D}^\mathrm{S}\cap \mathcal{D}^\mathrm{C})\oplus \mathcal{D}^\mathrm{R}=\T\mathcal{M}\,.
    \end{align}
    Since $\inn{R}\inn{X}\d \eta^\alpha=0$, for every $R\in\Gamma(\mathcal{D}^\mathrm{R})$,  we conclude that $\inn{X}\d\eta^\alpha$ vanishes by the action of all tangent vector fields.  
    Therefore, $\inn{X}\d\eta^\alpha=0$, for every $\alpha$ and $X\in\Gamma (\mathcal{D}^\mathrm{R})$. In light of \eqref{eq:kcocoMnI}, it must be $X=0$.
\end{proof}

%In addition, we have:

\begin{proposition} Let $(\mathcal{M},\Theta,\omega)$ be a variational multicontact manifold. Then, $\ker\Theta$ is maximally non-integrable.    
\end{proposition}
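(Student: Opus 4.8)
The plan is to verify the defining property of maximal non‑integrability directly: if $X\in\Gamma(\ker\Theta)$ satisfies $[X,Y]\in\Gamma(\ker\Theta)$ for every $Y\in\Gamma(\ker\Theta)$, then $X=0$. No coordinates will be needed; the argument rests only on the two parts of Lemma~\ref{lem:multicoMit} and on the definition of the Reeb distribution (Definition~\ref{def:reeb_dist}).

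First I would fix such an $X$. By Lemma~\ref{lem:multicoMit}(1), $\ker\Theta\subset\ker\omega$, so $X\in\Gamma(\ker\omega)$ and in particular $\inn{X}\omega=0$. Next I rewrite the bracket hypothesis as a contraction identity. For $Y\in\Gamma(\ker\Theta)$ the identity $\inn{[X,Y]}=\Lie_X\circ\inn{Y}-\inn{Y}\circ\Lie_X$, applied to $\Theta$, together with $\inn{Y}\Theta=0$ and $\Lie_X\Theta=\d\inn{X}\Theta+\inn{X}\d\Theta=\inn{X}\d\Theta$ (using $\inn{X}\Theta=0$), gives
\begin{equation*}
\inn{[X,Y]}\Theta=-\inn{Y}\inn{X}\d\Theta .
\end{equation*}
Hence the hypothesis is equivalent to: $\inn{Y}\inn{X}\d\Theta=0$ for all $Y\in\Gamma(\ker\Theta)$.

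The heart of the proof is then to show $\inn{X}\d\Theta\in\mathcal{A}^k(\ker\omega)$, i.e.\ that $\inn{X}\d\Theta$ is annihilated by contraction with every section of $\ker\omega$. Since $(\mathcal M,\Theta,\omega)$ is variational, Lemma~\ref{lem:multicoMit}(2) gives the direct sum $\ker\Theta\oplus\mathcal{D}^{\mathfrak R}=\ker\omega$, so it is enough to test contraction of $\inn{X}\d\Theta$ against $\Gamma(\ker\Theta)$ and against $\mathfrak R$. Against $Y\in\Gamma(\ker\Theta)$ it vanishes by the previous step. Against $R\in\mathfrak R$: by Definition~\ref{def:reeb_dist} one has $\inn{R}\d\Theta\in\mathcal{A}^k(\ker\omega)$, and since $X\in\Gamma(\ker\omega)$ this forces $\inn{X}\inn{R}\d\Theta=0$, hence $\inn{R}\inn{X}\d\Theta=-\inn{X}\inn{R}\d\Theta=0$. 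Therefore $\inn{Z}\inn{X}\d\Theta=0$ for all $Z\in\Gamma(\ker\omega)$, which is exactly $\inn{X}\d\Theta\in\mathcal{A}^k(\ker\omega)$. Now $X\in\Gamma(\ker\omega)$ and $\inn{X}\d\Theta\in\mathcal{A}^k(\ker\omega)$ say precisely that $X\in\mathfrak R=\Gamma(\mathcal{D}^{\mathfrak R})$; combined with $X\in\Gamma(\ker\Theta)$ and $\ker\Theta\cap\mathcal{D}^{\mathfrak R}=\{0\}$ (again Lemma~\ref{lem:multicoMit}(2)), we conclude $X=0$.

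The step I expect to be the most delicate is the use of the variational hypothesis: it enters only through Lemma~\ref{lem:multicoMit}(2), which is what guarantees that $\ker\Theta$ is the full "non‑Reeb part" of $\ker\omega$; without it one would merely have $\ker\Theta\subset\ker\omega$ and the contraction bookkeeping above would not close up (a priori $\inn{X}\d\Theta$ could fail to be tested on all of $\ker\omega$). As a cross‑check one can rerun the computation in the adapted coordinates of Theorem~\ref{cor:coordenadas}, where $\ker\Theta=\langle\partial/\partial u^I-f_I^\mu\,\partial/\partial s^\mu\rangle$; writing $X=\xi^I(\partial/\partial u^I-f_I^\mu\,\partial/\partial s^\mu)$, the bracket condition reduces to $\xi^I(\partial f_J^\mu/\partial u^I-\partial f_I^\mu/\partial u^J)=0$ for all $J,\mu$, and the rank‑$k$ condition on $\mathcal{D}^{\mathfrak R}$ in Definition~\ref{multicontactdef} then forces $\xi=0$, recovering the same conclusion.
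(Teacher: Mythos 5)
Your proof is correct and takes essentially the same route as the paper's: you turn the bracket hypothesis into $\inn{Y}\inn{X}\d\Theta=0$, observe that $\inn{R}\inn{X}\d\Theta=0$ for Reeb fields because $X\in\Gamma(\ker\omega)$, and then use the variational decomposition $\ker\Theta\oplus\mathcal{D}^{\mathfrak{R}}=\ker\omega$ from Lemma~\ref{lem:multicoMit} to conclude $X\in\mathfrak{R}\cap\Gamma(\ker\Theta)=\{0\}$. The explicit Cartan-identity derivation and the coordinate cross-check are welcome extra detail but not a different argument.
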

\begin{proof}
 Consider $X\in\Gamma(\ker\Theta)$ such that $[X,Y]\in\Gamma(\ker\Theta)$ for all $Y\in\Gamma( \ker\Theta)$. Then, 
    \begin{align}
               0=\inn{[X,Y]}\Theta=-\inn{Y}\inn{X}\d\Theta\,.
    \end{align}
  Moreover, if $R\in \Gamma(\mathcal{D}^\mathfrak{R})$, then
$\inn{R}\inn{X}\d\Theta=0$ because $X\in\ker \Theta\subset\ker\omega$. Therefore, $\inn{X}\d \Theta$ vanishes by the action of the elements of $\ker \Theta$ and $\mathcal{D}^\mathfrak{R}$, which, by Lemma \ref{lem:multicoMit}, are all the elements of $\ker \omega$. Consequently, $X\in\Gamma(\mathcal{D}^{\mathfrak{R}})$. Since we also have $X\in\Gamma(\ker\Theta)$, then $X=0$ by the third condition of special multicontact structures \ref{multicontactdef} and lemma \ref{lem:multicoMit}.
\end{proof}

A non-variational special multicontact structure does not lead, in general, to a maximally non-integrable distribution. For instance, consider $\mathcal{M}=\mathbb{R}^7$ with the globally defined coordinates $(x,y,q,p^x,p^y,s^x,s^y)$, and the special multicontact structure
\begin{align}
    \omega=\d x\wedge\d y\,,\qquad \Theta=(\d s^x-p^x\d q)\wedge\d y-(\d s^y-p^y\d q)\wedge \d x+\d p^x\wedge \d q\,.
\end{align}
In this case, $\ds\ker \Theta=\ker \omega\,\cap\,\ker \Theta=\Big\langle\frac{\partial}{\partial p^y}\Big\rangle$, that is not maximally non-integrable.

%%%%%%%%%%%%%%%%%%%%%%%%%%%%%%%%%%%%%%%%%%%%%%%%%%%%%%%%
\section{Conclusions and outlook}
\label{concl}

The covariant geometrical description of conservative classical field theories is well-established, relying on finite-dimensional structures such as $k$-symplectic, $k$-cosymplectic, and multisymplectic geometries, among others, which extend traditional symplectic geometry. For action-dependent (non-conservative) field theories, analogous frameworks, inspired by the aforementioned structures, are provided by $k$-contact, $k$-cocontact, and multicontact structures, extending contact geometry.

This work begins by reviewing the definitions and core features of these new geometric structures, including $k$-contact, $k$-cocontact, and multicontact (distinguishing different types of the latter), and Hamiltonian systems on each case. This foundation enables a finite-dimensional covariant geometrical description of action-dependent regular classical field theories in both Lagrangian and Hamiltonian formalisms.

The main paper's novel contribution has been the study of the relationships among these types of structure, specifically when the phase bundles of the field theories are trivial. We have also compared these structures with existing alternative definitions of multicontact structures.

This research opens avenues for future extension to singular (almost-regular) action-dependent classical field theories.

%%%%%%%%%%%%%%%%%%%%%%%%%%%%%%%%%%%%%%%%%%%%%%%%%%%%%%%%
\appendix
\section{Multivector fields}
\label{append}

(See \cite{BCGGG_91,CIL_96,EMR_98,KMS_93,Mar_97} for more details).

Let $\mathcal{M}$ be an $N$-dimensional differentiable manifold.
The {\sl \textbf{$k$-multivector fields}} in $\mathcal{M}$ ($k\leq N$)
are the sections of the $k$-multitangent bundle
$\displaystyle\bigwedge^k\Tan\mathcal{M}:=\overbrace{\Tan\mathcal{M}\wedge\dotsb\wedge\Tan\mathcal{M}}^k$; that is,
the skew-symmetric contravariant tensor fields of order $k$ in $\mathcal{M}$.
The set $\vf^k(\mathcal{M})$ consists of all these multivector fields.
In particular, $\mathbf{X}\in\vf^k(\mathcal{M})$ is a
{\sl \textbf{locally decomposable multivector field}} if
there exist $X_1,\ldots ,X_k\in\vf (U)$ such that $\mathbf{X}\vert_U=X_1\wedge\dotsb\wedge X_k$.

Locally decomposable $k$-multivector fields are locally associated with $k$-dimensional
distributions $D\subset\Tan \mathcal{M}$, and this splits
$\vf^k(\mathcal{M})$ into {\sl equivalence classes} $\{ {\bf X}\}\subset\vf^k(\mathcal{M})$ 
which are made of the locally decomposable multivector fields associated with the same distribution.
If ${\bf X},{\bf X}'\in\{ {\bf X}\}$ then, for $U\subset \mathcal{M}$,
there exists a non-vanishing function $f\in \Cinfty(U)$ such that 
${\bf X}'=f{\bf X}$ on $U$.
In particular, an {\sl\textbf{integrable multivector field}}
is a locally decomposable multivector field whose associated distribution is integrable; that is, involutive. 

If $\Omega\in\df^k(\mathcal{M})$ and $\mathbf{X}\in\mathfrak{X}^k(\mathcal{M})$,
the {\sl \textbf{contraction}} between ${\bf X}$ and $\Omega$ is
the natural contraction between tensor fields; in particular, it gives zero when $m<k$ and, if $m\geq k$, and for locally decomposable multivector fields is
$$
 \inn{\bf X}\Omega\mid_{U}:= \inn{(X_1\wedge\dotsb\wedge X_k)}\Omega 
=\inn{X_k}\ldots\inn{X_1}\Omega \,.
$$

Now, let $\varrho\colon\mathcal{M}\to M$ be a fiber bundle,
where $M$ is an oriented manifold with volume form $\eta\in\df^k(M)$.
A multivector field $\mathbf{X}\in\mathfrak{X}^k(\mathcal{M})$ 
is {\sl $\varrho$-transverse} if,
for every $\beta\in\Omega^k(M)$ such that
$\beta_{\varrho({\rm p})}\not= 0$,
at every point ${\rm p}\in \mathcal{M}$,
we have that
$(\inn{\mathbf{X}}(\varrho^*\beta))_{{\rm p}}\not= 0$.
Then, if $\mathbf{X}\in\mathfrak{X}^k(\mathcal{M})$ is
integrable and $\varrho$-transverse, 
its integral manifolds are local sections of the projection
$\varrho\colon \mathcal{M}\to M$.

Therefore, the {\sl \textbf{canonical prolongation}} of a section  $\bm\psi\colon U\subset M\to \mathcal{M}$ to 
$\Lambda^k\Tan{\cal M}$ is the section $\bm\psi^{(k)}\colon U\subset M\to\Lambda^k\Tan\mathcal{M}$ 
defined as $\bm\psi^{(k)}:=\Lambda^k\Tan\psi\circ{\bf Y}_\eta$; where 
$\Lambda^m\Tan\psi\colon\Lambda^k\Tan M\to\Lambda^k\Tan\mathcal{M}$ is the natural extension of $\bm\psi$ 
to the corresponding multitangent bundles,
and ${\bf Y}_\eta\in\vf^k(M)$ is the unique $k$-multivector field on $M$
such that $\inn{{\bf Y}_\eta}\omega=1$. Then,
$\bm\psi$ is an {\sl\textbf{integral section}} of ${\bf X}\in\vf^m({\cal M})$ if, and only if, ${\bf X}\circ\bm\psi=\bm\psi^{(m)}$.

Thus, if $(U;z^i)$ is a chart of coordinates in ${\cal M}$,
and $x^\alpha$ are the coordinates in $\R^k$;
given a decomposable multivector field $\bfX = X_1\wedge\dotsb\wedge X_k$, with local expression
$\displaystyle X_\alpha = X_\alpha^i\parder{}{z^i}$,
a section $\bm\psi\colon M\to {\cal M}$, with $\bm\psi=(\psi^\alpha)$, is an {\sl\textbf{integral map}} of $\bfX$ if it satisfies the set of partial differential equations
\begin{equation}\label{eq:pde-multi}
\parder{\psi^i}{x^\alpha} = X_\alpha^i\circ\psi\,.
\end{equation}
In addition, the stronger integrability condition $[X_\alpha, X_\beta] = 0$, for every $\alpha,\beta = 1,\dotsc,k$, must be imposed;
and this is precisely the integrability condition of the PDE
\eqref{eq:pde-multi} (see \cite{Lee_12}).

%%%%%%%%%%%%%%%%%%%%%%%%%%%%%%%%%%%%%%%%%%%%%%%%%%%%%%%%
\addcontentsline{toc}{section}{Acknowledgements}
\section*{Acknowledgements}

We acknowledge the financial support of the 
{\sl Ministerio de Ciencia, Innovaci\'on y Universidades} (Spain), projects PID2021-125515NB-C21, and RED2022-134301-T of AEI,
and Ministry of Research and Universities of
the Catalan Government, project 2021 SGR 00603 \textsl{Geometry of Manifolds and Applications, GEOMVAP}.

This work is dedicated to the memory of Prof. Miguel C. Muñoz-Lecanda, who actively contributed to the creation and development of the geometric structures presented herein.

% \section*{Conflict of interest}
% On behalf of all authors, the corresponding author states that there is no conflict of interest.

%%%%%%%%%%%%%%%%%%%%%%%%%%%%%%%%%%%%%%%%%%%%%%%%%%%%%%%%
\bibliographystyle{abbrv}

{\small 
\bibliography{references.bib}
}

\end{document}